\documentclass[12pt]{article}
\usepackage{amsmath,amssymb,theorem,cite,epsfig,url,psfrag,eepic}
\topmargin 0pt
\advance \topmargin by -\headheight
\advance \topmargin by -\headsep     
\textheight 9.2in     
\oddsidemargin -25pt
\evensidemargin \oddsidemargin
\marginparwidth 0.5in     
\textwidth 7.3in
\def\Maketitle{{\def\newpage{}\maketitle}}
\makeatletter
\def\Appendix{\appendix
  \def\@seccntformat##1{Appendix~\csname the##1\endcsname.~~}}
\makeatother
\makeatletter
\@addtoreset{equation}{section}

\makeatother
\theorembodyfont{\sffamily}
\newtheorem{theorem}{Theorem}[section]

\newtheorem{proposition}[theorem]{Proposition}

\newenvironment{proof}[1][Proof]{\begin{trivlist}
\item[\hskip \labelsep {\bfseries #1}]}{\end{trivlist}}
\def\jac{\mathop{\rm Jac}\nolimits}
\begin{document}
\title{\textbf{Parafermionic polynomials,\\ Selberg integrals and three-point correlation\\ function in parafermionic Liouville field theory}\vspace*{12pt}}
\vspace*{12pt}
\author{M.~A.~Bershtein$^{1,2}$, V.~A.~Fateev$^{1,3}$ and A.~V.~Litvinov$^{1,4}$\\[\medskipamount]
$^1$~\parbox[t]{0.85\textwidth}{\normalsize\it\raggedright
Landau Institute for Theoretical Physics,
142432 Chernogolovka, Russia}\vspace*{2pt}\\[\medskipamount]
$^2$~\parbox[t]{0.85\textwidth}{\normalsize\it\raggedright
Independent University of Moscow, 11 Bolshoy Vlasyevsky pereulok,\\ 119002 Moscow, Russia}\vspace*{5pt}\\[\medskipamount]
$^3$~\parbox[t]{0.85\textwidth}{\normalsize\it\raggedright
Laboratoire de Physique Th\'eorique et Astroparticules, UMR5207 CNRS-UM2, Universit\'e
Montpellier~II, Pl.~E.~Bataillon, 34095 Montpellier, France}\vspace*{5pt}\\[\medskipamount]
$^4$~\parbox[t]{0.85\textwidth}{\normalsize\it\raggedright
~NHETC, Department of Physics and Astronomy, Rutgers University,\\ 136 Frelinghuysen Road, Piscataway, NJ 08855-0849, U.S.A.}}
\rightline{RUNHETC-2010-2}
\date{}
\Maketitle
\begin{abstract}
 In this paper we consider parafermionic Liouville field theory. We study integral representations of three-point correlation  functions and develop a method allowing us to compute them exactly. In particular, we evaluate the generalization of Selberg integral obtained by insertion of parafermionic polynomial. Our result is justified by different approach based on dual representation of parafermionic Liouville field theory described by three-exponential model.
\end{abstract}
\tableofcontents
\section{Introduction}\label{Introduction}
Non-rational conformal field theories attract a lot of attention in contemporary theoretical physics since Polyakov's suggestion to study strings in non-critical dimension \cite{Polyakov:1981rd}. In Polyakov's approach metric on a surface of the string due to conformal anomaly become dynamical and its behavior is described by Liouville field theory  which is the most well known and widely studied in the literature example of non-rational CFT \cite{Dorn:1992at,Dorn:1994xn,Zamolodchikov:1995aa,Teschner:2001rv}. An important question which one usually asks about any CFT is the complete set of correlation functions of local fields. The conformal symmetry of the theory plus the hypothesis about the operator algebra help to answer this question very much \cite{Belavin:1984vu}. In particular, the problem can be reduced to the problem of studying of the operator algebra of so called primary fields. In \cite{Belavin:1984vu} the set of conformal theories so called minimal models has been constructed. The main property of these theories is a  finite number of primary fields. Contrary, the main ``disadvantage'' of non-rational CFT -- infinite (even uncountable) number of primary fields was probably the main reason why the progress in this direction has been postponed. In \cite{Dorn:1992at,Dorn:1994xn,Zamolodchikov:1995aa,Teschner:2003en} the three-point correlation functions or equivalently the structure constants of the operator algebra of primary fields where found exactly (DOZZ formula). Conceptually, this important result can be viewed as a complete solution of Liouville field theory because it allows to find (in principle) any multipoint correlation function.  Another piece of information about the theory which is needed for the construction of all correlation functions is encoded in the conformal blocks which are objects defined only by the conformal symmetry. We note that the conformal blocks with general external and intermediate conformal dimensions appear essentially only in non-rational CFT. Despite the fact that there is a straightforward algebraic procedure allowing  to construct the conformal blocks their explicit determination is a tedious problem which become actual again since the discovery of the connection between non-rational CFT and four-dimensional $\mathcal{N}=2$ supersymmetric gauge theories in recent paper of Alday, Gaiotto and Tachikawa \cite{Alday:2009aq} (AGT relation). Probably the best way to compute the expansion of the conformal blocks is the recursion formula suggested by Alyosha Zamolodchikov \cite{Zamolodchikov:1985ie} (we note that AGT relation also gives very effective algorithm allowing to compute the conformal blocks).

So far, only Liouville field theory as well as its $\mathfrak{sl}(n)$ generalizations called Toda field theories   \cite{Fateev:2007ab,Fateev:2008bm,Wyllard:2009hg} were studied in the context of AGT relation. However, we still have many examples of non-rational CFT's whose role in gauge theory--non-rational CFT duality is unclear at present. One possible candidate is the supersymmetric generalization of Liouville field theory  which appears essentially in  Polyakov's approach applied to superstring in non-critical dimension \cite{Polyakov:1981re}. This theory is now studied in the same extent as ordinary bosonic LFT.  In particular, the supersymmetric generalization of DOZZ formula is known \cite{Rashkov:1996jx,Poghosian:1996dw} as well as the recursion formula for the conformal blocks \cite{Belavin:2007gz,Hadasz:2007nt,Hadasz:2008dt}. Possible generalization of superstring theory called fractional superstring theory was suggested in \cite{Argyres:1990aq} which is roughly speaking based on idea to substitute ordinary fermions by $Z_{N}$ parafermions \cite{Fateev:1985mm}. 

Theories with parafermionic symmetry were studied in different contexts (see papers citing \cite{Fateev:1985mm})  and their massive deformations in \cite{Koberle:1979sg,Fateev:1990bf,Fateev:2006js,Fateev:2008zz,Fateev:2009kp}. We give an overview of $Z_{N}$ parafermionic CFT in appendix \ref{PF}. Probably, the most promising application of this CFT appears in studying of non-Abelian Fractional Quantum Hall Effect \cite{Read:1998ed}. So called Read-Rezayi state constructed in \cite{Read:1998ed} gives trial ground state wavefunction  for the system of charged particles on a sphere of certain radius in a uniform radial magnetic field at Landau level filling $N/(2+NM)$ where $M$ is an integer number. Up to some standard factors the $(N,2)$ Read-Rezayi state is proportional to the multipoint correlation function of parafermionic currents whose non-trivial part is given by  a symmetric homogeneous polynomial vanishing when any $N+1$ particles collide. We refer interested reader to recent papers \cite{bernevig-2008-101,bernevig-2009-102,estienne-2010-824,Estienne:2009mp,Estienne:2010as} for more details and list of all relevant references. In this paper we study these and even more general polynomials which can be identified due to \cite{Estienne:2010as} with three quasihole excitations over the quantum Hall ground state in great details. In particular, we give an explicit representation for them in terms of symmetrization of certain basic ``monomials'' and compute exactly generalized Selberg integral involving these polynomials (see section \ref{integrals}). We also compute the generalization of Mehta integral involving parafermionic polynomial.

Fractional superstring theory was investigated in number of papers \cite{Argyres:1992hc,Argyres:1993wy,Argyres:1993hz,Tye:1993sz,Argyres:1993ar} and recently in \cite{Irie:2009yb,Chan:2010yg} in the context of matrix models. However, many important questions about fractional superstings are still unanswered. One of them is the formulation of non-critical fractional superstring theory within Polyakov's approach similar to  \cite{Polyakov:1981rd,Polyakov:1981re}. This question requires deeper understanding of the nature of parafermionic symmetry, but it is clear that $Z_{N}$ parafermionic analog of Liouville field theory plays here an important role.  Parafermionic Liouville field theory is defined by two-dimensional  Lagrangian density \cite{Fateev:1996ea}
\begin{equation}\label{Lagrangian}
      \mathcal{L}=\mathcal{L}_{\text{PF}}+\frac{N}{4\pi}\left(\partial_a\varphi\right)^2-\mu N\,\Psi\bar{\Psi}e^{2b\varphi},
\end{equation}
where we denote schematically the Lagrangian of $Z_{N}$ parafermionic CFT by $\mathcal{L}_{\text{PF}}$ \cite{Fateev:1985mm} (see also appendix \ref{PF}), $\varphi$ is bosonic field and $\Psi$ is the first parafermionic current. Parameter $\mu$ in \eqref{Lagrangian} is called cosmological constant, while $b$ is coupling constant. Last term in \eqref{Lagrangian} can be viewed as the simplest possible marginal deformation of two theories: free bosonic theory and $Z_{N}$ parafermionic CFT. For $N=1$ the resulting theory coincides with ordinary Liouville field theory \cite{Zamolodchikov:1995aa,Teschner:2001rv} and for $N=2$ with supersymmetric Liouville field theory \cite{Rashkov:1996jx,Poghosian:1996dw,Belavin:2007gz}. For arbitrary values of the parameter $\mu$ the Lagrangian \eqref{Lagrangian}  defines  CFT with central charge
\begin{equation}\label{central-charge}
     c=\frac{3N}{N+2}+\frac{6Q^2}{N},\quad\text{where}\quad Q=b+\frac{1}{b}.
\end{equation}
The symmetry algebra of the theory is actually wider and generated by two holomorphic currents of spin two (stress energy tensor) and non-local current of spin $\frac{N+4}{N+2}$ for $N>1$. For $N=2$ this algebra coincides with Neveu-Shwarz algebra while for $N>2$ it has more complicated structure\footnote{For $N=4$ this algebra can be realized as spin $4/3$ parafermionic algebra \cite{Fateev:1985ig} (see \cite{Argyres:1993hz,Argyres:1993ar}).}. For our purposes we found inconvenient to use standard ``bootstrap'' machinery  based on using of Ward identities an so on \cite{Belavin:1984vu} and prefer to consider the theory \eqref{Lagrangian} as a deformation of the tensor product of two theories free bosonic field theory and $Z_{N}$ parafermionic CFT. It means that we will consider all correlation functions in our model only if the total charge takes special values and they can be represented by finite-dimensional integral of the product of correlation functions in free bosonic theory and parafermionic CFT (see section \ref{3-point} for details). The expression for general values of the total charge can be obtained by means of  ``analytical'' continuation (see section \ref{3-point}). This approach is not mathematically rigorous and the obtained result has to be checked by independent calculations. In section \ref{3-exp-model} we consider the theory \eqref{Lagrangian}  using the dual representation by the three exponential   model which is introduced there and develop a method allowing to study correlation functions without any restrictions on the total charge.

The primary fields in the theory \eqref{Lagrangian} are given by the exponential fields 
\begin{equation}\label{primary}
    V^{(k)}_{\alpha}=\sigma_k e^{2\alpha\varphi},
\end{equation}
where $\sigma_{k}$ are the primary fields in parafermionic CFT  also called order fields (see appendix \ref{PF}). The fields \eqref{primary} are the spinless fields with conformal dimensions
\begin{equation*}
      \Delta^{(k)}(\alpha)=\bar{\Delta}^{(k)}(\alpha)=\frac{k(N-k)}{2N(N+2)}+\frac{\alpha(Q-\alpha)}{N}.
\end{equation*}
In this paper we consider in details the three-point correlation functions of the primary fields \eqref{primary} in parafermionic LFT and obtain generalization of DOZZ formula  \cite{Dorn:1992at,Dorn:1994xn,Zamolodchikov:1995aa} in this case. The paper organized as follows. In section \ref{3-point} we formulate the problem and propose an answer for the three-point correlation function of exponential fields in our model.  We also describe general method used in this paper which leads to the problem of computation of some multi-dimensional integrals analogous to those considered in \cite{Dotsenko:1984ad} but with insertion of certain polynomials (we call them parafermionic polynomials). In section \ref{integrals} (which can be considered as a main novel part of our paper) we compute these integrals using the method adopted from the seminal paper by Selberg \cite{Selberg}. In section \ref{3-exp-model} we consider dual representation of parafermionic LFT given by three exponential model and obtain an analytic expression for the three-point correlation function  which agrees with the result obtained in section \ref{integrals}. Section \ref{3-exp-model} can be considered as an independent check of our results. In this part of the paper we use the notations historically related with the three-exponential model and they  intersect with the notations used in the rest of the text (the same is for appendix \ref{MSP-approximation}). We hope that this fact will not confuse the reader. In section \ref{conclusions} we consider some important questions which were not discussed in the main body of the paper.  In appendix \ref{PF} we give a review of parafermionic CFT \cite{Fateev:1985mm}. In appendices  \ref{conjecture} and \ref{Selberg-prop-proof} we give proofs of two important propositions essentially used in section \ref{integrals}. In appendix \ref{MSP-approximation} we consider minisuperspace approximation for the three-exponential CFT and discuss its sigma-model representation.
\section{Three-point correlation function in parafermionic LFT}\label{3-point}
Our goal is to find the three-point correlation function of the fields \eqref{primary}
\begin{equation}\label{C-def}
    C^{k_1,k_2,k_3}(\alpha_1,\alpha_2,\alpha_3)\overset{\text{def}}{=}
    \langle V^{(k_1)}_{\alpha_1}(0)V^{(k_2)}_{\alpha_2}(1)V^{(k_3)}_{\alpha_3}(\infty)\rangle.
\end{equation}
One can show (see below) that due to the $Z_N$ symmetry of parafermionic CFT the three-point correlation function \eqref{C-def} is non-zero only if one of the following conditions is satisfied\footnote{We note that for even $N$ both conditions \eqref{k-def} can be satisfied at the same time. We discuss this phenomenon below.}
\begin{subequations}\label{k-def}
\begin{align}\label{k-def-1}
     &k_1+k_2+k_3=2k, \quad 0\leq k_j\leq k\leq N,\\
     \intertext{or}\label{k-def-2}
     &k_1+k_2+k_3=N+2k, \quad 0\leq k\leq k_j\leq N.
\end{align}
\end{subequations}
In this paper we give an analytic expression for the three-point correlation function \eqref{C-def}. In order to present an answer it is convenient to define functions
\begin{subequations}\label{both-C}
\begin{multline}\label{C}
    \mathbb{C}^{k_1,k_2,k_3}(\alpha_1,\alpha_2,\alpha_3)=
    \left[\pi\mu\gamma\left(\frac{bQ}{N}\right)b^{-\frac{2bQ}{N}}\right]^{\frac{Q-\alpha}{b}}
    \rho(k_{1},k_{2},k_{3})\times\\\times
    \frac{\Upsilon_{0}^{(N)}(b)\Upsilon_{k_1}^{(N)}(2\alpha_1)\Upsilon_{k_2}^{(N)}(2\alpha_2)\Upsilon_{k_3}^{(N)}(2\alpha_3)}
    {\Upsilon_{k}^{(N)}(\alpha_1+\alpha_2+\alpha_3-Q)\Upsilon_{k-k_1}^{(N)}(\alpha_2+\alpha_3-\alpha_1)
    \Upsilon_{k-k_2}^{(N)}(\alpha_1+\alpha_3-\alpha_2)\Upsilon_{k-k_3}^{(N)}(\alpha_1+\alpha_2-\alpha_3)},
\end{multline}
and 
\begin{multline}\label{C-dual}
    \tilde{\mathbb{C}}^{k_1,k_2,k_3}(\alpha_1,\alpha_2,\alpha_3)=
    \left[\pi\mu\gamma\left(\frac{bQ}{N}\right)b^{-\frac{2bQ}{N}}\right]^{\frac{Q-\alpha}{b}}
    \tilde{\rho}(k_{1},k_{2},k_{3})\times\\\times
    \frac{\Upsilon_{0}^{(N)}(b)\Upsilon_{k_1}^{(N)}(2\alpha_1)\Upsilon_{k_2}^{(N)}(2\alpha_2)\Upsilon_{k_3}^{(N)}(2\alpha_3)}
    {\Upsilon_{k}^{(N)}(\alpha_1+\alpha_2+\alpha_3-Q)\Upsilon_{N+k-k_1}^{(N)}(\alpha_2+\alpha_3-\alpha_1)
    \Upsilon_{N+k-k_2}^{(N)}(\alpha_1+\alpha_3-\alpha_2)\Upsilon_{N+k-k_3}^{(N)}(\alpha_1+\alpha_2-\alpha_3)},
\end{multline}
\end{subequations}
where $\alpha=\alpha_{1}+\alpha_{2}+\alpha_{3}$ and integer parameters $k_{j}$ and $k$ are supposed to be related by \eqref{k-def-1} for the function \eqref{C} and by \eqref{k-def-2} for the function \eqref{C-dual}. The factors $\rho(k_{1},k_{2},k_{3})$ and $\tilde{\rho}(k_{1},k_{2},k_{3})$ in \eqref{C} and \eqref{C-dual}  are functions independent on parameters $\alpha$ and $b$. They  are explicitly  given by
\begin{equation}\label{rho}
  \begin{aligned}
     &\rho(k_{1},k_{2},k_{3})=
     \left[\gamma\left(\frac{1}{N+2}\right)\prod_{j=1}^{3}\gamma\left(\frac{N-k_{j}+1}{N+2}\right)\right]^{\frac{1}{2}}
     \frac{H(N-k)H(k-k_{1})H(k-k_{2})H(k-k_{3})}{H(k_{1})H(k_{2})H(k_{3})},\\
     &\tilde{\rho}(k_{1},k_{2},k_{3})=
     \left[\gamma\left(\frac{1}{N+2}\right)\prod_{j=1}^{3}\gamma\left(\frac{N-k_{j}+1}{N+2}\right)\right]^{\frac{1}{2}}
     \frac{H(k)H(k_{1}-k)H(k_{2}-k)H(k_{3}-k)}{H(k_{1})H(k_{2})H(k_{3})},
  \end{aligned}
\end{equation}
where
\begin{equation*}
      H(k)=\prod_{j=1}^{k}\gamma\left(\frac{j}{N+2}\right).
\end{equation*}
Throughout this paper we use the following standard  notation
\begin{equation}
    \gamma(x)=\frac{\Gamma(x)}{\Gamma(1-x)}.
\end{equation}
In eqs \eqref{both-C} function $\Upsilon_k^{(N)}(x)=\Upsilon_k^{(N)}(x,b)$ is  defined as\footnote{Usually the $b$ dependence of the function $\Upsilon_k^{(N)}(x,b)$ is hidden for shortness.}
\begin{equation}\label{Upsilon-N-def}
        \Upsilon_k^{(N)}(x)=
        \prod_{j=1}^{N-k}\Upsilon\left(\frac{x+kb^{-1}+(j-1)Q}{N}\right)
        \prod_{j=N-k+1}^{N}\Upsilon\left(\frac{x+(k-N)b^{-1}+(j-1)Q}{N}\right),
\end{equation}
where $\Upsilon(x)$ is usual $\Upsilon$-function defined in \cite{Zamolodchikov:1995aa}. It is convenient to supply definition \eqref{Upsilon-N-def} with
\begin{equation*}
     \Upsilon_{N+k}^{(N)}(x)=\Upsilon_k^{(N)}(x),
\end{equation*}
which is self-consistent for $k=0$ as one can readily see from \eqref{Upsilon-N-def}. Function $\Upsilon_k^{(N)}(x)$ obeys several important properties (which follow of course from analogous properties \cite{Zamolodchikov:1995aa} for $\Upsilon$ function).
\begin{subequations}
\paragraph{Reflection:}
\begin{equation}
      \Upsilon_k^{(N)}(Q-x)=\Upsilon_{N-k}^{(N)}(x). 
\end{equation}
\paragraph{Shift:}
\begin{equation}\label{shift}
   \begin{aligned}
       &\Upsilon_k^{(N)}(x+b)=b^{1-\frac{2(bx+k-1)}{N}}\gamma\left(\frac{bx+k-1}{N}\right)\Upsilon_{k-1}^{(N)}(x),\\
       &\Upsilon_k^{(N)}(x+b^{-1})=b^{\frac{2(b^{-1}x+N-k-1)}{N}-1}\gamma\left(\frac{b^{-1}x+N-k-1}{N}\right)
       \Upsilon_{k+1}^{(N)}(x).
   \end{aligned}
\end{equation}
\paragraph{Duality:}
\begin{equation}\label{Upsilon-duality}
        \Upsilon_k^{(N)}(x,b)=\Upsilon_{N-k}^{(N)}(x,b^{-1}).
\end{equation}
\end{subequations}
We note that due to the relation \eqref{shift} all function $\Upsilon_k^{(N)}(x)$ can be obtained from the function $\Upsilon_0^{(N)}(x)$ given by integral representation
\begin{equation}
\log\Upsilon_{0}^{(N)}(x)=\int_{0}^{\infty}
\left(\left(\frac{Q}{2}-x\right)^{2}e^{-2t} -\frac{\sinh Qt\sinh^{2}(\frac{1}{N}\left(\frac{Q}{2}-x\right)t)}{\sinh\frac{Q}{N}t\sinh bt\sinh b^{-1}t}\right)
\frac{dt}{t}.
\end{equation}

In this paper we propose an exact expression for the three-point function \eqref{C-def}. The cases of odd and even $N$ should be considered separately. If $N$ is an odd number then
\begin{equation}\label{C-prop-odd}
   C^{k_1,k_2,k_3}(\alpha_1,\alpha_2,\alpha_3)=
   \begin{cases}
      \mathbb{C}^{k_1,k_2,k_3}(\alpha_1,\alpha_2,\alpha_3)\quad\text{if condition \eqref{k-def-1} is true},\\
      \tilde{\mathbb{C}}^{k_1,k_2,k_3}(\alpha_1,\alpha_2,\alpha_3)\quad\text{if condition \eqref{k-def-2} is true},\\
      0\quad\text{otherwise.}
   \end{cases}
\end{equation}
If $N$ is even then both conditions \eqref{k-def-1} and \eqref{k-def-2} can be satisfied at the same time (of course with different $k$ and $k'$). This is possible only if
$k_{1}+k_{2}+k_{3}=2k\geq N$. In this case our proposal reads as
\begin{multline}\label{C-prop-even}
   C^{k_1,k_2,k_3}(\alpha_1,\alpha_2,\alpha_3)=\\=
   \begin{cases}
      \mathbb{C}^{k_1,k_2,k_3}(\alpha_1,\alpha_2,\alpha_3)\quad\text{if condition \eqref{k-def-1} is true and condition \eqref{k-def-2} is wrong},\\
      \tilde{\mathbb{C}}^{k_1,k_2,k_3}(\alpha_1,\alpha_2,\alpha_3)\quad\text{if condition \eqref{k-def-2} is true and condition \eqref{k-def-1} is wrong },\\
      \mathbb{C}^{k_1,k_2,k_3}(\alpha_1,\alpha_2,\alpha_3)+\tilde{\mathbb{C}}^{k_1,k_2,k_3}(\alpha_1,\alpha_2,\alpha_3)
      \quad\text{if both conditions \eqref{k-def-1} and \eqref{k-def-2} are true},\\
      0\quad\text{otherwise.}
   \end{cases}
\end{multline}
Functions $\mathbb{C}^{k_1,k_2,k_3}(\alpha_1,\alpha_2,\alpha_3)$ and $\tilde{\mathbb{C}}^{k_1,k_2,k_3}(\alpha_1,\alpha_2,\alpha_3)$ in the equations above are defined by \eqref{C} and \eqref{C-dual} respectively. We emphasize that in the definition of $\mathbb{C}^{k_1,k_2,k_3}(\alpha_1,\alpha_2,\alpha_3)$ parameter $k$ is defined by
\begin{equation*}
   k_1+k_2+k_3=2k,
\end{equation*} 
while in the definition of $\tilde{\mathbb{C}}^{k_1,k_2,k_3}(\alpha_1,\alpha_2,\alpha_3)$ by
\begin{equation*}
   k_1+k_2+k_3=N+2k.
\end{equation*} 
In particular, in the third line in the r.h.s in  \eqref{C-prop-even} $\mathbb{C}^{k_1,k_2,k_3}(\alpha_1,\alpha_2,\alpha_3)$ is supposed to depend on parameter $k=(k_{1}+k_{2}+k_{3})/2$ while  $\tilde{\mathbb{C}}^{k_1,k_2,k_3}(\alpha_1,\alpha_2,\alpha_3)$ depends on parameter $k'=k-\frac{N}{2}$. We note that for $N=1$ and $N=2$ our result reproduces known exact expressions for the three-point correlation functions in bosonic Liouville field theory \cite{Zamolodchikov:1995aa,Dorn:1992at,Dorn:1994xn,Teschner:1995yf,Teschner:2003en} and in supersymmetric Liouville field theory \cite{Rashkov:1996jx,Poghosian:1996dw}.  One important comment is needed at this point. Throughout this paper we assume that holomorphic and antiholomorphic parafermionic currents commute with each other which is actually not the case. The locality of the theory requires
\begin{equation}\label{non-locality}
    \Psi(z)\bar{\Psi}(\bar{z}')=\omega\,\bar{\Psi}(\bar{z}')\Psi(z),
\end{equation}
with $\omega=e^{2i\pi/N}$. The commutation rule \eqref{non-locality} gives some phases  which can be easily restored in our final answer \eqref{C-prop-odd} and \eqref{C-prop-even}.  We prefer to hide these phases and put $\omega=1$ for simplicity.

Proposed exact expressions \eqref{C-prop-odd} and \eqref{C-prop-even} posses a symmetry
\begin{equation}\label{theory-duality}
   b\rightarrow\frac{1}{b},\quad
   k_{j}\rightarrow N-k_{j},\quad
   \mu\rightarrow\tilde{\mu},
\end{equation} 
where cosmological constants $\mu$ and $\tilde{\mu}$ are related via
\begin{equation*}
   \left[\pi\mu\gamma\left(\frac{bQ}{N}\right)b^{-\frac{2bQ}{N}}\right]^{\frac{1}{b}}=
   \left[\pi\tilde{\mu}\gamma\left(\frac{b^{-1}Q}{N}\right)b^{\frac{2b^{-1}Q}{N}}\right]^{b}.
\end{equation*}
Transformation \eqref{theory-duality} reflects the duality of parafermionic Liouville theory \eqref{Lagrangian}. Namely, two theories one defined by Lagrangian \eqref{Lagrangian} and another defined by   
\begin{equation}\label{Lagrangian-dual}
      \mathcal{L}=\mathcal{L}_{\text{PF}}+\frac{N}{4\pi}\left(\partial_a\varphi\right)^2-\tilde{\mu}N\,\Psi^{+}\bar{\Psi}^{+}e^{2b^{-1}\varphi},
\end{equation}
are equivalent. Here $\Psi^{+}=\Psi_{N-1}$ is the conjugated parafermionic current (see appendix \ref{PF}). In particular, it follows from this equivalence that parafermionic LFT is unitary at $b=1$.

The quantum field theory defined by the Lagrangian density \eqref{Lagrangian} is interacting field theory for $\mu\neq0$. However, as was noticed by Goulian and Li \cite{Goulian:1990qr} in some cases it can be studied by free-field methods. By definition, the three-point function \eqref{C-def} is given by the path integral\footnote{Equation \eqref{C-fun-int} needs some comments. First of all the integration over $[d\Psi]$ is rather schematic because there is no known Lagrangian description of parafermionic CFT for $N>2$. Second, using projective invariance one can always fix the position of any three conformal fields in arbitrary positions \cite{Belavin:1984vu} (we fix them as usual at $0$, $1$ and $\infty$.)}
\begin{equation}\label{C-fun-int}
    C^{k_1,k_2,k_3}(\alpha_1,\alpha_2,\alpha_3)=
    \int  V^{(k_{1})}_{\alpha_{1}}(0)V^{(k_{2})}_{\alpha_{2}}(1)V^{(k_{3})}_{\alpha_{3}}(\infty)e^{-S[\varphi,\Psi]}[d\varphi][d\Psi]
\end{equation}
One can perform integration over zero mode $\varphi_{0}$ of the field $\varphi=\varphi_{0}+\tilde{\varphi}$ where field  $\tilde{\varphi}$ has no zero Fourier component
\begin{equation}\label{C-fun-int-pole}
   C^{k_1,k_2,k_3}(\alpha_1,\alpha_2,\alpha_3)=\frac{\Gamma(-s)}{2b}
    \int  \tilde{V}^{(k_{1})}_{\alpha_{1}}(0)\tilde{V}^{(k_{2})}_{\alpha_{2}}(1)\tilde{V}^{(k_{3})}_{\alpha_{3}}(\infty)
    \left(-\mu N\int e^{2b\tilde{\varphi}}\Psi\bar{\Psi}d^{2}z\right)^{s}e^{-S_{0}[\tilde{\varphi},\Psi]}[d\tilde{\varphi}][d\Psi],
\end{equation}
where 
\addtocounter{equation}{-1}
\begin{subequations}
\begin{equation}\label{Q-anomaly}
   s=\frac{Q-\alpha}{b},\quad \alpha=\alpha_{1}+\alpha_{2}+\alpha_{3}
\end{equation}
\end{subequations}
and $S_{0}[\tilde{\varphi},\Psi]$ is the action without exponential term  i.e. at $\mu=0$\footnote{The term $Q$ in equation \eqref{Q-anomaly} comes as usual \cite{Zamolodchikov:1995aa} from the curvature term which is missing in \eqref{Lagrangian} for shortness.}. In \eqref{C-fun-int-pole} we use the notation 
$\tilde{V}^{(k)}_{\alpha}=\sigma_{k}e^{2\alpha\tilde{\varphi}}$. According to \eqref{C-fun-int-pole}  the three-point correlation function $C^{k_1,k_2,k_3}(\alpha_1,\alpha_2,\alpha_3)$ has a pole at $s=n$ (i.e. at $\alpha-Q=-nb$) with the residue
\begin{multline}\label{C-fun-int-pole-free-field}
    \text{Res}\Bigl|_{\alpha-Q=-nb}\,C^{k_1,k_2,k_3}(\alpha_1,\alpha_2,\alpha_3)
    =\frac{(\mu N)^{n}}{n!}\times\\\times
    \int\langle\tilde{V}^{(k_{1})}_{\alpha_{1}}(0)\tilde{V}^{(k_{2})}_{\alpha_{2}}(1)\tilde{V}^{(k_{3})}_{\alpha_{3}}(\infty)
     \tilde{V}_{b}^{(0)}(z_{1},\bar{z_{1}})\Psi(z_{1})\bar{\Psi}(\bar{z_{1}})\dots \tilde{V}_{b}^{(0)}(z_{n},\bar{z_{n}})\Psi(z_{n})\bar{\Psi}(\bar{z_{n}})\rangle
     d^{2}z_{1}\dots d^{2}z_{n}
\end{multline}
and $\langle\dots\rangle$ denotes the path integral with the action without exponential term. The integration $\int d^{2}z_{j}$ in \eqref{C-fun-int-pole-free-field} goes over two-dimensional plane. For future convenience we represent parameter $n$ in \eqref{C-fun-int-pole-free-field} as $n=mN-k$, where $m$ and $k$ are arbitrary integers restricted by $m\geq0$ and $0\leq k<N$, i.e. we define the parameter $k$ as a deficit of the parameter $n$ to be divided by $N$. We note that correlation function in the r.h.s. in \eqref{C-fun-int-pole-free-field} essentially factorizes into bosonic and parafermionic parts. The bosonic correlation function in \eqref{C-fun-int-pole-free-field} can be evaluated using Wick rules  with the pairing
\begin{equation*}
  \varphi(z,\bar{z})\varphi(0,0)=-\frac{1}{N}\log|z|+\dots
\end{equation*}
so that \eqref{C-fun-int-pole-free-field} can be rewritten as
\begin{multline}\label{C-2D-Selberg-int}
       \text{Res}\Bigl|_{\alpha-Q=-(mN-k)b}\,C^{k_1,k_2,k_3}(\alpha_1,\alpha_2,\alpha_3)=
       \frac{(\mu N)^{mN-k}}{(mN-k)!}\times\\\times
       \int
       \prod_{i=1}^{mN-k}|z_i|^{2A}|z_i-1|^{2B}\prod_{i<j=1}^{mN-k}|z_i-z_j|^{4g}\;
       \Bigl|\mathcal{P}_{m}^{(N|\,k_1,k_2,k_3)}(z_1,\dots,z_{mN-k})\Bigr|^{2}
       d^{2}z_1\dots d^{2}z_{mN-k}, 
\end{multline}
where 
\begin{equation}\label{ABg-alpha-relation}
   A=-\frac{(2b\alpha_{1}+k_{1})}{N},\quad B=-\frac{(2b\alpha_{2}+k_{2})}{N}\quad\text{and}\quad
   g=-\frac{bQ}{N}.
\end{equation}
Function $\mathcal{P}_{m}^{(N|\,k_1,k_2,k_3)}(z_1,\dots,z_{mN-k})$ \eqref{C-2D-Selberg-int} which is actually a polynomial (see below) is defined through the correlation function in parafermionic CFT as
\begin{multline}\label{P-as-corr}
    \Bigl|\mathcal{P}_{m}^{(N|\,k_1,k_2,k_3)}(z_1,\dots,z_{mN-k})\Bigr|^{2}=
    \prod_{i=1}^{mN-k}|z_i|^{\frac{2k_{1}}{N}}|z_i-1|^{\frac{2k_{2}}{N}}
    \prod_{i<j=1}^{mN-k}|z_i-z_j|^{\frac{4}{N}}\times\\\times
    \langle\sigma_{k_{1}}(0)\sigma_{k_{2}}(1)\sigma_{k_{3}}(\infty)
    \Psi(z_{1})\bar{\Psi}(\bar{z}_{1})\dots \Psi(z_{mN-k})\bar{\Psi}(\bar{z}_{mN-k})\rangle.
\end{multline}
One can easily show (see appendix \ref{PF}) that this correlation function is non-zero only if the parameters $k$ and $k_{j}$ are related via one of the conditions \eqref{k-def}. Below we will consider in details the case  when the parameters $k$ and $k_{j}$ are related as in \eqref{k-def-1} with the case  of the relation \eqref{k-def-2} being done in complete analogy (we present some relevant formulae for this case in the end of appendix \ref{PF}). We also denote throughout this paper
\begin{equation}\label{n}
  n=mN-k,
\end{equation}
so in any expressions below parameter $n$ is always given by \eqref{n} (except section \ref{3-exp-model} and appendix \ref{MSP-approximation} which have their own notations). Under above assumptions we left with the problem of computation of the Coulomb integral
\begin{multline}\label{2D-Selberg-int}
       \mathcal{G}_{m}^{(N|\,k_1,k_2,k_3)}(A,B|g)=\frac{(\mu N)^{n}}{n!}\times\\\times
       \int
       \prod_{i=1}^{n}|z_i|^{2A}|z_i-1|^{2B}\prod_{i<j=1}^{n}|z_i-z_j|^{4g}\;
       \Bigl| \mathcal{P}_{m}^{(N|\,k_1,k_2,k_3)}(z_1,\dots,z_{n})\Bigr|^{2}
       d^{2}z_1\dots d^{2}z_{n}, 
\end{multline}
which can be viewed as a generalization of the similar Coulomb integrals studied in \cite{Dotsenko:1984ad, Dotsenko:1984nm} in the case of bosonic theory ($N=1$ in our case) and in \cite{AlvarezGaume:1991bj} in the case of supersymmetric theory ($N=2$).   It is clear that in some domain of the parameters $A$, $B$ and $g$ this integral is convergent. If one succeeded  and computed the integral \eqref{2D-Selberg-int} (which is usually given in terms of product of $\Gamma$-functions) then one can formally continue it to arbitrary values of the parameter $s=(Q-\alpha)/b$ (see below). This method was used in number of papers \cite{Zamolodchikov:1995aa,Rashkov:1996jx,Fateev:2007qn,Fateev:2007ab,Fateev:2008bm,Fateev:2007tt} and in all cases it gives correct expression for the three-point correlation function (it is usually confirmed by other methods like semiclassical approximation, bootstrap approach etc).
\section{Parafermionic polynomials and Selberg integrals}\label{integrals}
The main subject of this section is to compute the integral \eqref{2D-Selberg-int}. As a warm-up exercise we consider contour version of the integral  \eqref{2D-Selberg-int}
\begin{equation}\label{Selberg-integral}
     \mathbf{J}_{m}^{(N|\,k_1,k_2,k_3)}(A,B|g)=\int\limits_0^1\dots\int\limits_0^1
       \prod_{i=1}^{n}t_i^{A}(1-t_i)^{B}\prod_{i<j=1}^{n}|t_i-t_j|^{2g}\;
       \mathbf{P}_{m}^{(N|\,k_1,k_2,k_3)}(t_1,\dots,t_{n})
       dt_1\dots dt_{n}, 
\end{equation}
where integers $k_1$, $k_2$, $k_3$ and $k$ are restricted by the condition
\begin{equation}\label{cond}
    k_1+k_2+k_3=2k, \quad 0\leq k_j\leq k\leq N,
\end{equation}
and $n=mN-k$. Polynomial $\mathbf{P}_{m}^{(N|\,k_1,k_2,k_3)}(t_1,\dots,t_{n})$ is symmetric polynomial defined through the more general polynomial $\mathbf{P}_{m}^{(N|\,k_1,k_2,k_3)}(x_1,x_2,x_3|t_1,\dots,t_{n})$ as
\begin{equation}\label{pol-pol-def}
   \mathbf{P}_{m}^{(N|\,k_1,k_2,k_3)}(t_1,\dots,t_{n})=(-1)^{(m-1)k_1}
   \lim_{x_3\rightarrow\infty}x_3^{-(m-1)k_3}\,\mathbf{P}_{m}^{(N|\,k_1,k_2,k_3)}(0,1,x_3|t_1,\dots,t_{n}).
\end{equation}
We stress that the polynomial $\mathbf{P}_{m}^{(N|\,k_1,k_2,k_3)}(t_1,\dots,t_{n})$ differs from  the polynomial $\mathcal{P}_{m}^{(N|\,k_1,k_2,k_3)}(t_1,\dots,t_{n})$ defined by \eqref{P-as-corr} only by normalization
\begin{equation}\label{P-PP}
     \mathcal{P}_{m}^{(N|\,k_1,k_2,k_3)}(t_1,\dots,t_{n})=
         (-1)^{(m-1)k_{1}}\left[\frac{k!(N-k)!}{N!}\,\rho(k_{1},k_{2},k_{3})\right]^{\frac{1}{2}}\,
         \mathbf{P}_{m}^{(N|\,k_1,k_2,k_3)}(t_1,\dots,t_{n})
\end{equation}
where $\rho(k_{1},k_{2},k_{3})$ can be found in \eqref{rho}. Our choice of normalization will become clear below (see appendix \ref{PF} and especially eq \eqref{P-normalization}).

Polynomial $\mathbf{P}_{m}^{(N|\,k_1,k_2,k_3)}(x_1,x_2,x_3|t_1,\dots,t_{n})$ is symmetric polynomial in variables $t_1,\dots,t_{n}$ (and not necessary symmetric in $x_{j}$) satisfying the set of properties which follow from the definition \eqref{P-as-corr} (see appendix \ref{PF})\footnote{In order to save space we will denote $\mathbf{P}_{m}^{(N|\,k_1,k_2,k_3)}(x_1,x_2,x_3|t_1,\dots,t_{n})$ simply as $\mathbf{P}(x_1,x_2,x_3|t_1,\dots,t_{n})$ or even $\mathbf{P}(x_i|t_j)$ when needed.}
\begin{subequations}\label{prop}
   \begin{align}\label{prop-1} 
       &\mathbf{P}(\lambda x_i|\lambda t_{j})=
        \lambda^{m(m-1)N}\mathbf{P}(x_i|t_{j}),\\\label{prop-2}
       &\mathbf{P}\left(x_i^{-1}\bigl|t_j^{-1}\right)=
        \prod_{j=1}^3x_j^{-(m-1)k_j}\prod_{q=1}^{n}t_q^{-2(m-1)}\mathbf{P}(x_i|t_j),\\\label{prop-3}
       &\mathbf{P}(x_i+\lambda|t_j+\lambda)=\mathbf{P}(x_i|t_j),\\[2mm]\label{prop-4}
       &\mathbf{P}(x_1,x_2,x_3|\underbrace{x,\dots x}_{N+1},t_1,\dots t_{n-N-1})=0,\\\label{prop-5}
       &\mathbf{P}(x_1,x_2,x_3|\underbrace{x_j,\dots x_j}_{N+1-k_j},t_1,\dots t_{n+k_j-N-1})=0\quad\text{for}\;j=1,2,3.
   \end{align}
\end{subequations}
\begin{proposition}\label{main-proposition}
Polynomial  $\mathbf{P}_{m}^{(N|\,k_1,k_2,k_3)}(x_1,x_2,x_3|t_1,\dots,t_{n})$ is uniquely determined by the set of properties \eqref{prop} up to a total factor.
\end{proposition}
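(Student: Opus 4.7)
The plan is to argue by induction on $m$, after reducing the question to uniqueness of a symmetric polynomial in the $t$-variables alone. By translation invariance \eqref{prop-3} and scaling \eqref{prop-1}, the polynomial $\mathbf{P}(x_1, x_2, x_3 \,|\, t_1, \dots, t_n)$ is determined, up to the two-parameter affine action $(x, t) \mapsto (\lambda x + \mu, \lambda t + \mu)$, by its restriction to $x_1 = 0$, $x_2 = 1$; moreover, property \eqref{prop-5} together with \eqref{prop-1}--\eqref{prop-2} bounds the $x_3$-degree by $(m-1)k_3$, so $\mathbf{P}(0, 1, x_3 \,|\, t_j)$ is in turn reconstructed by Lagrange interpolation from the top-degree coefficient $\mathbf{P}_m^{(N|k_1,k_2,k_3)}(t_j)$ extracted via \eqref{pol-pol-def}, together with the analogous top coefficients at $x_3\to 0$ and $x_3\to 1$ obtained by relabeling the triple $(k_1,k_2,k_3)$. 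It therefore suffices to prove uniqueness of the specialized symmetric polynomial $\mathbf{P}_m^{(N|k_1,k_2,k_3)}(t_1, \dots, t_n)$.

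The base case $m = 1$ is immediate: by \eqref{prop-1} the total degree is zero, so the polynomial is a constant, and the conditions \eqref{prop-4} and \eqref{prop-5} are vacuous because $n = N - k \leq N$ and, by \eqref{cond}, $N + 1 - k_j > N - k$.

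For the inductive step I would collide $N$ variables by setting $t_{n-N+1} = \cdots = t_n = w$ for a free parameter $w$. Property \eqref{prop-4} then forces every remaining factor $t_j - w$ to divide the resulting polynomial, giving
\begin{equation*}
   \mathbf{P}\bigl(x_i \,\big|\, t_1, \dots, t_{n-N}, \underbrace{w, \dots, w}_{N}\bigr) = \prod_{j=1}^{n-N}(t_j - w) \cdot \mathbf{Q}(x_i, w \,|\, t_1, \dots, t_{n-N}).
\end{equation*}
Interpreting $w$ as a fourth insertion point of parafermionic charge $1$, one checks that $\mathbf{Q}$ satisfies the analogues of \eqref{prop-1}--\eqref{prop-5} at level $m - 1$; to turn this into an honest three-point problem I would expand $\mathbf{Q}$ as a polynomial in $w$, whose degree is bounded by the analogue of \eqref{prop-5} at the $w$-insertion, so that each $w$-coefficient is a candidate $\mathbf{P}_{m-1}^{(N|k_1,k_2,k_3')}$ for a suitable $k_3'$, to which the inductive hypothesis applies. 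Varying $w$ recovers $\mathbf{P}$ on the entire coincidence locus $t_{n-N+1} = \cdots = t_n$, and by the symmetry in the $t_j$ together with the total-degree bound from \eqref{prop-1} this determines $\mathbf{P}$ completely.

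The hard part will be the inductive step, specifically verifying that the specialized cofactor $\mathbf{Q}$ genuinely obeys the full list \eqref{prop-1}--\eqref{prop-5} with the correctly shifted parameters; in particular the inversion property \eqref{prop-2}, whose weight factor $\prod_q t_q^{-2(m-1)}$ mixes all of the $t$-variables, must be transported through the factorization $\prod_j(t_j - w)$ in a way that leaves a clean $(m-1)$-level inversion for $\mathbf{Q}$. A fallback strategy, presumably the route taken in appendix \ref{conjecture}, is to reinterpret \eqref{prop-1}--\eqref{prop-5} as the defining conditions for a primary vector in a tensor product of parafermionic modules and to deduce one-dimensionality directly from the representation theory of the parafermionic algebra, sidestepping the need to track the inversion weights by hand.
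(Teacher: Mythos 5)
Your strategy diverges from the paper's: you propose an induction on $m$ by colliding $N$ of the $t$-variables, whereas appendix~\ref{conjecture} proves uniqueness non-inductively, by filtering the space of admissible polynomials by Young diagrams in lexicographic order and showing, via degree estimates, that only the top stratum can be non-trivial and is at most one-dimensional. The problem is that your argument has a genuine gap exactly where the paper does all of its work. Your final step asserts that knowing $\mathbf{P}$ on the coincidence locus $t_{n-N+1}=\dots=t_n=w$, together with symmetry and the degree bound, ``determines $\mathbf{P}$ completely.'' It does not, at least not for free: if $P_1$ and $P_2$ both satisfy \eqref{prop}, their difference $D=P_1-P_2$ vanishes whenever any $N$ of the $t_j$ coincide, but a symmetric polynomial can vanish on that entire locus without being zero. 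Ruling this out requires showing that any non-zero symmetric polynomial vanishing on all $N$-fold (and deeper) coincidences, and satisfying the $x_a$-clustering conditions \eqref{prop-5}, must have degree strictly greater than $2(m-1)$ in each variable --- which is precisely the content of the paper's computation of the exponents $\nu^{(q)}_j$ in \eqref{nu-q-definition} and the inequality $\nu^{(q)}>\nu^{(1)}=2(m-1)$ for $q>1$, resting on the refined divisibility statements of propositions \ref{P-Nx-proposition} and \ref{P-Nxa-proposition} (note in particular that the collision of $N$ points produces the \emph{square} $\prod_j(t_j-w)^2$, not the single power you extract from \eqref{prop-4}). Without some version of this estimate your induction does not close.

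There is also a secondary weakness in the inductive step itself: after expanding the cofactor $\mathbf{Q}$ in powers of $w$, you apply the inductive hypothesis to each coefficient, but the hypothesis covers only polynomials satisfying the \emph{full} list \eqref{prop-1}--\eqref{prop-5} at level $m-1$. The individual $w$-coefficients inherit the translation, scaling and inversion covariances only after nontrivial bookkeeping (these properties mix different powers of $w$), and the clustering condition at the new insertion point $w$ is a condition on $\mathbf{Q}$ as a whole that does not distribute over its $w$-coefficients. You correctly flag the inversion property \eqref{prop-2} as the hard part, but the honest difficulty is broader than that, and your proposed fallback (representation theory of the parafermionic algebra) is not the route the paper takes either. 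The base case $m=1$ is fine.
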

The proof of proposition \ref{main-proposition} is given  in appendix \ref{conjecture}.  The polynomial $\mathbf{P}_{m}^{(N|\,k_1,k_2,k_3)}(x_1,x_2,x_3|t_1,\dots,t_{n})$ can be explicitly written in terms of symmetrization of certain basic "monomial" $p(x_1,x_2,x_3|t_1,\dots,t_{n})$ as
\begin{equation}\label{Polynomial-definition}
     \mathbf{P}_{m}^{(N|\,k_1,k_2,k_3)}(x_1,x_2,x_3|t_1,\dots,t_{n})=
      \Lambda_{m}^{-1}\,S[p(x_1,x_2,x_3|t_1,\dots,t_{n})],
\end{equation}
where $S$ means symmetrization over all variables $(t_1,\dots,t_{n})$. Normalization constant
\begin{equation}\label{Lambda}
    \Lambda_{m}=\left[\frac{N^{n}k!(N-k)!}{N!}\right]^{\frac{1}{2}}
    (m!)^{N-k}((m-1)!)^{k}
\end{equation}
in \eqref{Polynomial-definition} were chosen in such a way that the  polynomial explicitly given by \eqref{Polynomial-definition} together with \eqref{P-PP} gives  \eqref{P-as-corr}.

The basic ``monomial'' $p(x_1,x_2,x_3|t_1,\dots,t_n)$ in \eqref{Polynomial-definition} can be defined as follows. We divide our points $t_1,\dots,t_n$ into $N$ groups
\begin{equation*}
    (t_1,t_2,\dots,t_n)\longrightarrow(s_1,\dots,s_N),
\end{equation*}
where  $N-k$ groups  contain $m$ different elements and the remaining $k$ groups  contain $m-1$ different elements. We associate with each group $s_q$ the "monomial" $\prod_{i<j\in s_q}(t_i-t_j)^2$ and take the product over all groups
\begin{equation}\label{group-prod-monomial}
        \prod_{q=1}^N\prod_{i<j\in s_q}(t_i-t_j)^2.
\end{equation}
It is clear that if the basic "monomial" is proportional to the product \eqref{group-prod-monomial} then the clustering property \eqref{prop-4}  follows automatically because if one sets any $N+1$ variables $t_{j}$ to the same value then at least two of them will be inside the same group and hence the expression \eqref{group-prod-monomial} vanish. The rest of the basic monomial  is constructed as follows. All groups $(s_1,\dots,s_N)$ are divided into four sets with $k-k_1$, $k-k_2$ and $k-k_3$ groups ($(m-1)$ elements in each group) and $N-k$ groups ($m$ elements in each group)
\begin{equation*}
  \begin{aligned}
     &\chi_1=(s_{1},\dots,s_{k-k_1}),\quad&
     &\chi_2=(s_{k-k_1+1},\dots,s_{k_3}),\\
     &\chi_3=(s_{k_3+1},\dots,s_{k}),\quad&
     &\chi_4=(s_{k+1},\dots,s_{N}).
  \end{aligned}
\end{equation*} 
We define the basic monomial as
\begin{multline}\label{basic-monomial}
     p(x_1,x_2,x_3|t_1,\dots,t_{n})=\\=
     \prod_{q=1}^N\prod_{i<j\in s_q}(t_i-t_j)^2\,
     \prod_{j\in\chi_1}(t_j-x_2)(t_j-x_3)\prod_{j\in\chi_2}(t_j-x_1)(t_j-x_3)\prod_{j\in\chi_3}(t_j-x_1)(t_j-x_2).
\end{multline}
We note that the expression involving variable $x_j$ in \eqref{basic-monomial} looks like
\begin{equation}\label{x-monomial}
      \prod_{i\in s_{p_1}}(t_i-x_j)\dots\prod_{i\in s_{p_{k_j}}}(t_i-x_j),
\end{equation}
i.~e. it involves the product over exactly $k_j$ groups for any $j=1,2,3$ and apparently ensures clustering property \eqref{prop-5}. After symmetrization over $(t_{1},\dots,t_{n})$ we get the polynomial which is non-zero and it it is easy to see that  the rest of properties in \eqref{prop} are also valid. With the polynomial defined by \eqref{Polynomial-definition} and \eqref{basic-monomial} we define the polynomial $\mathbf{P}_{m}^{(N|\,k_1,k_2,k_3)}(t_1,\dots,t_{n})$ as in \eqref{pol-pol-def}. From the standard CFT point of view the definition \eqref{pol-pol-def} is  equivalent to using the projective invariance of correlation function 
\begin{equation*}
  \langle\sigma_{k_{1}}(x_{1})\sigma_{k_{2}}(x_{2})\sigma_{k_{3}}(x_{3})\Psi(t_{1})\dots\Psi(t_{n})\rangle\rightarrow
  \langle\sigma_{k_{1}}(0)\sigma_{k_{2}}(1)\sigma_{k_{3}}(\infty)\Psi(t_{1})\dots\Psi(t_{n})\rangle,
\end{equation*}
allowing to fix $x_{1}=0$, $x_{2}=1$ and $x_{3}=\infty$.

Integrals of the form \eqref{Selberg-integral} i.e.
\begin{equation}\label{Selberg-sample}
   \int\limits_0^1\dots\int\limits_0^1
       \prod_{j=1}^{n}t_j^{A}(1-t_j)^{B}\prod_{i<j=1}^{n}|t_i-t_j|^{2g}\;
       P(t_1,\dots,t_{n})\,dt_{1}\dots dt_{n},
\end{equation}
where $P(t_1,\dots,t_{n})$ is some symmetric polynomial were extensively studied in the literature. When $P(t_1,\dots,t_{n})=1$ the integral \eqref{Selberg-sample} coincides with ordinary Selberg integral \cite{Selberg}.  The case when $P(t_1,\dots,t_{n})$ is elementary symmetric polynomial was considered by Aomoto \cite{MR876291}. Further generalization has been computed by Kadell \cite{MR1467311}. In this case  $P(t_1,\dots,t_{n})$ coincides with Jack polynomial $\jac_{\lambda}^{1/g}(t_{1},\dots,t_{n})$, where $\lambda$ is arbitrary partition of at most $n$ parts (for further development see also \cite{Iguri:2009fk,Iguri:2009uq}). In this paper we present another exact results for the integrals of the form \eqref{Selberg-sample} (see below).

We will follow mainly original approach of Selberg \cite{Selberg} also described in \cite{Forrester}. For $g$ being \emph{non-negative integer} number we can expand in \eqref{Selberg-integral}
\begin{equation}\label{basic-pol-as-list-of-monomials}
       \prod_{i<j=1}^{n}|t_i-t_j|^{2g}\;\mathbf{P}_{m}^{(N|\,k_1,k_2,k_3)}(t_1,\dots,t_{n})=
       \sum_{\nu_j}C_{\nu_1,\dots,\nu_{n}}\;t_1^{\nu_1}\dots t_{n}^{\nu_{n}},
\end{equation}
where the sum goes over all eligible sets of integers $\nu_{1},\dots,\nu_{n}$.
It is clear that the values of the coefficients $C_{\nu_1,\dots,\nu_{n}}$ in \eqref{basic-pol-as-list-of-monomials} are independent on the order of the integers $\nu_1,\dots,\nu_{n}$. It happens that in complete analogy with Selberg case (i.e. in the case with polynomial being a constant) the integers arranged as $\nu_1\leq \nu_2\dots\leq \nu_{n}$ satisfy proper set of inequalities which allow to find $A$ and $B$ dependent part of the integral \eqref{Selberg-integral} analytically.
\begin{proposition}\label{Selberg-prop-1}
For $\nu_1\leq \nu_2\dots\leq \nu_{n}$ the non-zero values of $C_{\nu_1,\dots,\nu_{n}}$ in \eqref{basic-pol-as-list-of-monomials} occur when the integers $\nu_{pN+j}$  satisfy the inequalities
\begin{subequations}\label{ineq}
\begin{align}\label{ineq1}
  &\begin{aligned}
        &\nu_{pN+j}\geq p+(pN+j-1)g&&\;\text{if}\quad p=0,\dots,m-2,\quad j=1,\dots,N-k_1,\\
        &\nu_{pN+j}\geq p+1+(pN+j-1)g&&\;\text{if}\quad p=0,\dots,m-2,\quad j=N-k_1+1,\dots,N,\\
        &\nu_{N(m-1)+j}\geq m-1+((m-1)N+j-1)g&&\;\text{if}\quad j=1,\dots, N-k,
  \end{aligned}\\[5mm]\label{ineq2}
  &\begin{aligned}
        &\nu_{pN+j}\leq m+p-1+((m+p)N-k+j-2)g&&\;\text{if}\quad p=0,\dots,m-1,\; j=1,\dots,N-k,\\
        &\nu_{pN+j}\leq m+p-1+((m+p)N-k+j-2)g&&\;\text{if}\quad p=0,\dots,m-2,\; j=N-k+1,\dots,N-k+k_3,\\
        &\nu_{pN+j}\leq m+p+((m+p)N-k+j-2)g&&\;\text{if}\quad p=0,\dots,m-2,\; j=N-k+k_3+1,\dots,N.
   \end{aligned}
\end{align}
\end{subequations}
\end{proposition}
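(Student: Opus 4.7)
The plan is to adapt Selberg's classical argument: for $\prod_{i<j}(t_i-t_j)^{2g}$ alone, the sorted exponents satisfy $\nu_{(r)}\ge(r-1)g$ because this product vanishes to order $2g\binom{r}{2}$ at $t_1=\dots=t_r=0$, which gives $\sum_{i=1}^{r}\nu_i\ge gr(r-1)$, and then the sort inequality $r\nu_{(r)}\ge\sum_{i=1}^{r}\nu_{(i)}$ yields the pointwise Selberg bound. The proof of Proposition \ref{Selberg-prop-1} should follow the same scheme, with the extra contributions $\Delta_r$ (the $p$ or $p+1$ in \eqref{ineq1}) arising from the polynomial $\mathbf{P}_m^{(N|k_1,k_2,k_3)}$.

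First I would compute the vanishing order of $\mathbf{P}_m^{(N|k_1,k_2,k_3)}$ at $t_1=\dots=t_r=0$ using the explicit representation \eqref{Polynomial-definition}-\eqref{basic-monomial}. In each term of $S[p]$, the vanishing at the origin has two sources: any pair of $\{t_1,\dots,t_r\}$ placed in the same group produces a factor $(t_i-t_j)^2$, and any variable landing in a group of $\chi_2\cup\chi_3$ produces an explicit factor $t_i$. Since $|\chi_1\cup\chi_4|=N-k_1$ provides cost-zero slots and $|\chi_2\cup\chi_3|=k_1$ provides cost-one slots, greedy filling shows the minimum total vanishing order grows in blocks of $N$: zero on the first $N-k_1$ placements, then $+1$ per placement for the next $k_1$, then $+2$ per placement for the next $N-k_1$, and so on. Combined with the Selberg contribution $2g\binom{r}{2}$ and the sort inequality, the integrality of the exponents then yields the lower bound \eqref{ineq1}, the jump from $p$ to $p+1$ occurring exactly when the next cost-one layer opens, namely at $r=pN+(N-k_1)+1$.

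For the upper bounds \eqref{ineq2} I would use the reflection $t_i\mapsto 1/t_i$: under this substitution $\prod(t_i-t_j)^{2g}$ acquires the factor $\prod_i t_i^{-2g(n-1)}$, and by property \eqref{prop-2} the polynomial $\mathbf{P}$ acquires $\prod_i t_i^{-2(m-1)}$ together with $x_j\mapsto x_j^{-1}$. An upper bound on the original $\nu_{(r)}$ is thus equivalent to a lower bound on the exponents of the reflected integrand, which falls under the previous analysis --- except that the clustering relevant at infinity is \eqref{prop-5} at $j=3$ (since $x_3=\infty$ becomes $x_3^{-1}=0$ after inversion), so $k_3$ plays the role previously played by $k_1$, and the rank $r$ is replaced by $n+1-r$. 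This produces the stated bounds, including the piecewise structure involving $k_3$ and $N-k$.

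The main obstacle is bridging the cumulative vanishing-order inequality $\sum_{i=1}^r\nu_i\ge\text{(order)}$ to the sharp pointwise bound on $\nu_{(r)}$. For pure Selberg this is automatic because $2g\binom{r}{2}=r\cdot(r-1)g$, but here the minimum cost extracted from $\mathbf{P}$ does not in general scale exactly as $r\cdot\Delta_r$; the integrality of $\nu_{(r)}$, which upgrades any strictly positive fractional excess to a full integer step, is essential. One has to verify case by case that in each range of $(p,j)$ the vanishing order exceeds $r\cdot\{(r-1)g+\Delta_r-1\}$. I expect this combinatorial case analysis, supplemented where necessary by simultaneous specialization at $t=0$ and $t=1$ using \eqref{prop-5} at $j=2$, to be the most delicate step.
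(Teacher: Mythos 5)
Your proposal is correct and follows essentially the same route as the paper's Appendix C: an extremal count of how the $t$-variables distribute among the groups of the basic monomial (your ``vanishing order at the origin'' is, by homogeneity, the complement of the paper's ``maximal collective degree'' of the remaining variables), followed by the partial-sum/sorting inequality, the integrality step to absorb the fractional deficit (the paper's remark that $p(N-j)/(pN+j)<1$ ``can be dropped''), and the inversion $t_i\mapsto 1/t_i$ with $k_1\leftrightarrow k_3$ to convert lower bounds into the upper bounds \eqref{ineq2}. The only difference is organizational: the paper writes out only the case $k_1=k_2=k_3=0$ and leaves the general case to the reader, whereas your greedy-filling formulation addresses the general $k_j$ directly (and the specialization at $t=1$ you mention as a possible supplement is in fact not needed, since $k_2$ does not enter the inequalities).
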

We give a proof of proposition \ref{Selberg-prop-1} in appendix \ref{Selberg-prop-proof}. Substituting expansion \eqref{basic-pol-as-list-of-monomials} in the definition \eqref{Selberg-integral} of the integral $\mathbf{J}_{m}^{(N|\,k_1,k_2,k_3)}(A,B|g)$ and using the Euler beta integral formula we get
\begin{equation}\label{Int-rep}
      \mathbf{J}_{m}^{(N|\,k_1,k_2,k_3)}(A,B|g)=\sum_{\nu_{j}}C_{\nu_1,\dots,\nu_{n}}
      \prod_{j=1}^{n}\frac{\Gamma(1+A+\nu_j)\Gamma(1+B)}{\Gamma(2+A+B+\nu_j)}.
\end{equation}
Using the first set of inequalities \eqref{ineq1} in the numerator in \eqref{Int-rep} the second set  \eqref{ineq2} in the denominator and the relation $\Gamma(1+x)=x\,\Gamma(x)$ we can rewrite
\begin{multline}\label{Int-rep-1}
      \prod_{j=1}^{n}\frac{\Gamma(1+A+\nu_j)}{\Gamma(2+A+B+\nu_j)}=H_{\{\nu_j\}}(A,B)
      \prod_{p=0}^{m-2}\frac{G_{k_1}^{(N)}(1+A+pNg+p)}{G_{k-k_3}^{(N)}\left(1+A+B+((m+p)N-k-1)g+m+p\right)}
      \times\\\times
      \frac{G_0^{(N-k)}(m+A+(m-1)Ng)}{G_0^{(N-k)}(2m+A+B+((2m-1)N-k-1)g)},
\end{multline}
where
\begin{equation}\label{G-def}
    G_{k}^{(N)}(x)\overset{\text{def}}{=}
       \prod_{j=1}^{N-k}\Gamma(x+(j-1)g)\prod_{j=N-k+1}^{N}\Gamma(1+x+(j-1)g),
\end{equation}
and $H_{\{\nu_j\}}(A,B)$ is some polynomial in $A$, $B$ and $g$ (which depends on the integers $\{\nu_j\}=(\nu_1,\dots,\nu_{n})$). By symmetry reasons it is also useful to represent in \eqref{Int-rep}
\begin{equation}\label{Int-rep-2}
     \prod_{j=1}^{n}\Gamma(1+B)=\frac{1}{Q_{k_2}(B)}\prod_{p=0}^{m-2} G_{k_2}^{(N)}(1+B+pNg+p)\;
      G_0^{(N-k)}(m+B+(m-1)Ng),
\end{equation}
with some polynomial $Q_{k_2}(B)$ (it does not depend on choice of integers $\{\nu_j\}$, but its explicit form is not important for us now). Using \eqref{Int-rep-1} and \eqref{Int-rep-2} we get
\begin{multline}\label{Int-A-rep}
    \mathbf{J}_{m}^{(N|\,k_1,k_2,k_3)}(A,B|g)=\frac{H(A,B)}{Q_{k_2}(B)}\;
    \prod_{p=0}^{m-2}\frac{G_{k_1}^{(N)}(1+A+pNg+p)\,G_{k_2}^{(N)}(1+B+pNg+p)}
     {G_{k-k_3}^{(N)}\left(1+A+B+((m+p)N-k-1)g+m+p\right)}
      \times\\\times
      \frac{G_0^{(N-k)}(m+A+(m-1)Ng)\,G_0^{(N-k)}(m+B+(m-1)Ng)}{G_0^{(N-k)}(2m+A+B+((2m-1)N-k-1)g)},
\end{multline}
where
\begin{equation*}
        H(A,B)\overset{\text{def}}{=}\;\sum_{\nu_{j}}C_{\nu_1,\dots,\nu_{n}}\,H_{\{\nu_j\}}(A,B).
\end{equation*}
Now due to the symmetry of the integral \eqref{Selberg-integral} with respect to $A\leftrightarrow B$ and $k_1\leftrightarrow k_2$ we can also represent it as
\begin{multline}\label{Int-B-rep}
    \mathbf{J}_{m}^{(N|\,k_1,k_2,k_3)}(A,B|g)=\frac{\tilde{H}(A,B)}{Q_{k_1}(A)}\;
    \prod_{p=0}^{m-2}\frac{G_{k_1}^{(N)}(1+A+pNg+p)\,G_{k_2}^{(N)}(1+B+pNg+p)}
     {G_{k-k_3}^{(N)}\left(1+A+B+((m+p)N-k-1)g+m+p\right)}
      \times\\\times
      \frac{G_0^{(N-k)}(m+A+(m-1)Ng)\,G_0^{(N-k)}(m+B+(m-1)Ng)}{G_0^{(N-k)}(2m+A+B+((2m-1)N-k-1)g)},
\end{multline}
with some another polynomials $\tilde{H}(A,B)$ and $Q_{k_1}(A)$. The validity of both equations \eqref{Int-A-rep} and \eqref{Int-B-rep} requires
\begin{equation*}
    \frac{H(A,B)}{Q_{k_2}(B)}=\frac{\tilde{H}(A,B)}{Q_{k_1}(A)},
\end{equation*}
which means in particular that $H(A,B)$ is completely divisible by $Q_{k_2}(B)$ as well as $\tilde{H}(A,B)$ is completely divisible by $Q_{k_1}(A)$
\begin{equation*}
     \frac{H(A,B)}{Q_{k_2}(B)}=\frac{\tilde{H}(A,B)}{Q_{k_1}(A)}=R(A,B),
\end{equation*}
where $R(A,B)$ is some polynomial in $A$ and $B$  of certain degree\footnote{It may depend on parameter $g$ in non-polynomial way, which is actually the case (see below).}. Let us show, that in fact $R(A,B)$ does not depend on parameters $A$ and $B$. First, we consider large $A$ asymptotic of the integral \eqref{Selberg-integral}. It can be done by changing variables $t_k=e^{-\frac{s_k}{A}}$ in \eqref{Selberg-integral}
\begin{equation}\label{deg-1}
       \mathbf{J}_{m}^{(N|\,k_1,k_2,k_3)}(A,B|g)\sim A^{-n(m+B+(n-1)g)+(m-1)(k-k_2)}. 
\end{equation}
From other side using
\begin{equation*}
     \frac{\Gamma(x+\alpha)}{\Gamma(x)}\sim x^{\alpha},
     \quad\text{as}\quad x\rightarrow\infty,\quad\arg(x)\neq\pi,
\end{equation*}
in \eqref{Int-A-rep} we get
\begin{multline}\label{deg-2}
    \prod_{p=0}^{m-2}\frac{G_{k_1}^{(N)}(1+A+pNg+p)}{G_{k-k_3}^{(N)}\left(1+A+B+((m+p)N-k-1)g+m+p\right)}
    \times\\\times
    \frac{G_0^{(N-k)}(m+A+(m-1)Ng)}{G_0^{(N-k)}(2m+A+B+((2m-1)N-k-1)g)}
    \sim A^{-n(m+B+(n-1)g)+(m-1)(k-k_2)}.
\end{multline}
Because both \eqref{deg-1} and \eqref{deg-2} have the same behavior at $A\rightarrow\infty$ we conclude that the polynomial $R(A,B)$ does not depend on $A$. Analogously we find that it does not depend on $B$ as well. So $R(A,B)$ may depend only on parameter $g$, i.e.
\begin{multline}\label{Int-g-rep}
    \mathbf{J}_{m}^{(N|\,k_1,k_2,k_3)}(A,B|g)=C_{m}^{(N,k)}(g)\;
    \prod_{p=0}^{m-2}\frac{G_{k_1}^{(N)}(1+A+pNg+p)\,G_{k_2}^{(N)}(1+B+pNg+p)}
     {G_{k-k_3}^{(N)}\left(1+A+B+((m+p)N-k-1)g+m+p\right)}
      \times\\\times
      \frac{G_0^{(N-k)}(m+A+(m-1)Ng)\,G_0^{(N-k)}(m+B+(m-1)Ng)}{G_0^{(N-k)}(2m+A+B+((2m-1)N-k-1)g)},
\end{multline}
The  factor $C_{m}^{(N,k)}(g)$ in \eqref{Int-g-rep} depends on parameters $k_{1}$, $k_{2}$ and $k_{3}$ only through their sum i.e.  it depends only on parameter $k$. This fact can be easily seen if we take
\begin{equation}\label{Selberg-Mehta-cond}
     t_k=\frac{1}{2}-\frac{s_k}{2L},\qquad A=B=\frac{L^2}{2},
\end{equation}
consider the limit $L\rightarrow\infty$ in the integral \eqref{Selberg-integral}\footnote{While performing the limit of the integral \eqref{Selberg-integral} we used the following limit
\begin{equation*}
   \lim_{L\rightarrow\infty}\left(1-\left(\frac{x}{L}\right)^{2}\right)^{\frac{L^{2}}{2}}=e^{-\frac{x^{2}}{2}}.
\end{equation*}}
and from other side use Stirling formula in the r.h.s. in \eqref{Int-g-rep}. One obtains that  $C_{m}^{(N,k)}(g)$ is given by generalized Mehta integral
\begin{equation}\label{Mehta-integral}
    C_{m}^{(N,k)}(g)=\left(\frac{1}{\sqrt{2\pi}}\right)^{n}
    \int\limits_{-\infty}^{\infty}\dots\int\limits_{-\infty}^{\infty}\;
    \prod_{j=1}^{n}e^{-\frac{s_j^2}{2}}\;\prod_{i<j}|s_i-s_j|^{2g}
    \;\mathfrak{P}_{m}^{(N)}(s_1,\dots,s_{n})\;
    ds_1\dots ds_{n},
\end{equation}
here $\mathfrak{P}_{m}^{(N)}(s_1,\dots,s_{n})$ is the polynomial given by  symmetrization of the "monomial" $\mathfrak{p}(s_1,\dots,s_{n})$
\begin{equation}\label{Pol-sim-to-Jack}
   \mathfrak{P}_{m}^{(N)}(s_1,\dots,s_{n})=\Lambda_{m}^{-1}\,S\left[\mathfrak{p}(s_1,\dots,s_{n})\right],
\end{equation}
where normalization factor $\Lambda_{m}$ is given by \eqref{Lambda} and 
\begin{equation*}
   \mathfrak{p}(s_1,\dots,s_{n})=
   \prod_{p=0}^{N-k-1}\prod_{i<j=1}^m\left(s_{i+mp}-s_{j+mp}\right)^2
   \prod_{p=N-k}^{N-1}\prod_{i<j=1}^{m-1}\left(s_{i+(m-1)p}-s_{j+(m-1)p}\right)^2.
\end{equation*}
We see that \eqref{Mehta-integral} indeed depends only on parameter $k$ and not on parameters $k_{j}$.
Polynomial \eqref{Pol-sim-to-Jack} is symmetric homogeneous polynomial of total degree $(m-1)(n-k)$ (we remind that $n=mN-k$) satisfying clustering property 
\begin{equation}\label{Jack-cluster}
      \mathfrak{P}_{m}^{(N)}(\underbrace{x,\dots,x}_{N+1},s_1,\dots,s_{n-N-1})=0.
\end{equation}
We note that the polynomial \eqref{Pol-sim-to-Jack} can be obtained from the polynomial  $\mathbf{P}_{m}^{(N|\,k_{1},k_{2},k_{3})}(t_{1},\dots,t_{n})$ with variables $t_{j}$ taken as in \eqref{Selberg-Mehta-cond} in the limit $L\rightarrow\infty$.  As was shown in \cite{Feigin:1993qr,feigin-2001} the degree $(m-1)(n-k)$ is the minimal possible degree for the symmetric homogeneous polynomial in $n=mN-k$ variables satisfying \eqref{Jack-cluster} and this unique polynomial is proportional to Jack polynomial \cite{Macdonald}
\begin{equation}\label{proportional-to-Jack}
   \mathfrak{P}_{m}^{(N)}(s_1,\dots,s_{n})\sim\jac^{\alpha_N}_{\lambda}(s_1,\dots,s_{n}),
\end{equation}
where $\alpha_N=-N-1$ and $\lambda$ is the special partition
\begin{equation*}\label{partition}
   \lambda=\Bigl[\,\underbrace{2(m-1),\dots,2(m-1)}_{N-k},\,\underbrace{2(m-2),\dots,2(m-2)}_{N}
            ,\dots,\,\underbrace{2,\dots,2}_{N}\,\Bigr].
\end{equation*}
Since the factor $C_{m}^{(N,k)}(g)$ does not depend on parameters $k_{j}$ we found convenient to evaluate it from the integral  \eqref{Selberg-integral} in the case $k_{1}=0$ and $k_{2}=k_{3}=k$. By Carlson's theorem \cite{Carlson}\footnote{The only condition we need is that r.h.s. of \eqref{Int-g-rep} grows as $O(e^{\mu|g|})$ with $\mu<\pi$. This assumption will be justified by explicit form of $C_{m}^{(N,k)}(g)$.} we find that \eqref{Int-g-rep} holds not necessary for integer values of the parameter $g$. In particular it can be continued to the region of sufficiently small negative $g$. In this region function $C_{m}^{(N)}(g)$ can be found if we change variables
\begin{equation*}
    t_j=\tau\xi_j\quad\text{with}\quad\sum\xi_j=1\quad\text{for}\quad j=1\dots N,
\end{equation*}
in the integral \eqref{Selberg-integral} with $k_{1}=0$ and $k_{2}=k_{3}=k$
\begin{multline}\label{Int-limit}
       \int\limits_0^Nd\tau\,\tau^{NA+N(N-1)g+N-1}
       \int\limits_0^1d\xi_1\dots\int\limits_0^1d\xi_N
       \prod_{j=1}^N\xi_j^A(1-\xi_j\tau)^B\prod_{i<j}|\xi_i-\xi_j|^{2g}\delta(\xi_1+\dots+\xi_N-1)\cdot\\
       \int\limits_0^1\dots\int\limits_0^1
       \prod_{i=1}^{n-N}t_i^{A}(1-t_i)^{B}\prod_{j=1}^N|t_i-\tau\xi_j|^{2g}
       \prod_{i<j=1}^{n-N}|t_i-t_j|^{2g}\;
       \mathbf{P}_{m}^{(N|0,k,k)}(\tau\xi_1,\dots\tau\xi_N,t_1,\dots,t_{n-N})
       dt_{1}\dots dt_{n}.
\end{multline}
The integral over $d\tau$ in \eqref{Int-limit} is divergent logarithmically at $A=-1-(N-1)g$. Applying the general formula
\begin{equation*}
    \lim_{a\rightarrow(-1)^{+}}(1+a)\int_{0}^{\Lambda}x^{a}f(x)\,dx=f(0),
\end{equation*}
for any function $f(x)$ continuous at $x\rightarrow0^{+}$ we can consider the limit 
\begin{equation*}
       \lim_{A\rightarrow-1-(N-1)g}(1+A+(N-1)g)\,\eqref{Int-limit}.
\end{equation*}
Using the clustering property
\begin{equation*}
    \mathbf{P}_{m}^{(N|0,k,k)}(\underbrace{x,\dots x}_{N},t_1,\dots,t_{n-N})=\frac{N!}{N^{\frac{N}{2}}}
    \prod_{j=1}^{n-N}(t_{j}-x)^{2}\,
    \mathbf{P}_{m-1}^{(N|0,k,k)}(t_1,\dots,t_{n-N}),
\end{equation*}
which follows from the explicit form of the polynomial $\mathbf{P}_{m}^{(N|0,k,k)}(t_1,\dots,t_{n})$ one immediately obtains
\begin{equation}\label{JJ-relat}
       \lim_{A\rightarrow-1-(N-1)g}(1+A+(N-1)g)\,\mathbf{J}_{m}^{(N|0,k,k)}(A,B|g)=
       N^{-\frac{N}{2}}\,\frac{n!}{(n-N)!}\;X_N(g)\,\mathbf{J}_{m-1}^{(N|0,k,k)}(1+(N+1)g,B|g),  
\end{equation}
where
\begin{subequations}
\begin{equation*}\label{XN}
     X_N(g)=\frac{1}{N}\int\limits_0^1\dots\int\limits_0^1
     \prod_{k=1}^N\xi_k^{-1-(N-1)g}\prod_{i<j}|\xi_i-\xi_j|^{2g}\delta(\xi_1+\dots+\xi_N-1)\,d\xi_1\dots d\xi_N.
\end{equation*}
\end{subequations}
In order to compute $X_{N}(g)$ we set in \eqref{JJ-relat} $m=1$ and $k=0$. In this case the polynomial entering the integral $\mathbf{J}_{1}^{(N|0,0,0)}(A,B|g)$ is actually some constant so that  $\mathbf{J}_{1}^{(N|0,0,0)}(A,B|g)$ is ordinary Selberg integral which is already known. Considering the limit \eqref{JJ-relat} in this case we get
\begin{equation}
     X_N(g)=
     \frac{1}{\Gamma(-Ng)}\prod_{j=1}^N\frac{\Gamma(-jg)\Gamma(1+jg)}{\Gamma(1+g)}.
\end{equation}
Now substituting \eqref{Int-g-rep} in  \eqref{JJ-relat} we find that
\begin{equation}\label{C-k-reccursion}
     C_m^{(N,\,k)}(g)=N^{-\frac{N}{2}}\,\frac{n!}{(n-N)!}\;\frac{G_{N-k}^{(N)}(m-1+((m-1)N-k+1)g)}{\Gamma^N(1+g)}\,C_{m-1}^{(N,\,k)}(g).
\end{equation}
Repeating \eqref{C-k-reccursion} $(m-2)$ times we get
\begin{equation}\label{C-k-answer-prem}
      C_m^{(N,\,k)}(g)=N^{-\frac{(m-1)N}{2}}\,\frac{n!}{(N-k)!}
      \prod_{p=1}^{m-1}\;\frac{G_{N-k}^{(N)}(p+(pN-k+1)g)}{\Gamma^N(1+g)}\,C_1^{(N,\,k)}(g).
\end{equation}
Finally using that under chosen normalization
\begin{equation*}
    \mathfrak{P}_{1}^{(N)}(t_1,\dots,t_{N-k})=\left(\frac{N!(N-k)!}{N^{N-k}\,k!}\right)^{\frac{1}{2}}    
\end{equation*}
and hence the integral \eqref{Mehta-integral} is just the ordinary Mehta integral in this case we find that
\begin{equation}
       C_1^{(N,\,k)}(g)=\left(\frac{N!(N-k)!}{N^{N-k}\,k!}\right)^{\frac{1}{2}}\prod_{j=1}^{N-k}\frac{\Gamma(1+jg)}{\Gamma(1+g)}.
\end{equation}
Combining altogether one arrives at
\begin{equation}\label{C-k-answer}
      C_m^{(N,\,k)}(g)=n!\left[N^{-n}\frac{N!}{k!(N-k)!}\right]^{\frac{1}{2}} 
      \left(\frac{1}{\Gamma(1+g)}\right)^{n}
      G_0^{(N-k)}(1+g)\;\prod_{p=1}^{m-1}\,G_{N-k}^{(N)}(p+(pN-k+1)g),
\end{equation}
where $n=mN-k$. Finally we conclude that the integral \eqref{Selberg-integral} is given by \eqref{Int-g-rep} with $C_m^{(N,\,k)}(g)$ given by \eqref{C-k-answer}. 

Now we are ready to compute the integral $\mathcal{G}_{m}^{(N|\,k_1,k_2,k_3)}(A,B|g)$ defined by \eqref{2D-Selberg-int}. One can use the following observation. For any integral over the plane (more precisely over $n$ two-dimensional planes)
\begin{equation}\label{I}
     \mathbf{I}_{n}=\int\dots\int 
     \prod_{i=1}^{n}|t_{i}|^{2A}|t_{i}-1|^{2B}\prod_{i<j}|t_{i}-t_{j}|^{4g}|P(t_{1},\dots,t_{n})|^{2}\,d^{2}t_{1}\dots d^{2}t_{n},
\end{equation}
where $P(t_{1},\dots,t_{n})$ is arbitrary polynomial the following relation is hold \cite{Dotsenko:1984ad,Dotsenko:1985hi,MR916224}
\begin{equation}\label{I-II}
    \mathbf{I}_{n}=\lambda_{n}(B|g)\,I_{n}\tilde{I}_{n},
\end{equation}
where
\addtocounter{equation}{-1}
\begin{subequations}\label{II}
\begin{align}\label{I-n}
     &I_{n}=\int_{0}^{1}\dots\int_{0}^{1} 
     \prod_{i=1}^{n}t_{i}^{A}(1-t_{i})^{B}\prod_{i<j}|t_{i}-t_{j}|^{2g}\,P(t_{1},\dots,t_{n})\,dt_{1}\dots dt_{n},\\
\intertext{and}\label{I-n-dual} 
     &\tilde{I}_{n}=\int_{1}^{\infty}\dots\int_{1}^{\infty} 
     \prod_{i=1}^{n}t_{i}^{A}(t_{i}-1)^{B}\prod_{i<j}|t_{i}-t_{j}|^{2g}\,P(t_{1},\dots,t_{n})\,dt_{1}\dots dt_{n}.
\end{align}
\end{subequations}
In fact \eqref{I-II} was proven for pure Selberg case (i.e. with $P(t_{1},\dots,t_{n})=1$) by standard manipulations with contour integrals \cite{Lipatov:1976ar}. Since the polynomial is a single-valued function it is not an obstruction to deform contours in a same way as was done for $P(t_{1},\dots,t_{n})=1$. Moreover, since the factor $\lambda_{n}(B|g)$ originates from the non-analyticity of the integrand in \eqref{I} it is going to be the same for any polynomial $P(t_{1},\dots,t_{n})$.    Explicitly $\lambda_{n}(B|g)$ reads as
\begin{equation}\label{Lambda-n}
    \lambda_{n}(B|g)=\frac{(-1)^{n}}{n!}\,\prod_{j=0}^{n-1}\frac{\sin\pi(B+jg)\,\sin\pi(j+1)g}{\sin\pi g}.
\end{equation}
In order to  compute the integral \eqref{I} one has to compute the contour integrals \eqref{I-n} and \eqref{I-n-dual}. The late can be reduced by change of variables $t_{j}\rightarrow 1/t_{j}$ to the integral of the form \eqref{I-n} with some another polynomial $\hat{P}(t_{1},\dots,t_{n})$ defined as
\begin{equation*}
      \hat{P}(t_{1},\dots,t_{n})=\prod_{j=1}^{n}t_{j}^{\eta}\,P(1/t_{1},\dots,1/t_{n}),
\end{equation*}
where $\eta$ is some integer number. In the case considered in this paper polynomials $P(t_{1},\dots,t_{n})$ and $\hat{P}(t_{1},\dots,t_{n})$ belong to the same class. Namely, as follows from \eqref{prop-2}
\begin{equation}\label{P-P-relation}
   \prod_{j=1}^{n}t_{j}^{2(m-1)}\,\mathbf{P}_{m}^{(N|\,k_{1},k_{2},k_{3})}(1/t_{1},\dots,1/t_{n})=
   \mathbf{P}_{m}^{(N|\,k_{3},k_{2},k_{1})}(t_{1},\dots,t_{n}).
\end{equation}
Using \eqref{II}, \eqref{Lambda-n} and \eqref{P-P-relation} one can compute the integral \eqref{2D-Selberg-int}
\begin{multline}\label{2D-Selberg-1}
   \mathcal{G}_{m}^{(N|\,k_1,k_2,k_3)}(A,B|g)=\left(\pi\mu\gamma\left(\frac{bQ}{N}\right)\right)^{n}
   \rho(k_{1},k_{2},k_{3})
   \times\\\times
   \prod_{p=0}^{m-2}\frac{1}
   {\gamma_{N-k}^{(N)}\left(pb^{2}+\frac{(N-k+1)bQ}{N}\right)\,\gamma_{k_{1}}^{(N)}\left(\frac{2b\alpha_{1}+k_{1}}{N}+pb^{2}\right)\,
   \gamma_{k_{2}}^{(N)}\left(\frac{2b\alpha_{2}+k_{2}}{N}+pb^{2}\right)\gamma_{k_{3}}^{(N)}\left(\frac{2b\alpha_{3}+k_{3}}{N}+pb^{2}\right)}
   \times\\
   \frac{1}{\gamma_{0}^{(N-k)}\left(\frac{bQ}{N}\right)\,\gamma_{0}^{(N-k)}\left(\frac{2b\alpha_{1}+k_{1}}{N}+(m-1)b^{2}\right)
   \,\gamma_{0}^{(N-k)}\left(\frac{2b\alpha_{2}+k_{2}}{N}+(m-1)b^{2}\right)\,\gamma_{0}^{(N-k)}\left(\frac{2b\alpha_{3}+k_{3}}{N}+(m-1)b^{2}\right)}
\end{multline}
where integers $k_{1}$, $k_{2}$, $k_{3}$ and $k$ satisfy the condition \eqref{k-def-1}. In \eqref{2D-Selberg-1}  the factor $\rho(k_{1},k_{2},k_{3})$ is defined by \eqref{rho}. We note that the parameters $A$, $B$ and $g$ are expressed in terms of the parameters $\alpha_{j}$ and $b$ as in \eqref{ABg-alpha-relation} while the parameters $\alpha_{1}$, $\alpha_{2}$ and $\alpha_{3}$ are supposed to be related by the screening condition (we recall that $n=mN-k$)
\begin{equation}
    \alpha_{1}+\alpha_{2}+\alpha_{3}+nb=Q.
\end{equation}
In \eqref{2D-Selberg-1} we used also the following notation
\begin{equation}
   \gamma_{k}^{(N)}(x)\overset{\text{def}}{=}
   \prod_{j=1}^{N-k}\gamma\left(x+\frac{(j-1)bQ}{N}\right)\prod_{j=N-k+1}^{N}\gamma\left(x-1+\frac{(j-1)bQ}{N}\right),
\end{equation}
and $\gamma_{0}^{(N-k)}(x)=\prod_{j=1}^{N-k}\gamma(x+(j-1)bQ/N)$.
Now it is easy to show that the function $\mathbb{C}^{k_1,k_2,k_3}(\alpha_1,\alpha_2,\alpha_3)$ defined by eq \eqref{C} satisfies
\begin{equation}\label{C-poles}
  \text{Res}\Bigl|_{\alpha-Q=-nb}\mathbb{C}^{k_1,k_2,k_3}(\alpha_1,\alpha_2,\alpha_3)=
   \mathcal{G}_{m}^{(N|\,k_1,k_2,k_3)}(A,B|g),
\end{equation}
where $\mathcal{G}_{m}^{(N|\,k_1,k_2,k_3)}(A,B|g)$ is given by \eqref{2D-Selberg-1}. 

In the dual case when parameters $k_{1}$, $k_{2}$ and $k_{3}$ are related as in \eqref{k-def-2} corresponding integral can be calculated in a  same way (we present some relevant formulae in the end of appendix \ref{PF}). The resulting integral satisfies the residue condition similar to \eqref{C-poles} with function $\tilde{\mathbb{C}}^{k_1,k_2,k_3}(\alpha_1,\alpha_2,\alpha_3)$ given by \eqref{C-dual}. We note that if $N$ is an odd number then only one condition \eqref{k-def-1} or \eqref{k-def-2} can be satisfied for given $k_{1}$, $k_{2}$ and $k_{3}$ and hence the three-point correlation function \eqref{C-def} (if it is non-zero) is given ether by \eqref{C} or by \eqref{C-dual} as in \eqref{C-prop-odd}. Contrary, for even $N$ both conditions  \eqref{k-def-1} and \eqref{k-def-2} can be valid at the same time. In this case it is natural to expect that the three-point function will be given by sum of \eqref{C} and \eqref{C-dual}.  The phenomenon of this type appears already for $N=2$ i.e. for supersymmetric LFT for the three-point function of two Ramond and one Neveu-Schwarz fields \cite{Poghosian:1996dw}. Our proposal for even $N$ is given by \eqref{C-prop-even}

We note that the residue condition \eqref{C-poles} and similar condition for the function \eqref{C-dual} are not enough to determine the three-point function $C^{k_1,k_2,k_3}(\alpha_1,\alpha_2,\alpha_3)$ completely. Moreover, as one can easily see the proposed exact expression \eqref{C-prop-odd}--\eqref{C-prop-even} has much more poles than predicted by \eqref{C-fun-int-pole}. Our conjecture is inspired by intuition adapted from the bosonic \cite{Zamolodchikov:1995aa} and supersymmetric \cite{Rashkov:1996jx} Liouville field theories and has to be justified by independent methods. In the next section we perform non-trivial check of our conjecture based on dual representation of the parafermionic LFT by the three-exponential model.
\section{Dual representation of parafermionic LFT}\label{3-exp-model}
In this section we consider dual representation of parafermionic LFT based on three-field model suggested in \cite{Fateev:1996ea}. Some notations which we are going to use in this section will overlap with notations used before.

We consider the theory with the action, which contains three exponents
\begin{equation}
  \begin{gathered}
     S_{+-}^{-}=M\exp\{\alpha_{1}\varphi_{1}-\alpha_{2}\varphi_{2}-i\alpha\varphi_{3}\},\quad 
     S_{-+}^{-}=M\exp\{-\alpha_{1}\varphi_{1}+\alpha_{2}\varphi_{2}-i\alpha\varphi_{3}\},\\
     S_{++}^{+}=M\exp\{\alpha_{1}\varphi_{1}+\alpha_{2}\varphi_{2}+i\alpha\varphi_{3}\} \label{V}
  \end{gathered}  
\end{equation}
where parameters $\alpha_{1},\alpha_{2}$ and $\alpha$ satisfy the condition:
\begin{equation}\label{c}
\alpha^{2}-\alpha_{1}^{2}-\alpha_{2}^{2}=1/2. 
\end{equation}
Namely
\begin{equation}
A=\int d^{2}x\left(  \frac{1}{16\pi}\sum_{i}^{3}(\partial_{\mu}\varphi
_{i})^{2}+S_{+-}^{-}+S_{-+}^{-}+S_{++}^{+}\right)  \label{a}
\end{equation}
The affine version of the theory \eqref{a} where the term $S^{+}_{--}$ is added to the action was considered in different regimes\footnote{There are three regimes in the affine theory when all parameters $\alpha$ and $\alpha_{j}$ are imaginary, when $\alpha$ is real and both $\alpha_{j}$ are imaginary and finally the regime considered in this paper when all parameters are real. In the first regime when all parameters $\alpha$ and $\alpha_{j}$ are imaginary the theory is unitary and its spectrum consists of quadruplet of fundamental particles and their bound states \cite{Fateev:1996ea}. The theory is integrable with two-particle $S$-matrix being proportional to the tensor product of two-soliton $S$-matrices of the sine-Gordon model. The integral representations for the form-factors of local fields in this theory were studied in \cite{Fateev:2004un}. The particle content and scattering theory in the remaining non-unitary regimes of affine theory are rather complicated and deserves further studies.} in  \cite{Fateev:1996ea,Baseilhac:1998eq,Fateev:2004un}. The theory \eqref{a} is non-unitarian CFT with the stress energy tensor \cite{Fateev:1996ea,Feigin:2001yq}
\begin{equation}
   T(z)=-\frac{1}{4}\sum_{i}^{3}(\partial_{z}\varphi_{i})^{2}+\frac{\partial_{z}^{2}\varphi_{1}}{4\alpha_{1}}+
   \frac{\partial_{z}^{2}\varphi_{2}}{4\alpha_{2}}+\frac{i\partial_{z}^{2}\varphi_{3}}{4\alpha} \label{t}
\end{equation}
and central charge $c=3-\frac{6}{4\alpha^{2}}+\frac{6}{4\alpha_{1}^{2}}+\frac{6}{4\alpha_{2}^{2}}$. It is convenient to write our parameters
$\alpha$, $\alpha_{j}$ in the form
\begin{equation}
     4\alpha^{2}=N+2,\quad4\alpha_{1}^{2}=\frac{Nb^{2}}{1+b^{2}},\quad4\alpha_{2}^{2}=\frac{N}{1+b^{2}}. \label{p}
\end{equation}
Then the central charge will have a form
\begin{equation}\label{cc}
   c=\frac{3N}{N+2}+\frac{6Q^{2}}{N}, 
\end{equation}
which is exactly the same as \eqref{central-charge}. The primary fields will be exponential operators
\begin{equation}
  V(\vec{a})=\exp\left(  a_{1}\varphi_{1}+a_{2}\varphi_{2}+ia_{3}\varphi_{3}\right)  \label{pf}
\end{equation}
with conformal dimensions
\begin{equation*}
  \Delta(\vec{a})=a_{1}\left(\frac{1}{2\alpha_{1}}-a_{1}\right)+a_{2}\left(\frac{1}{2\alpha_{2}}-a_{2}\right)-a_{3}\left(\frac{1}{2\alpha}-a_{3}\right).
\end{equation*}
The correlation functions of exponential fields can be calculated using Goulian-Lie method described above
\begin{multline}\label{si}
  \left\langle V(\vec{a}_{1})(x_{1})...V(\vec{a}_{r})(x_{r})\right\rangle=
  \frac{\Gamma(n_{2}-n_{1}-n_{3})}{\alpha_{1}}\frac{\Gamma(n_{1}-n_{2}-n_{3})}{\alpha_{2}}\frac{\Gamma(n_{1}+n_{2}-n_{3})}{\alpha}
  \times\\\times
  \left\langle V(\vec{a}^{(1)})(x_{1})...V(\vec{a}^{(r)})(x_{r})(-\mathbf{S}_{+-}^{-})^{n_{1}}(-\mathbf{S}_{-+}^{-})^{n_{2}}(-\mathbf{S}_{++}^{+})^{n_{3}}\right\rangle _{FF}
\end{multline}
where $\mathbf{S}_{+-}^{-}$ $=M\int d^{2}xS_{+-}^{-}(x)...$ and index $FF$ denotes that expectation values are taken over free field vacuum. The numbers $n_{i}$ are restricted by the conditions
\begin{equation*}
  \begin{gathered}
   \sum_{j=1}^{r}a_{1}^{(j)}+(n_{1}-n_{2}+n_{3})\alpha_{1} =\frac{1}{2\alpha_{1}},\quad
   \sum_{j=1}^{r}a_{2}^{(j)}+(n_{2}-n_{1}+n_{3})\alpha_{1} =\frac{1}{2\alpha_{2}},\\
   \sum_{j=1}^{r}a_{3}^{(j)}+(n_{3}-n_{1}-n_{2})\alpha=\frac{1}{2\alpha}.
  \end{gathered} 
\end{equation*}
Free-field correlation function in the r.h.s. in \eqref{si} can be expressed in terms of Coulomb integral only if all the parameters $n_{i}$ are non-negative integers otherwise eq \eqref{si} is rather formal. The calculation of the correlation functions in our theory is very similar to the same calculation for sine-Liouville model \cite{Fateev-unpublished} (see also \cite{Ribault:2005wp}) and based on usage of certain integral identity (see for example eq (1.13) in \cite{Fateev:2007qn}). We prefer to hide details here and present only final answer\footnote{Interested reader will be able to obtain \eqref{corf} with the help of integral identity (1.13) from \cite{Fateev:2007qn} in a simple way.}. Let us suppose the condition
\begin{equation}
   \sum_{j=1}^{r}a_{3}^{(j)}-\frac{1}{2\alpha}=r^{\prime}\alpha\label{r'}
\end{equation}
with some integer number $r'$ and define $p$ as 
\begin{equation}
   p=r-r^{\prime}-2,
\end{equation}
then correlation function of the fields $V(\vec{a}_{j})(x_{j})$ can be expressed in terms of correlation functions of two Liouville field theories with coupling constants $2\alpha_{1}$ and $2\alpha_{2}$ with cosmological constants $-\mu^{2}\gamma(-4\alpha_{1}^{2})$ and $-\mu^{2}\gamma(-4\alpha_{2}^{2})$ (we label corresponding correlation functions with lower subscripts $1$ and $2$ respectively). Namely,
\begin{multline}\label{corf}
  \langle V(\vec{a}_{1})(x_{1})...V(\vec{a}_{r})(x_{r})\rangle=\frac{(\pi M)^{r^{\prime}}}{\Gamma(1+p)}
  \prod\limits_{i=1}^{r}\mathcal{N}(\vec{a}_{j})
  \int d^{2}t_{1}\dots d^{2}t_{p}\;\Omega(t_{1},\dots,t_{p}|x_{1},\dots,x_{r})\\
  \langle V_{-\alpha_{1}}(t_{1})\dots V_{-\alpha_{1}}(t_{p})V_{a_{1}^{(1)}+\alpha_{1}}(x_{1})\dots V_{a_{1}^{(r)}+\alpha_{1}}(x_{r})\rangle_{1}
  \langle V_{-\alpha_{2}}(t_{1})\dots V_{-\alpha_{2}}(t_{p})V_{a_{2}^{(1)}+\alpha_{2}}(x_{1})\dots V_{a_{2}^{(r)}+\alpha_{2}}(x_{r})\rangle_{2}
\end{multline}
where $V_{a_{i}}$ are the standard exponential operators in Liouville field theory with conformal dimensions $a_{i}(Q-a_{i})$ and factors $\mathcal{N}(\vec{a}_{j})$ are
\begin{equation}
   \mathcal{N}(\vec{a}_{j})=\gamma(1+2\alpha a_{3}^{(j)}-2\alpha_{1}a_{1}^{(j)}-2\alpha_{2}a_{2}^{(j)}). \label{NA}
\end{equation}
The factor $\Omega(t_{1},\dots,t_{p}|x_{1},\dots,x_{r})$ in equation (\ref{corf}) is given by
\begin{equation}\label{fac}
  \Omega(t_{1},\dots,t_{p}|x_{1},\dots,x_{r})=
 \frac{{\displaystyle\prod\limits_{i<j}^{p}}|t_{i}-t_{j}|^{4\alpha^{2}}{\displaystyle\prod\limits_{k<m}^{r}}
        |x_{k}-x_{m}|^{4(a_{3}^{(k)}+\alpha)(a_{3}^{(m)}+\alpha)}}
        {{\displaystyle\prod\limits_{i}^{p}}{\displaystyle\prod\limits_{m}^{r}} |t_{i}-x_{m}|^{4\alpha(a_{3}^{(m)}+\alpha)}}.
\end{equation}
We see that our model, similar to sine-Liouville model admits the separation of variables and  possesses the dual sigma model representation (see appendix \ref{MSP-approximation}). The general three-point correlation function in our CFT is rather complicated object which we suppose to study in separate publication. However it is rather easy to calculate the two point functions which coincide with special reflection coefficients of the fields $V(\vec{a}).$ We normalize our fields $V(\vec{a})$
by the condition
\begin{equation}\label{NC}
   \left\langle V(a_{1},a_{2},a_{3})V\left(  \frac{1}{2\alpha_{1}}-a_{1},\frac{1}{2\alpha_{2}}-a_{2},\frac{i}{2\alpha}-a_{3}\right)  \right\rangle=1
\end{equation}
Then the reflection amplitude defined as $V(a_{1},a_{2},a_{3})=R_{1}(\vec
{a})V\left(  \frac{1}{2\alpha_{1}}-a_{1},a_{2},a_{3}\right)  $ coincides with
the two point function
\begin{equation}
R_{1}(a_{1},a_{2},a_{3})=\left\langle V(a_{1},a_{2},a_{3})V\left(  a_{1}%
,\frac{1}{2\alpha_{2}}-a_{2},\frac{i}{2\alpha}-a_{3}\right)  \right\rangle
.\label{R1}
\end{equation}
In the similar way can be defined amplitude $R_{2}(\vec{a})$ and amplitude
associated with double reflection
\[
V(a_{1},a_{2},a_{3})=R_{1,2}(a_{1},a_{2},a_{3})V\left(  \frac{1}{2\alpha_{1}%
}-a_{1},\frac{1}{2\alpha_{2}}-a_{2},a_{3}\right)
\]
which coincides with the  two-point function
\begin{equation}
R_{1,2}(a_{1},a_{2},a_{3})=\left\langle V(a_{1},a_{2},a_{3})V\left(
a_{1},a_{2},\frac{i}{2\alpha}-a_{3}\right)  \right\rangle \label{R12}%
\end{equation}
The amplitude associated with full reflection
\[
V(a_{1},a_{2},a_{3})=R(\vec{a})V\left(  \frac{1}{2\alpha_{1}}-a_{1},\frac
{1}{2\alpha_{2}}-a_{2},\frac{i}{2\alpha}-a_{3}\right)
\]
is equal to the two-point function
\begin{equation}
R(a_{1},a_{2},a_{3})=\left\langle V(a_{1},a_{2},a_{3})V\left(  a_{1}%
,a_{2},a_{3}\right)  \right\rangle \label{R123}%
\end{equation}
It is easy to derive from eqs.(\ref{corf},\ref{NA}) that functions
$R_{1},R_{2}$ and $R_{1,2}$ have the form:
\begin{align}
  &R_{1}(\vec{a})=\frac{R_{2\alpha_{1}}^{L}(a_{1}+\alpha_{1})}
  {\gamma(2\alpha_{1}a_{1}+2\alpha_{2}a_{2}-2\alpha a_{3})\gamma(2\alpha_{1}a_{1}-2\alpha_{2}a_{2}+2\alpha a_{3})} \label{RR},\\
  &R_{2}(\vec{a})=\frac{R_{2\alpha_{2}}^{L}(a_{2}+\alpha_{2})}{\gamma(2\alpha_{1}a_{1}+2\alpha_{2}a_{2}-2\alpha a_{3})
  \gamma(-2\alpha_{1}a_{1}+2\alpha_{2}a_{2}+2\alpha a_{3})} \label{RRR}
\end{align}
and
\begin{equation}
R_{1,2}(\vec{a})=\frac{R_{2\alpha_{1}}^{L}(a_{1}+\alpha_{1})R_{2\alpha_{1}%
}^{L}(a_{2}+\alpha_{2})}{\gamma(2\alpha_{1}a_{1}+2\alpha_{2}a_{2}-2\alpha
a_{3})\gamma(-1+2\alpha_{1}a_{1}+2\alpha_{2}a_{2}+2\alpha a_{3})} \label{Rr}.
\end{equation}
Here the functions $R_{2\alpha_{i}}^{L}$ are the reflection amplitudes (two-point correlation functions in the Liouville CFT with coupling constants $2\alpha_{i}$ and cosmological constants $\mu_{i}=-\pi M^{2}\gamma(-4\alpha_{i}^{2})$)
\begin{equation}
  R_{2\alpha_{i}}^{L}(a_{i}+\alpha_{i})=\left(  \frac{\pi M}{4\alpha_{i}^{2}}\right)  ^{(\frac{1}{\alpha_{i}}-4a_{i})/2\alpha_{i}}
  \frac{\Gamma(4\alpha_{i}(a_{i}-\frac{1}{4\alpha_{i}})\Gamma(1+\frac{1}{\alpha_{i}}(a_{i}-\frac{1}{4\alpha_{i}})}
  {\Gamma(-4\alpha_{i}(a_{i}-\frac{1}{4\alpha_{i}})\Gamma(1-\frac{1}{\alpha_{i}}(a_{i}-\frac{1}{4\alpha_{i}})}.\label{RL}%
\end{equation}
It is easy to see that reflection amplitudes satisfy the  ``associativity condition''
\[
  R_{1,2}(a_{1},a_{1},a_{3})=R_{1}(a_{1},a_{2},a_{3})R_{2}(\frac{1}{2\alpha_{1}}-a_{1},a_{2},a_{3})
\]
The reflection amplitude $R_{3}$ defined as $V(a_{1},a_{2},a_{3})=R_{3}(\vec{a})V\left(  a_{1},a_{2},\frac{i}{2\alpha}-a_{3}\right)  $ coincides with
correlation function
\[
R_{3}(\vec{a})=\left\langle V(a_{1},a_{2},a_{3})V\left(  \frac{1}{2\alpha_{1}%
}-a_{1},\frac{1}{2\alpha_{2}}-a_{2},a_{3}\right)  \right\rangle .
\]
It can be calculated with the result
\begin{multline*}
  R_{3}\left(  \vec{a}\right)=\left(  \frac{\pi M}{4\alpha^{2}}\right)^{(\frac{1}{\alpha}-4a_{3})/2\alpha}\frac
  {\Gamma(-4\alpha(a_{3}-\frac{1}{4\alpha}))
  \Gamma(1+\frac{1}{\alpha}(a_{3}-\frac{1}{4\alpha}))}{\Gamma(4\alpha(a_{3}-
  \frac{1}{4\alpha}))\Gamma(1-\frac{1}{\alpha}(a_{3}-\frac{1}{4\alpha}))}\times\\\times
 \gamma(1-\alpha_{3}a_{3}+\alpha_{2}a_{2}-\alpha a_{3})\gamma(1-\alpha_{3}a_{3}-\alpha_{2}a_{2}+\alpha a_{3}).
\end{multline*}
The amplitude $R\left(  \vec{a}\right)  =R_{3}\left(  \vec{a}\right)R_{1,2}\left(  \vec{a}\right)$. If we introduce the momentums $p_{1}$, $p_{2}$  and $q$ by the relation
\begin{equation*}
   2\alpha_{j}a_{j}=\frac{1}{2}+i2\alpha_{j}p_{j},\qquad 2\alpha a_{3}=\frac{1}{2}+2\alpha q,
\end{equation*} 
then denominator in eq.(\ref{Rr}) will have a form
\[
\gamma(\frac{1}{2}+2i\alpha_{1}p_{1}+2i\alpha_{2}p_{2}-2\alpha q)\gamma
(\frac{1}{2}+2i\alpha_{1}p_{1}+2i\alpha_{2}p_{2}+2\alpha q),
\]
and it easy to see that amplitude $R_{1,2}$ will be unitarian with respect to $p_{j}\rightarrow-p_{j}$. 

At this point we will not discuss here any more the general CFT (\ref{a}) and start to consider the special subspace of our model which is related with parafermionic Liouville model considered above. We have seen already that the central charge of our model \eqref{cc} coincides with the central charge of parafermionic Liouville model \eqref{central-charge}. We show that certain three-point correlation functions \eqref{C-def} in parafermionic LFT can be expressed in terms of special correlation functions of our model.  It is well known that  the parafermionic algebra admits the free field representation in terms of two fields $\phi_{1}$ and $\phi_{2}$ with the stress energy tensor \cite{Griffin:1988tf}
\begin{equation}
 T_{\text{PF}}(z)=-\frac{1}{4}(\partial_{z}\phi_{1})^{2}-\frac{1}{4}(\partial_{z}\phi_{2})^{2}+\frac{i}{4\alpha}\partial_{z}^{2}\phi_{2}. \label{TP}%
\end{equation}
Parafermionic currents $\Psi(z)$ and $\Psi^{+}(z)$ can be represented in terms of the chiral parts of these fields as \cite{Griffin:1988tf}
\begin{equation}\label{FFP}
  \begin{aligned}
  &\Psi(z)=\left(  \frac{N+2}{4N}\right)  ^{1/2}\left(  i\partial_{z}\phi_{2}+\left(  \frac{N}{N+2}\right)  ^{1/2}\partial_{z}\phi_{1}\right)
     \exp\left(  -\phi_{1}(z)/\sqrt{N}\right),\\
   &\Psi^{+}(z)=\left(  \frac{N+2}{4N}\right)  ^{1/2}\left(  i\partial_{z}\phi_{2}-\left(  \frac{N}{N+2}\right)  ^{1/2}\partial_{z}\phi_{1}\right)
\exp\left(  \phi_{1}(z)/\sqrt{N}\right)
\end{aligned}
\end{equation}
The order fields $\sigma_{k}$ can can be represented in terms of these fields as
\begin{equation}
\sigma_{k}=\mathbf{N}_{k}\exp\left(  -\frac{k\phi_{1}}{2\sqrt{N}}\right)  \exp\left(
-\frac{ik\phi_{2}}{4\alpha}\right),  \label{sk}
\end{equation}
where $\mathbf{N}_{k}$ is the normalization factor which will be specified below.
To establish relation between two models it is important to express fields
$\phi_{i}$, $\varphi$ from \eqref{a} in terms of the fields $\varphi_{i}$ from \eqref{TP} and vice versa. For that it is convenient to compare the stress energy tensor (\ref{t}) with stress
energy tensor of parafermionic Liouville model
\begin{equation}\label{tpfl}
  T_{\text{PFL}}=T_{\text{PF}}-N(\partial\varphi)^{2}+Q \partial^{2}\varphi
\end{equation}
It is easy to see that field $\phi_{2}$ can be identified with field $\varphi_{3}$. For other fields we derive
\begin{equation}\label{rel}
\begin{aligned}
   &\varphi_{1}=4\alpha_{2}\varphi-\frac{2\alpha_{1}\phi_{1}}%
  {\sqrt{N}},\quad&&\varphi_{2}=4\alpha_{1}\varphi+\frac{2\alpha_{2}\phi_{1}}{\sqrt{N}},\\
  &\varphi=\frac{\alpha_{2}\varphi_{1}}{N}+\frac{\alpha_{1}\varphi_{2}}{N},\quad
  &&\phi_{1}=\frac{2\alpha_{1}\varphi_{1}}{\sqrt{N}}-\frac{2\alpha_{2}\varphi_{2}}{\sqrt{N}}.
\end{aligned}
\end{equation}
It is interesting to express the screening charges in parafermionic Liouville model which are 
\begin{equation*}
  J_{1}=\Psi(z)e^{2b\varphi(z)},\qquad J_{2}=\Psi^{+}(z)e^{2b\varphi(z)}, 
\end{equation*}
in terms of fields $\varphi_{i}.$ We derive using the relation $b=\frac{\alpha_{1}}{\alpha_{2}}$ that
\begin{equation}\label{J}
\begin{aligned}
  &J_{1}=\left(  \frac{1}{4N}\right)  ^{1/2}(i\alpha\partial_{z}\varphi_{3}+\alpha_{1}\partial_{z}\varphi_{1}-\alpha_{2}\partial_{z}\varphi_{2}%
     )\exp\left(  \frac{\varphi_{2}(z)}{2\alpha_{2}}\right),\\
  &J_{2}=\left(  \frac{1}{4N}\right)  ^{1/2}(i\alpha\partial_{z}\varphi_{3}-\alpha_{1}\partial_{z}\varphi_{1}+\alpha_{2}\partial_{z}\varphi_{2}%
     )\exp\left(\frac{\varphi_{1}(z)}{2\alpha_{1}}\right).
\end{aligned}
\end{equation}
It is easy to check that fields $J_{1}$ and $J_{2}$ commute with all screening
charges $S_{+-}^{-}...$ given by eq(\ref{V}) and operators $\int d^{2}x\mu
_{i}^{\prime}J_{i}\overline{J_{i}}$ can be used as additional screening
charges for calculation of the correlation functions. We note that there is
one additional screening field $J_{3}$%
\begin{equation}
J_{3}=\left(  \frac{1}{4N}\right)  ^{1/2}(-i\alpha\partial_{z}\varphi
_{3}+\alpha_{1}\partial_{z}\varphi_{1}-\alpha_{2}\partial_{z}\varphi_{2}%
)\exp\left(  \frac{i\varphi_{3}(z)}{2\alpha}\right).  \label{j3}%
\end{equation}

Now we can consider the simplest three point correlation function in parafermionic Liouville model. Namely, $C^{k_1,k_2,k_3}(a_1,a_2,a_3)$ where $k_{1}+k_{2}+k_{3}=N$ and take for simplicity $N$ odd. In this case $r^{\prime}=-1$ and hence $p=0$ so our correlation
function does not contain any integrals. It follows from eqs \eqref{rel} and \eqref{sk} that the field $V_{a}^{(k)}$ has a form
\begin{equation*}
   V_{a}^{(k)}=\mathbf{N}_{k}\exp\left(  -\frac{k(\alpha_{1}\varphi_{1}-\alpha_{2}\varphi_{2})}{N}\right)  \exp\left(  -\frac{ik\varphi_{3}}{4\alpha}\right)
   \exp\left(  a\frac{(2\alpha_{2}\varphi_{1}+2\alpha_{1}\varphi_{2})}{N}\right)  .
\end{equation*}
The normalization factors $\mathcal{N}(\vec{a})$ in eq (\ref{corf}) in this case are
\[
\mathcal{N}(\vec{a}_{i})=\mathcal{N}_{k}(a_{i})=\frac{1}{\gamma((8\alpha_{1}\alpha_{2}a_{i}+4k_{i}\alpha_{2}^{2})/N)}%
\]
and the there-point correlation function $C^{k_1,k_2,k_3}(a_1,a_2,a_3)$ with $k_{1}+k_{2}+k_{3}=N$ can be written in the form
\begin{equation}
C^{k_1,k_2,k_3}(a_1,a_2,a_3)=
\prod_{i=1}^{3}\mathbf{N}_{k_{i}}\,\mathcal{N}_{k_{i}}(a_{i})\times C_{2\alpha_{1}}(A_{1},A_{2},A_{3}%
)C_{2\alpha_{2}}(B_{1},B_{2},B_{3}) \label{tpf}%
\end{equation}
where $C_{2\alpha_{i}}$ denotes the three-point correlation function in Liouville CFT  with the
coupling constant $2\alpha_{i}$ and
\[
A_{i}=\frac{2\alpha_{2}a_{i}+\alpha_{1}(N-k_{i})}{N};\quad B_{i}=\frac
{2\alpha_{1}a_{i}+\alpha_{2}(N+k_{i})}{N}.
\]
One can use the following  easily established relations
\begin{multline}
\mathcal{N}_{k}(a)  \Upsilon_{2\alpha_{1}}\left(  \frac{4\alpha_{2}a+2\alpha_{1}k}%
{N}\right)  \Upsilon_{2\alpha_{2}}\left(  \frac{4\alpha_{1}a+2\alpha_{2}%
(N+k)}{N}\right)=\\=
\kappa\Upsilon_{2\alpha_{1}}\left(  \frac{4\alpha_{2}a+2\alpha_{1}(N-k)}%
{N}\right)  \Upsilon_{2\alpha_{2}}\left(  \frac{4\alpha_{1}a+2\alpha_{2}k}%
{N}\right)  \label{ka}%
\end{multline}
and
\begin{multline}\label{kaa}
   \Upsilon_{N-k}^{(N)}(a,b)=
   \left(2\alpha_{2}\right)  ^{-\left(  \frac{2\alpha_{2}a+2\alpha_{1}k}{N}-\frac{\alpha_{1}^{2}+1}{\alpha_{1}}\right)  }\left(  2\alpha_{1}\right)  ^{-\left(
   \frac{2\alpha_{1}a+2\alpha_{2}(N-k)}{N}-\frac{\alpha_{2}^{2}+1}{\alpha_{2}}\right)}\times\\\times
   \Upsilon\left(\frac{2\alpha_{2}a+2\alpha_{1}k}{N},2\alpha_{1}\right)  \Upsilon\left(\frac{2\alpha_{1}a+2\alpha_{2}(N-k)}{N},2\alpha_{2}\right),
\end{multline}
where $\kappa=\left(  2\alpha_{2}\right)  ^{-\frac{16\alpha_{2}\alpha
_{1}a+8\alpha_{2}^{2}k}{N}+1}$,  $\alpha_{1}$ and $\alpha_{2}$ are taken as in  \eqref{p}, $\Upsilon_{k}^{(N)}(x,b)$ is given by  \eqref{Upsilon-N-def} and $\Upsilon(x,b)$ is ordinary Upsilon-function. Using \eqref{ka} and \eqref{kaa} we find that if we take into account that cosmological constants are equal to $-\gamma(-4\alpha_{i}^{2})M^{2}$ as well as the relation between the parameters $\mu$ in parafermionic LFT and $M$ in three-exponential model derived in \cite{Baseilhac:1998eq}
\begin{equation}
 \pi\mu\gamma\left(  \frac{bQ}{N}\right)  =\left(  \frac{\pi M}{4\alpha_{2}^{2}}\right)  ^{2\frac{bQ}{N}} \label{ba}%
\end{equation}
where $\frac{bQ}{N}=\frac{1}{4\alpha_{2}^{2}}$ and also the explicit form of normalization factors $\mathbf{N}_{k}$
\begin{equation*}
   \mathbf{N}_{k}^{2}=\gamma\left(\frac{N-k+1}{N+2}\right),
\end{equation*}
computed in \cite{Baseilhac:1998eq} we obtain that \eqref{tpf} coincides exactly with the expression \eqref{C-dual} for the three-point correlation function proposed above in the case $k_{1}+k_{2}+k_{3}=N$. 

For the calculation in the more general case when parameters $k_{j}$ are related by the condition $k_{1}+k_{2}+k_{3}=N+2k$ we should introduce $k$ additional screening charges generated by the field $J_{3}$ (\ref{j3}). The calculations become more involved but they give exactly the same answer as in \eqref{C-dual}.

Now we can calculate the reflection amplitude for parafermionic Liouville field theory. The simple calculation gives a result that function defined by
the relation
\begin{equation}
V_{a}^{(k)}=R_{k}(a)V_{Q-a}^{(N-k)} \label{R}%
\end{equation}
is equal
\[
R_{k}(a)=\left(  \frac{\pi\mu}{b^{4/N}}\gamma\left(  \frac{bQ}%
{N}\right)  \right)  ^{\frac{2a-Q}{b}}\gamma\left(\frac{b(2a-Q)+k}{N}\right)\gamma
\left(\frac{b^{-1}(2a-Q)+(N-k)}{N}\right)b^{2\frac{N-2k}{N}}.
\]
These functions satisfy the condition $R_{k}(a)R_{N-k}(Q-a)=1.$ For arbitrary
$b$ functions $R_{0}(a)=R_{N}(a)$ and $R_{\frac{N}{2}}(a)$ (for even $N)$
satisfy unitarity conditions $R_{0}(a)R_{0}(Q-a)=R_{\frac{N}{2}}(a)R_{\frac
{N}{2}}(Q-a)=1.$ For $b=1$ all functions $R_{k}(a)$ satisfy this condition and
full theory is unitary. Using this reflection amplitude we can derive all three-point functions from the three-point function with $k_{1}+k_{2}+k_{3}=N+2k$. In this way we have an alternative derivation of the result obtained in section \ref{integrals}.
\section{Concluding remarks and open problems}\label{conclusions}
In this section we discuss briefly some subjects and questions which were not
considered in the main body of the paper.
\begin{enumerate}
\item The important question in the analysis of any CFT is the description of the
chiral algebra of theory. Besides the non-local symmetry generated by the field
(\ref{G}) our theory possesses the infinite dimensional local symmetry generated
by holomorphic $W-$current $W_{4}$ and other currents with even spin which
appear in the operator product expansion of this current with itself and with
these new holomorphic currents\footnote{The field $W_{4}$  and other
$W-$currents appear in the OPE of two fields (\ref{G}) in the sector
corresponding to identity operator.}. These currents form the closed $W$
algebra. For three exponential model considered in Section 4 the explicit form of
these currents in terms of three fields can be found in \cite{Fateev:1996ea,Baseilhac:1998eq,Feigin:2001yq}. The fields
\begin{equation}
V_{\alpha}^{(k,q)}=\Phi_{(q,q)}^{k}e^{2\alpha\varphi} \label{VQ}%
\end{equation}
where fields $\Phi_{(q,q)}^{k}$ are defined in appendix A are the primary
fields of this $W-$ algebra. The full space of fields in  parafermionic LFT, local with
respect the fields $V_{\alpha}^{k}$ can be derived by application of this $W$
algebra to the  primary fields
\begin{equation}
F=\sum_{k=0}^{N-1}\sum_{q=0}^{N-1}\left[  V_{\alpha}^{(k,q)}\right]
_{W,\overline{W}} \label{F}%
\end{equation}
\item In this paper we calculated the three-point correlation functions for the fields
$V_{\alpha}^{k}.$ Three point correlation functions of more general local fields
$V_{\alpha}^{(k,q)}=\Phi_{(q,q)}^{k}e^{2\alpha\varphi}$
\[
\,\,\left\langle V_{\alpha_{1}}^{(k_{1},q_{1})}(0)V_{\alpha_{2}}^{(k_{2}%
,q_{2})}(1)V_{\alpha_{3}}^{(k_{3},q_{3})}(\infty)\right\rangle
\]
can be calculated by generalization of our method for this case. The analog of
the polynomials $\mathbf{P}^{(N|k_{1},k_{2}k_{3})}$ for this case can be
derived by the proper fusions of the points in this polynomials dictated by
the parafermionic algebra. Namely, any field $\Phi_{(q,q)}^{k}$ can be
obtained in the operator product expansion of \textquotedblleft
order\textquotedblright\ field $\sigma_{k}$ with parafermionic currents $\Psi$
and $\overline{\Psi}.$ As an example we give here the three-point correlation functions of two
operators $V_{\alpha_{i}}^{(k)}$ with operator $V_{\alpha}^{(k,q)}$ $\ $which
determine the operator product expansion of two fields $V_{\alpha_{i}}^{(k_{i})}$
(with $k_{1}\geq k_{2}$ and $k_{1}+k_{2}<N$)
\begin{equation}
V_{\alpha_{1}}^{k_{1}}V_{\alpha_{2}}^{k_{2}}=\sum_{\alpha}\sum_{j=k_{1}-k_{2}%
}^{k_{1+}k_{2}}\sum_{q=0}^{N-1}C_{\alpha,k_{1}-k_{2}+2j,q}^{(\alpha_{1}%
,k_{1}|\alpha_{2}k_{2})}V_{\alpha}^{(k_{1}-k_{2}+2j,q)}. \label{SC}%
\end{equation}
The structure constant $C_{\alpha,k,q}^{(\alpha_{1},k_{1}|\alpha_{2}k_{2})}$
\ in this expansion can be expressed through the three-point correlation function $\langle V_{\alpha_{1}}^{k_{1}}(0)V_{\alpha_{2}}^{k_{2}}(1)V_{\alpha
}^{(k,k+2q)}(\infty)\rangle$. For $k_{1}+k_{2}+k=2l<N$ it has a form
\begin{multline}
\left\langle V_{\alpha_{1}}^{k_{1}}(0)V_{\alpha_{2}}^{k_{2}}(1)V_{\alpha
}^{(k,k+2q)}(\infty)\right\rangle =A(\alpha,k,q,b)
 \left[\pi\mu\gamma\left(\frac{bQ}{N}\right)b^{-\frac{2bQ}{N}}\right]^{\frac{Q-\alpha}{b}}
    \rho(k_{1},k_{2},k)\times\\\times
    \frac{\Upsilon_{0}^{(N)}(b)\Upsilon_{k_1}^{(N)}(2\alpha_1)\Upsilon_{k_2}^{(N)}(2\alpha_2)\Upsilon_{k+2q}^{(N)}(2\alpha_3)}
    {\Upsilon_{l+q}^{(N)}(\alpha_1+\alpha_2+\alpha_3-Q)\Upsilon_{l-k_1+q}^{(N)}(\alpha_2+\alpha_3-\alpha_1)
    \Upsilon_{l-k_2+q}^{(N)}(\alpha_1+\alpha_3-\alpha_2)\Upsilon_{l-k-q}^{(N)}(\alpha_1+\alpha_2-\alpha_3)},
\end{multline}
here function $A(\alpha,k,q,b)$ for $0<q\leq N-k$ has a form
\[
A=\frac{(N-k-q)!N!^{2}}{q!(N-k)!(N-q)!^{2}}\,\,
{\displaystyle\prod\limits_{j=0}^{q-1}}
\frac{(-N^{2})^{j}}{(2\alpha+(N-k-2q)b+jQ)^{2}};\quad0<2q\leq N-k
\]
and for $N-k<2q\leq2N-2k$
\[
A=\frac{(N-k-q)!N!^{2}}{q!(N-k)!(N-q)!^{2}}
{\displaystyle\prod\limits_{j=1}^{N-k-q}}
\frac{(-N^{2})^{j}}{(2\alpha+(N-k-2q)b+(N-k-j-1)Q)^{2}};
\]
For $q>N-k$ \ the function $A(\alpha,k,q,b)$ can be calculated using the
\textquotedblleft duality\textquotedblright\ relation 
$$
A(\alpha,k,q,b)=A(\alpha,N-k,N-q,\frac{1}{b}).
$$
\item The important role in studying of parafermionic LFT play the degenerate fields
$V_{-nb/2-m/2b}^{k}$ where non-negative integers $n,m$ satisfy the relation
$m-n=k$ $(\operatorname{mod}N).$ The OPE of fields $V_{\alpha}^{k}$ (and
$V_{\alpha}^{(k,q)}$) with these fields give only finite number of field
$V_{\alpha^{\prime}}^{(k^{\prime},q^{\prime})}.$ The four point correlation
functions of the degenerate field with three arbitrary fields $V_{\alpha_{i}%
}^{k_{i}}$ can be represented by the finite dimensional integrals and satisfy the differential equations. As an example we give here the simplest four
point functions in parafermionic LFT. Let $i+j+k=l$ \ where $2l<N.$ Then
\[
\left\langle V_{\alpha_{3}}^{i}(\infty)V_{\alpha_{2}}^{j}(1)V_{\alpha_{1}}%
^{k}(0)V_{-(N-l)b/2}^{l}(x)\right\rangle =\text{const}\,|z|^{2\delta_{1}}%
|1-z|^{2\delta_{2}}\mathbf{J}_{N-l}(A_{k},B_{j},C_{l}|x)
\]
where $\delta_{1}=\frac{\left(  N-l\right)  }{N}(2\alpha_{1}b+\frac
{k(1+N)}{\left(  N+2\right)  })$, $\delta_{2}=\frac{\left(  N-l\right)  }%
{N}(2\alpha_{2}b+\frac{j(1+N)}{\left(  N+2\right)  })$ and
\[
\mathbf{J}_{N-l}(A_{k},B_{j},C_{l}|x)=\int
{\displaystyle\prod\limits_{p=1}^{N-l}}
d^{2}t_{p}|t_{p}|^{2A_{k}}|t_{p}-1|^{2B_{j}}|t_{p}-x|^{2C_{l}}
{\displaystyle\prod\limits_{m<n}^{N-l}}
|t_{m}-t_{n}|^{-4bQ/N}
\]
with 
\begin{equation*}
\begin{gathered}
  A_{k}=\frac{b(\alpha-2\alpha_{1}-Q-k/b-(N-l)b/2)}{N};\quad B_{j}=\frac{b(\alpha-2\alpha_{2}-Q-j/b-(N-l)b/2)}{N};\\
  C_{l}=\frac{b(Q-\alpha-l/b+(N-l)b/2)}{N},
\end{gathered}
\end{equation*}
here $\alpha=\alpha_{1}+\alpha_{2}+\alpha_{3}$.
Using the results of paper \cite{Fateev:2009me} where the correspondence between the
integral $\mathbf{J}_{m}(A,B,C|x)$ and differential equation of order $m+1$
was established it is easy to derive the differential equations of order
$N-l+1$ for our correlation functions.
\item The degenerate fields play even more important role if we consider our
theory in the region $b^{2}<0.$ For $|b|<1$ parameter \ $\alpha_{1}^{2}$ in eq
(\ref{p}) becomes negative and we are in the other regime of the three exponential model. It
is convenient to take $\ -b^{2}=\frac{p}{p+n}$ then as it follows from the
results of \cite{Fateev:1996ea,Feigin:2001yq} for integer values of $p>2,$ three field model
after quantum group restriction describe rational unitarian CFT $M(N,p)$ with
central charge
\begin{equation}
c_{N,p}=\frac{3(p-2)}{N+p}\left(  \frac{p+N+2}{p(N+2)}\right)  \label{CC}%
\end{equation}
which is $\frac{SU_{2}(N)\otimes SU(p-2)}{SU(N+p-2)}$ coset CFT.\footnote{For
non-integer but rational values of parameter $p$ the model is rational but
non-unitarian.} The conformal dimensions of the basic fields $\phi_{n,m}^{k}$
in this model are characterized by three integers $k\leq N,m<p$ and $n<p+N,$
with $k=n-m$ $(\operatorname{mod}N)$ and are
\begin{equation}
\Delta_{m,n}=\frac{k(N-k)}{2N(N+2)}+\frac{(pn-(p+N)m)^{2}}{4pN(p+N)}\label{CD}%
\end{equation}
The fields $\phi_{n,m}^{k}$ up to normalization calculated in \cite{Baseilhac:1998eq} factor can
be represented as $\phi_{n,m}^{k}\rightarrow\sigma_{k}e^{-2i\omega_{mn}\varphi
}$ with $2\omega_{mn}=b(n-1)-(m-1)/b.$ The screening charges that are used to
calculate the correlation functions in the $M(N,p)$ theory have a form \cite{Kastor:1987dk}
\[
Q_{1}=\int d^{2}x\Psi\overline{\Psi}e^{i2b\varphi};\quad Q_{2}=\int d^{2}%
x\Psi^{+}\overline{\Psi}^{+}e^{-i2\varphi/b}.
\]
It means that correlation functions in this theory can be derived from the
correlation functions of the degenerate fields in parafermionic LFT by substitution
$b\rightarrow ib.$ In particular the structure constants in the $M(n,p)$
theory can be expressed in terms of integrals calculated in Section 3.

It was shown in ref \cite{Jimbo-Miwa} that the
CFT models $M(N,p)$ describe the critical behavior of generalized RSOS
statistical models. The spin variable $s$ in these models takes the integer
values with the restriction $|s-s^{\prime}|\leq N$ for the neighboring sites
of lattice. These models can be considered at the random (dynamical) lattice.
It is natural to think that the critical behavior of these \textquotedblleft
random\textquotedblright\ models is described by CFT $M(N,p)$ coupled with
parafermionic LFT. Solving parafermionic LFT on a  sphere we will be able to find KPZ
critical exponents and calculate the correlation functions in the coupled
theory. To derive more interesting quantities as the partition function on the
fluctuating disk we should solve boundary parafermionic LFT. One of the steps in this
direction is the calculation of bulk-boundary parafermionic Selberg integrals
which generalize the integrals calculated in \cite{Fateev:2007wk} for the Liouville boundary CFT.
\end{enumerate}
\section*{Acknowledgments}
This work was supported, in part, by cooperative CNRS-RFBR  grant PICS-09-02-93064. Work of A.~L. was supported  by DOE grant DE-FG02-96ER40949,  by RFBR  initiative interdisciplinary project grant 09-02-12446-OFI-m  and by Russian Ministry of Science and Technology under the Scientific Schools grant 3472.2008.2. The research of M.~B. and A.~L. was held within the framework of the Federal programs ``Scientific and Scientific-Pedagogical Personnel of Innovational Russia'' on 2009-2013 (state contracts No. P1339 and No. 02.740.11.5165).
\Appendix
\section{Parafermionic CFT (review)}\label{PF}
In this section we collect all needed facts about parafermionic CFT \cite{Fateev:1985mm}. In particular, we give an evidence of the statement that correlation function in parafermionic CFT containing at most three order fields and arbitrary number of parafermionic currents  is a polynomial (see \eqref{P-as-corr}) and derive the set of properties \eqref{prop} defining this polynomial uniquely up to a factor. We also explain our choice of normalization in \eqref{P-PP}. In the end of the appendix we collect some formulae for  the parafermionic polynomial \eqref{P-as-corr} as well as for the integral \eqref{Selberg-integral} in the dual case when parameters $k_{1}$, $k_{2}$ and $k_{3}$ are related by the condition \eqref{k-def-2}.

The $Z_N$ parafermionic algebra is generated by the holomorphic currents $\Psi_k(z)$, $k=1,\dots,N-1$ of fractional spins
\begin{equation}\label{D-k}
     \Delta_k=\frac{k(N-k)}{N},\qquad \bar{\Delta}_{k}=0.
\end{equation}
The OPE of these currents can be written in a form
\begin{equation}\label{OPE}
  \begin{aligned}
   &\Psi_k(z)\Psi_{k'}(0)=C_{k,k'}\,z^{-\frac{2kk'}{N}}\left(\Psi_{k+k'}(0)+O(z)\right),
  \quad\text{if}\quad k+k'<N,\\
  &\Psi_k(z)\Psi_{k'}(0)=C_{N-k,N-k'}\,z^{-\frac{2(N-k)(N-k')}{N}}\left(\Psi_{k+k'-N}(0)+O(z)\right),
  \quad\text{if}\quad k+k'>N,\\
  &\Psi_k(z)\Psi_{N-k}(0)=z^{-\frac{2k(N-k)}{N}}\left(1+\frac{2\Delta_k}{c}z^2\,T(0)+O(z^3)\right),
  \end{aligned}
\end{equation}
where
\begin{equation}\label{str-const}
    C_{k,k'}=
    \left[\frac{\Gamma(1+k+k')\Gamma(1+N-k)\Gamma(1+N-k')}{\Gamma(1+k)\Gamma(1+k')\Gamma(1+N-k-k')\Gamma(1+N)}\right]^{\frac{1}{2}}.
\end{equation}
Third line in \eqref{OPE} defines stress-energy tensor $T(z)$ which generates the conformal algebra of the theory with the central charge
\begin{equation} 
    c=\frac{2(N-1)}{(N+2)}.
\end{equation}
There is a set of antiholomorphic currents $\bar{\Psi}_k(\bar{z})$ with 
\begin{equation}
     \Delta_k=0,\qquad \bar{\Delta}_{k}=\frac{k(N-k)}{N}.
\end{equation}
forming operator algebra similar to \eqref{OPE}. Together they form the whole symmetry algebra of $Z_N$ parafermionic theory and all other fields belong to some representations of this algebra. We note that the algebra \eqref{OPE} is generated by the first parafermionic current $\Psi_1(z)$ which we denote for simplicity as 
\begin{equation}
    \Psi(z)\overset{\text{def}}{=}\Psi_{1}(z).    
\end{equation}
Analogously we define $\bar{\Psi}(\bar{z})$. It can be shown \cite{Fateev:1985mm} that any field in $Z_N$ theory can be obtained in the OPE's of the first parafermionic currents $\Psi(z)$ and $\bar{\Psi}(\bar{z})$ with one of the primary fields (it is convenient to include $\sigma_{0}(z,\bar{z})=\sigma_{N}(z,\bar{z})=1$.)
\begin{equation}
    \sigma_k(z,\bar{z}),\quad k=0,\dots, N-1.
\end{equation}
Fields $\sigma_k(z,\bar{z})$ are called by analogy with Ising model (which coincides with $Z_2$ theory) order fields and have conformal dimensions
\begin{equation}\label{d-k}
     d_k=\bar{d}_{k}=\frac{k(N-k)}{2N(N+2)}.
\end{equation}
We note that \eqref{D-k}, \eqref{OPE} and \eqref{d-k} have obvious symmetry with respect to $k\rightarrow N-k$. Sometimes it is convenient to define
\begin{equation}
      \Psi_k^{+}(z)\overset{\text{def}}{=}\Psi_{N-k}(z)\quad\text{and}\quad
      \sigma_k^{+}(z,\bar{z})\overset{\text{def}}{=}\sigma_{N-k}(z,\bar{z}).
\end{equation}
Parafermionic theories enjoy Kramers-Wannier duality, so it is also reasonable to define disorder fields 
\begin{equation}
      \mu_k(z,\bar{z}),\quad
      \mu_k^{+}(z,\bar{z})\overset{\text{def}}{=}\mu_{N-k}(z,\bar{z}),\quad
      k=0,\dots, N-1, 
\end{equation}
which have exactly the same conformal dimension as order fields \eqref{d-k}. It is convenient also to classify fields in $Z_N$ theory with respect to  both holomorphic and antiholomorphic Virasoro algebras\footnote{We note that the field primary with respect to Virasoro algebra is not necessary primary with respect to parafermionic algebra \eqref{OPE}.}
\begin{equation}\label{Phi-k-qq}
       \Phi^{k}_{(q,\bar{q})},\quad q,\bar{q}=k\mod 2\mathbb{Z},
\end{equation}
where the parameters $q$ and $\bar{q}$ are defined modulo $2N$ and
\begin{equation}\label{Phi-reflections}
     \Phi^{k}_{(q,\bar{q})}=\Phi^{k+N}_{(q+N,\bar{q}+N)}=\Phi^{N-k}_{(q-N,\bar{q}-N)}.
\end{equation}
The field $\Phi^{k}_{(k,k)}$ represents the order field $\sigma_k$, field $\Phi^{k}_{(-k,k)}$ corresponds to the disorder field $\mu_k$, $\Phi^{0}_{(2k,0)}$ corresponds to the parafermionic current $\Psi_k$ while $\Phi^{0}_{(0,2k)}$ corresponds to $\bar{\Psi}_k$  (we identify $\Psi_0=\Psi_N$ with identity operator). The system of the fields $\Phi^{k}_{(q,k)}$ obtained by the application of the holomorphic part of parafermionic algebra to one of the order fields can be conveniently drawn as shown on fig \ref{pic1}. Parafermionic currents $\Psi_k$ are drawn by big dots, order fields $\sigma_k$ by circles and disorder fields $\mu_k$ by crosses. 
\begin{figure}
\psfrag{o}{$0$}
\psfrag{k}{$k$}
\psfrag{q}{$q$}
	\centering
	\includegraphics[width=.8\textwidth]{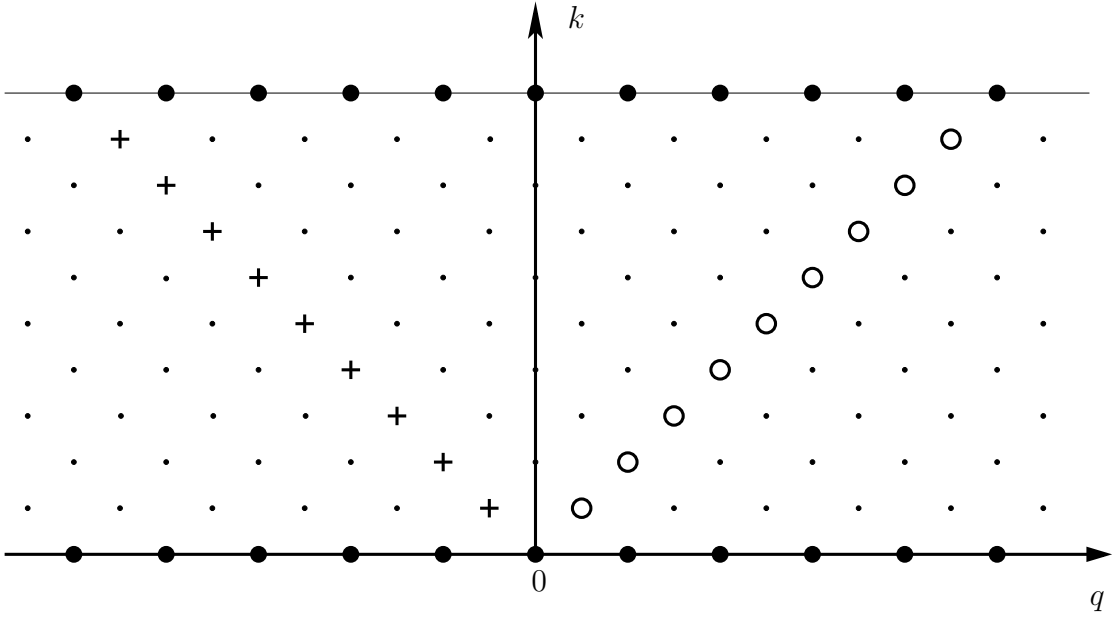}
        \caption{Digram describing the system of fields $\Phi^{k}_{(q,k)}$ in the $Z_{10}$ model.}\label{pic1}
\end{figure}
All other fields $\Phi^{k}_{(k+2l,k)}$ with $l=(-2k,\dots, N-k-1)$  shown by small dots on fig \ref{pic1} are parafermionic descendants of $\sigma_k$ or $\mu_k$ with conformal dimensions
\begin{equation}
        \Delta(\Phi^k_{(k+2l,k)})=d_k+\frac{l(N-k-l)}{N},\qquad
        \Delta(\Phi^k_{(k-2l,k)})=d_k+\frac{l(k-l)}{N},
\end{equation}
where $d_k$ is conformal dimension \eqref{d-k} of the order field $\sigma_k$. We note that parafermionic current $\Psi(z)$ acts horizontally from the left to the right on fig \ref{pic1}. The fields local with respect to themselves and with order fields are $\Phi_{(q,q)}^{k}$\footnote{Analogously the fields $\Phi_{(q,-q)}^{k}$ are local with themselves and disorder fields $\mu_{k}$.}

One can consider the multipoint correlation function
\begin{equation}\label{sss-ppp-corr-function}
     \mathfrak{G}_m^{(N|\,k_1,k_2,k_3)}(x_1,x_2,x_3|t_1,\dots,t_{n})=
     \langle\sigma_{k_1}(x_1,\bar{x}_1)\sigma_{k_2}(x_2,\bar{x}_2)\sigma_{k_3}(x_3,\bar{x}_3)
     \Psi(t_1)\bar{\Psi}(\bar{t}_{1})\dots\Psi(t_{n})\bar{\Psi}(\bar{t}_{n})\rangle,
\end{equation}
where $n=mN-k$. It is clear that the sufficient condition that this function is non-zero follows from the similar condition for the four-point function
\begin{equation}\label{sss-psi-4point}
     \mathfrak{g}_k^{(N|\,k_1,k_2,k_3)}(x_1,x_2,x_3|t)=
     \langle\sigma_{k_1}(x_1,\bar{x}_1)\sigma_{k_2}(x_2,\bar{x}_2)\sigma_{k_3}(x_3,\bar{x}_3)
     \Psi_{N-k}(t)\bar{\Psi}_{N-k}(\bar{t})\rangle. 
\end{equation}
In order to find out whether correlation function \eqref{sss-psi-4point} is non-zero we consider the operator product expansion \cite{Fateev:1985mm}
\begin{equation}\label{ss-OPE}
      \Phi_{(q_1,\bar{q}_{1})}^{k_1}\Phi_{(q_2,\bar{q}_{2})}^{k_2}=
      \begin{cases}
          \sum_{l=0}^{s}\left[\Phi^{k_1+k_2-2l}_{(q_1+q_2,\,\bar{q}_{1}+\bar{q}_{2})}\right],\quad
       s=\min(k_1,\,k_2)\quad\text{for}\quad k_1+k_2\leq N,\vspace{5mm}\\
      \sum_{l=0}^{s}\left[\Phi^{k_1+k_2-N+2l}_{(q_1+q_2-N,\,\bar{q}_{1}+\bar{q}_{2}-N)}\right],
       \quad s=\min(N-k_1,\,N-k_2)\quad\text{for}\quad k_1+k_2>N.           
      \end{cases}
\end{equation}
Evidently \eqref{sss-psi-4point} is non-zero if and only if in the OPE $\sigma_1\sigma_2\sigma_3$ there is a field $\Psi_k\bar{\Psi}_{k}$. Analyzing \eqref{ss-OPE} one can arrive that this is possible if one of the following conditions
\begin{subequations}
  \begin{align}\label{cond1}
      &k_1+k_{2}+k_{3}=2k,\quad0\leq k_{j}\leq k\leq N,\\\label{cond2}
      &k_1+k_{2}+k_{3}=N+2k,\quad0\leq k \leq k_{j}\leq N,
  \end{align}
\end{subequations}
is satisfied.  We consider in details the condition \eqref{cond1}. In this case correlation function \eqref{sss-ppp-corr-function} can be written in a form
\begin{equation}\label{G-P-1}
     \mathfrak{G}_m^{(N|\,k_1,k_2,k_3)}(x_1,x_2,x_3|t_1,\dots,t_{n})=
     \frac{\Xi(k_{1},k_{2},k_{3})\,|\mathbf{P}_{m}^{(N|\,k_1,k_2,k_3)}(x_1,x_2,x_3|t_1,\dots,t_{n})|^{2}}
     {|x_{12}|^{2\delta_{12}}|x_{13}|^{2\delta_{13}}|x_{23}|^{2\delta_{23}}\prod_{i,j}|t_{i}-x_{j}|^{\frac{2k_{j}}{N}}\prod_{i<j}|t_{ij}|^{\frac{4}{N}}},
\end{equation}
where $x_{ij}=x_{i}-x_{j}$, $t_{ij}=t_{i}-t_{j}$ and
\begin{equation*}\label{delta-ij}
     \delta_{ij}=\frac{k_{i}\,k_{j}}{N(N+2)}-\frac{(N-k+1)(k_{i}+k_{j}-k)}{(N+2)}.
\end{equation*}
The normalization factor $\Xi(k_{1},k_{2},k_{3})$ in \eqref{G-P-1}  will be defined below. Clearly  $\mathbf{P}(x_i|t_j)$  defined by \eqref{G-P-1} is a polynomial symmetric in $t_{1},\dots,t_{n}$. Projective invariance of correlation function \eqref{G-P-1} requires that this polynomial satisfy the set of properties \eqref{prop-1}--\eqref{prop-3}. The last two properties in \eqref{prop} follow immediately from the operator product expansions \eqref{OPE} and \eqref{ss-OPE}. Namely, correlation function \eqref{sss-ppp-corr-function} should respect OPE's
\begin{equation}\label{needed-OPE}
      \underbrace{\Psi\dots\Psi}_{N}\sim 1,\qquad
      \underbrace{\Psi\dots\Psi}_{N-k_{j}}\sigma_{k_{j}}\sim\mu_{k_{j}}.
\end{equation}
From other side applying \eqref{needed-OPE} to \eqref{G-P-1} one arrives to unwanted poles in \eqref{G-P-1}. The condition that these poles are cancelled is equivalent to\footnote{Here sign $\sim$ means proportional up to some polynomial.}
\begin{equation}\label{needed-clustering-prop}
      \begin{aligned}
        &\mathbf{P}_{m}^{(N|\,k_1,k_2,k_3)}(x_1,x_2,x_3|\underbrace{x,\dots,x}_{N},t_1,\dots,t_{n-N})\sim\prod_{i=1}^{n-N}(t_{i}-x)^{2},\\
        &\mathbf{P}_{m}^{(N|\,k_1,k_2,k_3)}(x_1,x_2,x_3|\underbrace{x_{j},\dots,x_{j}}_{N-k_{j}},t_1,\dots,t_{n+k_{j}-N})\sim
        \prod_{i=1}^{n+k_{j}-N}(t_{i}-x_{j}),\quad\text{for}\quad j=1,2,3,
      \end{aligned}
\end{equation} 
Clearly eq \eqref{needed-clustering-prop} implies \eqref{prop-4} and \eqref{prop-5}. One can see that there are no more obstructions and correlation function \eqref{G-P-1} with appropriate polynomial satisfy all eligibility requirements. Moreover, as it will be proven in appendix \ref{conjecture} the set of properties \eqref{prop} determines the polynomial $\mathbf{P}(x_i|t_j)$ in \eqref{G-P-1} uniquely up to factor which can be defined as follows. In order to obtain the four-point function \eqref{sss-psi-4point} from the multipoint one \eqref{sss-ppp-corr-function} one has to fuse $n=mN-k$ points $t_{1},\dots, t_{n}$ into $(m-1)$ clusters  with $N$ points and one cluster with $N-k$ points taking into account the structure constants of parafermionic algebra \eqref{str-const}, i.e.
\begin{equation}\label{correct-fusion}
     \underbrace{\underbrace{\Psi\dots\Psi}_{N}\;\dots\; \underbrace{\Psi\dots\Psi}_{N}}_{m-1}\;\;
      \underbrace{\Psi\dots\Psi}_{N-k}=\left[\frac{N!}{N^{\frac{N}{2}}}\right]^{m-1}
      \left[\frac{N!(N-k)!}{N^{N-k}k!}\right]^{\frac{1}{2}}\,\Psi_{N-k}.
\end{equation}
In our conventions we defined normalization of the polynomial $\mathbf{P}_{m}^{(N|\,k_1,k_2,k_3)}(x_1,x_2,x_3|t_1,\dots,t_{n})$ in \eqref{Polynomial-definition} according to \eqref{needed-clustering-prop} and \eqref{correct-fusion} as
\begin{multline}\label{P-normalization}
        \mathbf{P}_{m}^{(N|\,k_1,k_2,k_3)}(x_1,x_2,x_3|\underbrace{y_{1},\dots,y_{1}}_{N},\dots,\underbrace{y_{m-1},\dots,y_{m-1}}_{N},
        \underbrace{t,\dots,t}_{N-k})
        =\left[\frac{N!}{N^{\frac{N}{2}}}\right]^{m-1}\left[\frac{N!(N-k)!}{N^{N-k}k!}\right]^{\frac{1}{2}}\times\\\times
        \prod_{i<j=1}^{(m-1)}(y_{i}-y_{j})^{2N}\,
        \prod_{j=1}^{m-1}(y_{j}-x_{1})^{k_{1}}(y_{j}-x_{2})^{k_{2}}(y_{j}-x_{3})^{k_{3}}(y_{j}-t)^{2(N-k)}.
\end{multline}
According to \eqref{correct-fusion} and \eqref{P-normalization} the normalization constant $\Xi(k_{1},k_{2},k_{3})$  in \eqref{G-P-1} is defined through the  four-point correlation function \eqref{sss-psi-4point} as
\begin{equation}
     \mathfrak{g}_k^{(N|\,k_1,k_2,k_3)}(x_1,x_2,x_3|t)=
           \frac{\Xi(k_{1},k_{2},k_{3})}{|x_{12}|^{2\delta_{12}}|x_{13}|^{2\delta_{13}}|x_{23}|^{2\delta_{23}}\prod_{j}|t-x_{j}|^{\frac{2(N-k)k_{j}}{N}}}.
\end{equation}
From operator product expansion \eqref{ss-OPE} one immediately concludes that\footnote{In the second line in \eqref{Xi-as-product} we used the relations \eqref{Phi-reflections}.}
\begin{multline}\label{Xi-as-product}
    \Xi(k_{1},k_{2},k_{3})=\langle\sigma_{k_{1}}\sigma_{k_{2}}\Phi^{N-k_{3}}_{(N-k_{1}-k_{2},\,N-k_{1}-k_{2})}\rangle
    \langle\Phi^{k_{3}}_{(k_{1}+k_{2},\,k_{1}+k_{2})}\sigma_{k_{3}}\Psi_{N-k}\bar{\Psi}_{N-k}\rangle
    =\\=
    \langle\Phi^{k_{1}}_{(k_{1},\,k_{1})}\Phi^{N-k_{2}}_{(k_{2}-N,\,k_{2}-N)}\Phi^{N-k_{3}}_{(N-k_{1}-k_{2},\,N-k_{1}-k_{2})}\rangle
    \langle\Phi^{N-k_{3}}_{(k_{1}+k_{2}-N,\,k_{1}+k_{2}-N)}\Phi^{k_{3}}_{(k_{3},\,k_{3})}\Phi^{N}_{(N-2k,\,N-2k)}\rangle.
\end{multline}
An explicit expression for the three-point functions in $Z_{N}$ parafermionic theory was obtained in \cite{Fateev:1985mm}
\begin{equation}\label{3point-parafermions}
   \langle\Phi^{2j_{1}}_{(2m_{1},\,2\bar{m}_{1})}\Phi^{2j_{2}}_{(2m_{2},\,2\bar{m}_{2})}\Phi^{2j_{3}}_{(2m_{3},\,2\bar{m}_{3})}\rangle
   \simeq
   \genfrac{[}{]}{0pt}{}{j_{1}\;\;\,j_{2}\;\;\,j_{3}}{m_{1}\,m_{2}\,m_{3}}\genfrac{[}{]}{0pt}{}{j_{1}\;\;\,j_{2}\;\;\,j_{3}}{\bar{m}_{1}\,\bar{m}_{2}\,\bar{m}_{3}}\,
   \mathfrak{r}(j_{1},j_{2},j_{3}),
\end{equation}
where first two factors are $3j$-symbols and
\begin{multline}
   \mathfrak{r}(j_{1},j_{2},j_{3})=
   \left[\gamma\left(\frac{1}{N+2}\right)
   \gamma\left(\frac{N-2j_{1}+1}{N+2}\right)\gamma\left(\frac{N-2j_{2}+1}{N+2}\right)\gamma\left(\frac{N-2j_{3}+1}{N+2}\right)\right]^{\frac{1}{2}}  
   \times\\\times
   \frac{\Pi(j_{1}+j_{2}+j_{3}+1)\Pi(j_{2}+j_{3}-j_{1})\Pi(j_{1}+j_{3}-j_{2})\Pi(j_{1}+j_{2}-j_{3})}{\Pi(2j_{1})\Pi(2j_{2})\Pi(2j_{3})},
\end{multline}
with
\begin{equation*}
    \Pi(k)=k!\,\prod_{j=1}^{k}\gamma\left(\frac{j}{N+2}\right).
\end{equation*}
In \eqref{3point-parafermions} sign $\simeq$ means that the l.h.s. is not necessary zero if the r.h.s. does.  One has to use reflection properties 
\eqref{Phi-reflections} for the fields $\Phi^{2j}_{(2m,\,2\bar{m})}$ in \eqref{3point-parafermions} to ensure that the r.h.s. is non-zero (if it can not be done then this correlation function is indeed vanish). In our case we tuned parameters in \eqref{Xi-as-product} using this prescription and get
\begin{equation}
     \Xi(k_{1},k_{2},k_{3})=\frac{k!(N-k)!}{N!}\,\rho(k_{1},k_{2},k_{3}),
\end{equation}
where $\rho(k_{1},k_{2},k_{3})$ is given by \eqref{rho}. With explanations above we think that the choice of normalization in \eqref{P-PP} becomes clear\footnote{One has to take  a simultaneous look at eqs \eqref{G-P-1}, \eqref{P-as-corr} and \eqref{P-PP} to realize our choice of normalization.}.

In the end of this appendix we give an explicit expression for the parafermionic polynomial \eqref{P-as-corr} in the dual case when parameters $k_{1}$, $k_{2}$ and $k_{3}$ are related by
\begin{equation}\label{k-def-dual-2}
    k_1+k_2+k_3=N+2k, \quad 0\leq k\leq k_j\leq N.
\end{equation}
In this case polynomial defined by \eqref{P-as-corr} is expressed in terms of polynomial $\tilde{\mathbf{P}}_{m}^{(N|\,k_1,k_2,k_3)}(t_1,\dots,t_{n})$ as (here again $n=mN-k$)
\begin{equation}
     \mathcal{P}_{m}^{(N|\,k_1,k_2,k_3)}(t_1,\dots,t_{n})=
         (-1)^{m(N-k_{1})}\left[\frac{k!(N-k)!}{N!}\,\tilde{\rho}(k_{1},k_{2},k_{3})\right]^{\frac{1}{2}}\,
         \tilde{\mathbf{P}}_{m}^{(N|\,k_1,k_2,k_3)}(t_1,\dots,t_{n}),
\end{equation}
where $\tilde{\rho}(k_{1},k_{2},k_{3})$ is given by \eqref{rho} and $\tilde{\mathbf{P}}_{m}^{(N|\,k_1,k_2,k_3)}(t_1,\dots,t_{n})$ is defined similar to \eqref{pol-pol-def} by
\begin{equation}\label{pol-pol-def-dual}
   \tilde{\mathbf{P}}_{m}^{(N|\,k_1,k_2,k_3)}(t_1,\dots,t_{n})=(-1)^{m(N-k_1)}
   \lim_{x_3\rightarrow\infty}x_3^{k-mk_3}\,\tilde{\mathbf{P}}_{m}^{(N|\,k_1,k_2,k_3)}(0,1,x_3|t_1,\dots,t_{n}).
\end{equation}
Polynomial $\tilde{\mathbf{P}}_{m}^{(N|\,k_1,k_2,k_3)}(x_{1},x_{2},x_3|t_1,\dots,t_{n})$ satisfies the set of properties similar to \eqref{prop}
\begin{subequations}\label{prop-dual}
   \begin{align}\label{prop-1-dual} 
       &\tilde{\mathbf{P}}(\lambda x_i|\lambda t_{j})=
        \lambda^{m^2N-k}\tilde{\mathbf{P}}(x_i|t_{j}),\\\label{prop-2-dual}
       &\tilde{\mathbf{P}}\left(x_i^{-1}\bigl|t_j^{-1}\right)=
        (-1)^N\prod_{j=1}^3x_j^{-(mk_j-k)}\prod_{q=1}^{n}t_q^{-(2m-1)}\tilde{\mathbf{P}}(x_i|t_j),\\\label{prop-3-dual}
       &\tilde{\mathbf{P}}(x_i+\lambda|t_j+\lambda)=\tilde{\mathbf{P}}(x_i|t_j),\\[2mm]\label{prop-4-dual}
       &\tilde{\mathbf{P}}(x_1,x_2,x_3|\underbrace{x,\dots x}_{N+1},t_1,\dots t_{n-N-1})=0,\\\label{prop-5-dual}
       &\tilde{\mathbf{P}}(x_1,x_2,x_3|\underbrace{x_j,\dots x_j}_{N+1-k_j},t_1,\dots t_{n+k_j-N-1})=0\quad\text{for}\;j=1,2,3.
   \end{align}
\end{subequations}
One can shown that \eqref{prop-dual} determine the polynomial $\tilde{\mathbf{P}}_{m}^{(N|\,k_1,k_2,k_3)}(x_{1},x_{2},x_3|t_1,\dots,t_{n})$ unambiguously up to a total factor. It can be written similar to \eqref{Polynomial-definition}  as
\begin{equation}\label{Polynomial-definition-dual}
     \tilde{\mathbf{P}}_{m}^{(N|\,k_1,k_2,k_3)}(x_1,x_2,x_3|t_1,\dots,t_{n})=
      \Lambda_{m}^{-1}\,S[\tilde{p}(x_1,x_2,x_3|t_1,\dots,t_{n})],
\end{equation}
where $\Lambda_{m}$ is given by \eqref{Lambda} and the basic ``monomial'' $\tilde{p}(x_1,x_2,x_3|t_1,\dots,t_{n})$ is defined as follows.  We divide our points $t_1,\dots,t_n$ into $N$ groups
\begin{equation*}
    (t_1,t_2,\dots,t_n)\longrightarrow(s_1,\dots,s_N),
\end{equation*}
where  $N-k$ groups  contain $m$ different elements and the remaining $k$ groups  contain $m-1$ different elements. All groups $(s_1,\dots,s_N)$ are divided into four sets with $k_1-k$, $k_2-k$ and $k_3-k$ groups ($m$ elements in each group) and  $k$ groups ($(m-1)$ elements in each group)
\begin{equation*}
  \begin{aligned}
     &\chi_1=(s_{1},\dots,s_{k_1-k}),\quad&
     &\chi_2=(s_{k_1-k+1},\dots,s_{N-k_3}),\\
     &\chi_3=(s_{N-k_3+1},\dots,s_{N-k}),\quad&
     &\chi_4=(s_{N-k+1},\dots,s_{N}).
  \end{aligned}
\end{equation*} 
Then the basic monomial is
\begin{multline}\label{basic-monomial-dual}
     \tilde{p}(x_1,x_2,x_3|t_1,\dots,t_{n})=\\=
     \prod_{q=1}^N\prod_{i<j\in s_q}(t_i-t_j)^2\,
     \prod_{j\in\chi_1}(t_j-x_1)\prod_{j\in\chi_2}(t_j-x_2)\prod_{j\in\chi_3}(t_j-x_3)
      \prod_{j\in\chi_4}(t_j-x_1)(t_j-x_2)(t_j-x_3).
\end{multline}
We can define also the dual integral
\begin{equation}\label{Selberg-integral-dual}
     \tilde{\mathbf{J}}_{m}^{(N|\,k_1,k_2,k_3)}(A,B|g)=\int\limits_0^1\dots\int\limits_0^1
       \prod_{i=1}^{n}t_i^{A}(1-t_i)^{B}\prod_{i<j=1}^{n}|t_i-t_j|^{2g}\;
       \tilde{\mathbf{P}}_{m}^{(N|\,k_1,k_2,k_3)}(t_1,\dots,t_{n})
       dt_1\dots dt_{n}, 
\end{equation}
where the  integers $k_1$, $k_2$, $k_3$ and $k$ are restricted by the condition \eqref{k-def-dual-2}. It can be calculated using the method described in section \ref{integrals}. The result is
\begin{multline}\label{Int-g-rep-dual}
      \tilde{\mathbf{J}}_{m}^{(N|\,k_1,k_2,k_3)}(A,B|g)=C_{m}^{(N,k)}(g)\;
      \prod_{p=0}^{m-2}\frac{G_{k_1}^{(N)}(1+A+pNg+p)\,G_{k_2}^{(N)}(1+B+pNg+p)}
      {G_{N-k_3+k}^{(N)}\left(1+A+B+((m+p)N-k-1)g+m+p\right)}
      \times\\\times
      \frac{G_{k_1-k}^{(N-k)}(m+A+(m-1)Ng)\,G_{k_2-k}^{(N-k)}(m+B+(m-1)Ng)}
      {G_{N-k_3}^{(N-k)}(2m+A+B+((2m-1)N-k-1)g)},
\end{multline}
where function $G_{k}^{(N)}(x)$ is defined by \eqref{G-def} and function $C_{m}^{(N,k)}(g)$ is given by \eqref{C-k-answer}. 
\section{Proof of proposition \ref{main-proposition}}\label{conjecture}
In this appendix we give a proof of the following statement used in section \ref{integrals}. Namely, the polynomial $\mathbf{P}(x_{i}|t_{j})$ defined by the set of properties \eqref{prop} exists and unique up to a total factor. The existence was demonstrated in section \ref{integrals} by explicit expression, while the uniqueness will be proven below (we follow mainly \cite{Feigin:1993qr})\footnote{Along the same lines one can prove uniqueness of the dual polynomial defined by the set of properties \eqref{prop-dual}.}.

Let us consider the space $W$ of  symmetric polynomials $P(t_{1},\dots,t_{n})$ satisfying the following condition
\begin{equation}\label{diag-cond-1}
    P(\underbrace{x,\dots,x}_{N+1},t_{1},\dots,t_{n-N-1})=0.
\end{equation}
\begin{proposition}\label{P-Nx-proposition}
For any polynomial $P(t_{1},\dots,t_{n})\in W$
\begin{equation}\label{x-N-relat}
    P(\underbrace{x,\dots,x}_{N},t_{1},\dots,t_{n-N})\sim\prod_{j=1}^{n}(t_{j}-x)^{2},
\end{equation}
where sign $\sim$ means proportional up to some polynomial.
\end{proposition}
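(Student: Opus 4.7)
The plan is to reduce the divisibility claim to showing that $(t_j-x)^2$ divides the specialization $P(\underbrace{x,\ldots,x}_N, t_1,\ldots,t_{n-N})$ for each individual index $j=1,\ldots,n-N$ separately. Since the linear polynomials $t_j - x$ for distinct $j$ involve distinct indeterminates, they are pairwise coprime in $\mathbb{C}[x,t_1,\ldots,t_{n-N}]$, hence so are their squares, and their product must then divide $P(\underbrace{x,\ldots,x}_N, t_1,\ldots,t_{n-N})$ as desired. (I note that the product on the right of \eqref{x-N-relat} should run only to $n-N$, matching the number of free variables after specialization.)

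The first factor $(t_j - x)$ is immediate: setting $t_j = x$ in $P(\underbrace{x,\ldots,x}_N, t_1,\ldots,t_{n-N})$ yields $P$ evaluated at an argument with $N+1$ entries equal to $x$, which vanishes by hypothesis \eqref{diag-cond-1}. So the polynomial is divisible by each $(t_j - x)$, and the real work is upgrading this to divisibility by $(t_j - x)^2$.

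The key idea for the upgrade is to differentiate the identity \eqref{diag-cond-1}. Write this identity as $P(y_1,\ldots,y_{N+1},t_1,\ldots,t_{n-N-1})=0$ along the diagonal $y_1=\cdots=y_{N+1}=x$, and take $\partial/\partial x$ while keeping $t_1,\ldots,t_{n-N-1}$ fixed. The chain rule gives
\begin{equation*}
   \sum_{i=1}^{N+1}\Bigl.\frac{\partial P}{\partial y_i}\Bigr|_{y_1=\cdots=y_{N+1}=x}=0.
\end{equation*}
Now I would invoke the symmetry of $P$: for any transposition $(i\,j)$ with $i,j\in\{1,\ldots,N+1\}$, $P$ is invariant under swapping $y_i\leftrightarrow y_j$, so the partial derivatives $\partial P/\partial y_i$ and $\partial P/\partial y_j$, evaluated at any point with $y_i=y_j$ and the remaining arguments invariant under the swap, coincide. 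The fully diagonal point $y_1=\cdots=y_{N+1}=x$ has this property for every pair, so all $N+1$ summands above are equal, and each must vanish individually. In particular $\partial P/\partial y_{N+1}$ vanishes there, which together with $P$ itself vanishing there is exactly the statement that $(y_{N+1}-x)^2$ divides $P(\underbrace{x,\ldots,x}_N, y_{N+1}, t_1,\ldots,t_{n-N-1})$ as a polynomial in $y_{N+1}$ with coefficients in $\mathbb{C}[x,t_1,\ldots]$. Relabeling $y_{N+1}$ as $t_j$ and using symmetry in the remaining variables finishes the per-$j$ claim, and the coprimality observation of the first paragraph then completes the proof.

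The only subtle step is the symmetry argument that makes the $N+1$ diagonal partial derivatives coincide; this is where the requirement $y_1=\cdots=y_{N+1}$ (rather than a partial diagonal) is genuinely used, and it is what distinguishes this proposition from the weaker first-order divisibility that follows trivially from \eqref{diag-cond-1}.
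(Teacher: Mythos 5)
Your proof is correct, and it takes a different (more elementary) route than the one the paper uses in Appendix~B. The paper expands $P(u+\lambda\xi_1,\dots,u+\lambda\xi_{N+1},t_2,\dots)$ around the centroid $u$ of the first $N+1$ arguments with $\sum_j\xi_j=0$: the zeroth-order term vanishes by the diagonal condition, the first-order term vanishes because, by symmetry, its coefficient must be a multiple of $\mathbf{s}_1(\xi)=0$, and the surviving terms all carry factors $\lambda^j\mathbf{s}_j(\xi)\propto(x-t_1)^j$ with $j\geq 2$ under the specialization $u_1=\dots=u_N=x$, $u_{N+1}=t_1$; the full product is then claimed ``by symmetry reasons.'' Your argument isolates the same underlying mechanism --- symmetry kills the first transverse derivative on the diagonal --- but implements it by differentiating the identity $P(x,\dots,x,t)=0$ in $x$ and observing that the $N+1$ equal partial derivatives summing to zero must each vanish; this gives $Q(x)=Q'(x)=0$ for $Q(y_{N+1})=P(x,\dots,x,y_{N+1},t)$ and hence divisibility by $(y_{N+1}-x)^2$ directly, since $y_{N+1}-x$ is monic. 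Two things your write-up does better than the paper's: the final passage from per-variable divisibility to divisibility by the product is made rigorous via pairwise coprimality in the UFD $\mathbb{C}[x,t_1,\dots,t_{n-N}]$ (the paper just invokes ``symmetry reasons''), and you correctly flag that the product in the displayed statement should run to $n-N$, not $n$. It is also worth noting that your derivative argument is essentially the one the paper sketches in its footnote via the dual-space relation $e(t)^{N+1}=0\Rightarrow e'(t)e(t)^N=0$, so the two approaches are close in spirit even though the mechanics differ.
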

\begin{proof}\footnote{We note that there is a simple explanation of proposition \ref{P-Nx-proposition} from different point of view. Namely, one can consider dual space $W^{*}$ which is the space of all $n-$point correlation functions
\begin{equation*}
   P(t_{1},\dots,t_{n})=\langle\psi|e(t_{1})\dots e(t_{n})|0\rangle,
\end{equation*}
of the operator $e(t)=\sum_{j}e_{j}t^{-j-1}$, $[e_{i},e_{j}]=0$ satisfying $e(t)^{N+1}=0$, where $\langle\psi|$ runs over $W^{*}$. Clearly the relation $e(t)^{N+1}=0$   implies $e'(t)e(t)^{N}=0$. It follows from the relation $e(t)^{N+1}=0$ that $P(\underbrace{x,\dots,x}_{N+1},t_{1},\dots,t_{n-N-1})=0$,
while from $e'(t)e(t)^{N}=0$ we find that
$\frac{\partial}{\partial t_{j}}\,P(\underbrace{x,\dots,x}_{N},t_{1},\dots,t_{j},\dots,t_{n-N})\biggl|_{t_{j}=x}\hspace*{-10pt}=0$, which is equivalent to \eqref{x-N-relat}.
} 
Let us take in $P(u_{1},\dots,u_{N+1},t_{2},\dots,t_{n-N})$
\begin{equation}\label{u-lambda-rep}
   u_{j}=u+\lambda\xi_{j},\quad\text{with}\quad u=\frac{u_{1}+\dots+u_{N+1}}{N+1}\quad\text{and}\quad
   \sum_{j=1}^{N+1}\xi_{j}=0.
\end{equation}
We can expand at $\lambda\rightarrow0$
\begin{equation}\label{P-lambda-expansion}
     P(u+\lambda\xi_{1},\dots,u+\lambda\xi_{N+1},t_{2},\dots,t_{n-N})=P^{(0)}(u|t_{2},\dots,t_{n})+\lambda\,\mathbf{s}_{1}(\xi)P^{(1)}(u|t_{2},\dots,t_{n})+
     \dots,
\end{equation}
where $\mathbf{s}_{j}(\xi)$ are the elementary symmetric polynomials in $(\xi_{1},\dots,\xi_{N+1})$. Due to the conditions \eqref{diag-cond-1} and \eqref{u-lambda-rep} first two terms in the expansion \eqref{P-lambda-expansion} are actually vanish. So that the expansion \eqref{P-lambda-expansion} takes the form
\begin{equation}\label{P-lambda-expansion-2}
     P(u+\lambda\xi_{1},\dots,u+\lambda\xi_{N+1},t_{2},\dots,t_{n-N})=\lambda^{2}\,\mathbf{s}_{2}(\xi)P^{(2)}(u|t_{2},\dots,t_{n})+
     O(\lambda^{3}).
\end{equation}
Now we take $u_{1}=\dots=u_{N}=x$ and $u_{N+1}=t_{1}$ which is equivalent to $u=(Nx+t_{1})/(N+1)$ $\xi_{1}=\dots=\xi_{N}=(x-t_{1})/(\lambda(N+1))$ and $\xi_{N+1}=-N(x-t_{1})/(\lambda(N+1))$ and find that
\begin{equation*}
    P(\underbrace{x,\dots,x}_{N},t_{1},\dots,t_{n-N})\sim(t_{1}-x)^{2},
\end{equation*}
which due to the symmetry reasons implies \eqref{x-N-relat}.   
\end{proof}

Let us define the space $U_{\vec{Y}}$ labeled by Young tableau $\vec{Y}$ with $n$ cells
\begin{equation*}
    \vec{Y}=(p_{1},p_{2},\dots),\qquad \sum p_{j}=n,
\end{equation*}
as the space of polynomials satisfying
\begin{equation}\label{UY-def}
    P\in U_{\vec{Y}}\quad\text{if}\quad P(\vec{Y}')=0\quad\text{for any}\quad \vec{Y}'\geq\vec{Y},
\end{equation}
such that $\sum p'_{j}=n$.
In \eqref{UY-def} the notation $P(\vec{Y})$ means
\begin{equation}\label{P-on-Y}
    P(\vec{Y})\overset{\text{def}}{=}P(\underbrace{y_{1},\dots,y_{1}}_{p_{1}},\underbrace{y_{2},\dots,y_{2}}_{p_{2}},\dots),
\end{equation}
and by definition partition $\vec{Y}'$ is greater than partition $\vec{Y}$ if there exists $j$ such that $p_{i}'=p_{i}$ for $i<j$ and $p_{j}'>p_{j}$. This ordering is called lexicographic. It is clear that $W=U_{\vec{Y}_{0}}$
where $\vec{Y}_{0}$ is the following partition ($p_{1}=N+1$, $p_{2}=1,\dots$)
\begin{equation}\label{Y0}
    \begin{picture}(160,80)(0,70)
    \Thicklines
    \unitlength 1.5pt 
    \put(20,20){\line(0,1){30}}
    \put(20,60){\line(0,1){30}}
    \put(30,20){\line(0,1){30}}
    \put(30,60){\line(0,1){30}}
    
     \put(20,20){\line(1,0){10}}
     \put(20,30){\line(1,0){10}}
     \put(20,40){\line(1,0){10}}
     \put(21,55){\circle*{1}}
     \put(25,55){\circle*{1}}
     \put(29,55){\circle*{1}}
     \put(20,70){\line(1,0){10}}
     \put(20,80){\line(1,0){10}}
     \put(20,90){\line(1,0){10}}
     \put(30,20){\line(1,0){50}}
     \put(30,30){\line(1,0){50}}
     \put(90,20){\line(1,0){30}}
     \put(90,30){\line(1,0){30}}
     
     \put(40,20){\line(0,1){10}}
     \put(50,20){\line(0,1){10}}
     \put(60,20){\line(0,1){10}}
     \put(70,20){\line(0,1){10}}
     
     \put(100,20){\line(0,1){10}}
     \put(110,20){\line(0,1){10}}
     \put(120,20){\line(0,1){10}}
     
     \put(82,25){\circle*{1}}
     \put(85,25){\circle*{1}}
     \put(88,25){\circle*{1}}
    \put(60,5){\mbox{$n-N$}}
    \put(96,8){\vector(1,0){23}}
    \put(45,8){\vector(-1,0){23}}

    \put(-12,52){\mbox{$N+1$}}
    \put(0,65){\vector(0,1){23}}
    \put(0,43){\vector(0,-1){23}}
  \end{picture}
  \vspace*{2.5cm}
\end{equation}
It is also clear that $U_{\vec{Y}}\subseteq U_{\vec{Y}'}$ if $\vec{Y}<\vec{Y}'$. Thus we have a filtration
\begin{equation}
    \varnothing\subseteq\dots\subseteq U_{2}\subseteq U_{1}\subseteq U_{0},
\end{equation}
where we denote $U_{0}=W$, $U_{1}=U_{\vec{Y}_{1}}$ with $\vec{Y}_{1}$ being the partition with $n$ cells preceding the master partition $\vec{Y}_{0}$ in lexicographic order and so on. Let us define the quotient space
\begin{equation}
     \tilde{U}_{q}\overset{\text{def}}{=}U_{q-1}\backslash U_{q},
\end{equation}
which can be identified with the space of polynomials
\begin{equation}
    P(\vec{Y}_{q})\neq0\quad\text{and}\quad P(\vec{Y})=0\quad\text{for any}\quad \vec{Y}>\vec{Y_{q}}. 
\end{equation}
As for any polynomial $P\in\tilde{U}_{q}$ $P(\vec{Y}_{q})\neq0$ we can restrict it on the diagram $\vec{Y}_{q}=(p^{(q)}_{1},p^{(q)}_{2},\dots)$ and define according to \eqref{P-on-Y} the polynomial $\chi_{q}(y_{1},y_{2},\dots)$ as
\begin{equation}\label{P-chi-def}
   \chi_{q}(y_{1},y_{2},\dots)\overset{\text{def}}{=}P(\vec{Y}_{q}).
\end{equation}
It is clear that there is one to one correspondence between polynomials $P\in\tilde{U}_{q}$ and polynomials $\chi(y_{1},y_{2},\dots)$ defined by \eqref{P-chi-def}, i.~e. the map \eqref{P-chi-def} is injective. The polynomial $\chi(y_{1},y_{2},\dots)$ should vanish at the diagonal $y_{i}=y_{j}$ as
\begin{equation}
    \chi_{q}(y_{1},y_{2},\dots)\sim(y_{i}-y_{j})^{2p_{j}^{(q)}}\quad\text{for}\quad i<j.
\end{equation}
Indeed, for any polynomial $P\in\tilde{U}_{q}$ we can consider polynomial $\varrho(t_{1},\dots,t_{p_{i}^{(q)}+p_{j}^{(q)}})$\footnote{It depends also on variables $(y_{1},\dots,y_{i-1},y_{i+1},\dots,y_{j-1},y_{j+1},\dots)$, but this dependence is hidden for shortness.} 
\begin{multline}\label{P-partial-specification}
    \varrho(t_{1},\dots,t_{p_{i}^{(q)}+p_{j}^{(q)}})=
    P(\dots,\underbrace{y_{i-1},\dots,y_{i-1}}_{p_{i-1}^{(q)}},t_{1},\dots,t_{p_{i}^{(q)}}
    ,\underbrace{y_{i+1},\dots,y_{i+1}}_{p_{i+1}^{(q)}},\dots,\\
    \underbrace{y_{j-1},\dots,y_{j-1}}_{p_{j-1}^{(q)}},t_{p_{i}^{(q)}+1},\dots,t_{p_{i}^{(q)}+p_{j}^{(q)}}
    ,\underbrace{y_{j+1},\dots,y_{j+1}}_{p_{j+1}^{(q)}},\dots),
\end{multline}
i.e. we specify polynomial $P$ on almost whole diagram $\vec{Y}_{q}$ except $i$-th and $j$-th columns (here $i<j$ and hence $p_{i}^{(q)}\geq
p_{j}^{(q)}$). Clearly, the polynomial $\varrho(t_{1},\dots,t_{p_{i}^{(q)}+p_{j}^{(q)}})$ defined by \eqref{P-partial-specification}  is non-zero. Moreover, it should vanish if $p_{i}^{(q)}+1$ variables $t_{i}$ coincide, because in this case we get polynomial $P$ specified on a diagram which is definitely greater than $\vec{Y}_{q}$. Now as follows from proposition \ref{P-Nx-proposition}
\begin{equation*}
    \varrho(\underbrace{y_{i},\dots,y_{i}}_{p_{i}^{(q)}},t_{1},\dots,t_{p_{j}^{(q)}})\sim\prod_{a=1}^{\;p_{j}^{(q)}}(y_{i}-t_{a})^{2},
\end{equation*}
and hence
\begin{equation}\label{2-columns-prop}
    \varrho(\underbrace{y_{i},\dots,y_{i}}_{p_{i}^{(q)}},\underbrace{y_{j},\dots,y_{j}}_{p_{j}^{(q)}})\sim(y_{i}-y_{j})^{2p_{j}^{(q)}}.
\end{equation}
It follows from \eqref{2-columns-prop} that
\begin{equation}\label{chi-x-relat}
    \chi_{q}(y_{1},y_{2},\dots)\sim\prod_{i<j}(y_{i}-y_{j})^{2p_{j}^{(q)}}
\end{equation}

Now we restrict the space $W$ and define the space $W^{(k_{1},k_{2},k_{3})}\subseteq W$ as the space of polynomials  
$P\in W$ satisfying
\begin{equation}\label{additional-prop-app}
     P(\underbrace{x_{a},\dots,x_{a}}_{N+1-k_{a}},t_{1},\dots,t_{n+k_{a}-N-1})=0\quad\text{for}\quad 0\leq k_{a}\leq N\quad a=1,2,3.
\end{equation}
Where points $x_{1}$, $x_{2}$ and $x_{3}$ are some special points in a sense that we allow the coefficients of the polynomial $P$ depend on $x_{a}$.
In a same way we define the quotient spaces $\tilde{U}_{q}^{(k_{1},k_{2},k_{3})}\subseteq\tilde{U}_{q}$.
\begin{proposition}\label{P-Nxa-proposition}
For any polynomial $P(t_{1},\dots,t_{n})\in W^{(k_{1},k_{2},k_{3})}$
\begin{equation}\label{xa-N-relat}
    P(\underbrace{x,\dots,x}_{N-l},t_{1},\dots,t_{n-N+l})\sim\prod_{a=1}^{3}(x-x_{a})^{\varkappa_{a}},
\end{equation}
where
\begin{equation*}
   \varkappa_{a}=
       \begin{cases}
               k_{a}-l\quad\text{if}\quad l<k_{a},\\
               0\quad\text{if}\quad l\geq k_{a}.
        \end{cases}
\end{equation*}
\end{proposition}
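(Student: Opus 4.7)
The plan is to follow the strategy of Proposition~\ref{P-Nx-proposition} and show that each of the three factors $(x-x_a)^{\varkappa_a}$ divides the polynomial $Q(x)\overset{\text{def}}{=}P(\underbrace{x,\dots,x}_{N-l},t_1,\dots,t_{n-N+l})$ regarded as a one-variable polynomial in $x$ with the $t_j$ and $x_a$ fixed. Since $x_1,x_2,x_3$ are distinct generic points, the three factors are pairwise coprime, so their product will then divide $Q(x)$, which (interpreting $\sim$ as ``up to a polynomial factor'', as in Proposition~\ref{P-Nx-proposition}) is exactly \eqref{xa-N-relat}. Thus I only need to work one $a$ at a time.

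Fix $a\in\{1,2,3\}$ with $l<k_a$; the task is to show that $Q(x)$ vanishes to order at least $k_a-l$ at $x=x_a$. By the iterated chain rule,
\begin{equation*}
    Q^{(j)}(x)=\sum_{|\alpha|=j}\binom{j}{\alpha_1,\dots,\alpha_{N-l}}\,(\partial^{\alpha}P)(\underbrace{x,\dots,x}_{N-l},t_1,\dots,t_{n-N+l}),
\end{equation*}
where $\alpha=(\alpha_1,\dots,\alpha_{N-l})$ runs over multi-indices of weight $j$ and $\partial^{\alpha}$ differentiates in the first $N-l$ arguments of $P$. The support $\{i:\alpha_i>0\}$ has cardinality at most $|\alpha|=j$, so at least $N-l-j$ of the first $N-l$ arguments are not differentiated. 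When $j<k_a-l$ one has $N-l-j\geq N+1-k_a$, i.e.\ each summand involves at least $N+1-k_a$ undifferentiated arguments that will subsequently be set to $x_a$.

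The defining property \eqref{additional-prop-app} of $W^{(k_1,k_2,k_3)}$ states that $P$ vanishes identically on the linear subspace on which any $N+1-k_a$ of its arguments are set to $x_a$. Differentiation in the remaining arguments preserves this identical vanishing, and hence $(\partial^{\alpha}P)(\underbrace{x_a,\dots,x_a}_{N-l},t_1,\dots,t_{n-N+l})=0$ whenever $|\alpha|<k_a-l$. Consequently $Q^{(j)}(x_a)=0$ for all $j<k_a-l$, which forces $(x-x_a)^{k_a-l}$ to divide $Q(x)$.

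Repeating the argument for each $a\in\{1,2,3\}$ with $l<k_a$ (those with $l\geq k_a$ contribute the trivial factor $(x-x_a)^0=1$) and combining via pairwise coprimality yields the claim. The argument is rather short; there is no genuinely hard step. The one point requiring care is the bookkeeping in the chain-rule expansion, namely the observation that the number of undifferentiated arguments is at least $N-l-j$, and this drops below the critical value $N+1-k_a$ precisely when $j$ reaches $k_a-l$. That is exactly what couples the derivative-order question to the clustering property \eqref{additional-prop-app}.
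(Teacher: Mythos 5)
Your proof is correct and takes essentially the same route as the paper: the paper's expansion of $P(x_a+\lambda\xi_1,\dots,x_a+\lambda\xi_{N-l},t_1,\dots)$ in powers of $\lambda$ is exactly the Taylor expansion of your $Q(x)$ at $x=x_a$, and both arguments rest on the same counting — at order $j<k_a-l$ at least $N-l-j\geq N+1-k_a$ undifferentiated arguments sit at $x_a$, so the clustering property \eqref{additional-prop-app} annihilates the term.
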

\begin{proof}
Let us denote for simplicity   $f(t_{1},\dots,t_{N-l})\equiv P(t_{1},\dots,t_{N-l},t_{N-l+1},\dots)$.
Using the expansion valid for any symmetric function $f(t_{1},\dots,t_{N-l})$
\begin{multline*}
   f(x+\lambda\xi_{1},\dots,x+\lambda\xi_{N-l})=f(x,\dots,x)+\lambda\sum_{j}\xi_{j}\frac{\partial}{\partial t_{1}}
   f(t_{1},x,\dots,x)\Bigl|_{t_{1}=x}+\\+
   \lambda^{2}\sum_{j}\xi_{j}^{2}\frac{\partial^{2}}{\partial t_{1}^{2}}f(t_{1},x,\dots,x)\Bigl|_{t_{1}=x}+
   \lambda^{2}\sum_{i<j}\xi_{i}\xi_{j}\frac{\partial^{2}}{\partial t_{1}\partial t_{2}}f(t_{1},t_{2},x,\dots,x)\Bigl|_{t_{1}=t_{2}=x}+O(\lambda^{3}),
\end{multline*}
and taking $x=x_{a}$ we immediately conclude that due to \eqref{additional-prop-app}
\begin{equation}
    P(x_{a}+\lambda\xi_{1},\dots,x_{a}+\lambda\xi_{N-l},t_{1},\dots,t_{n-N+l})\sim\lambda^{\varkappa_{a}}
\end{equation}
taking $\xi_{j}=(x-x_{a})/\lambda$ we arrive at \eqref{xa-N-relat}.
\end{proof}
It follows from proposition \ref{P-Nxa-proposition} that the polynomial $\chi_{q}(y_{1},y_{2},\dots)$ specifies even more
\begin{equation}\label{chi-x-xa-relat}
    \chi_{q}(y_{1},y_{2},\dots)\sim\prod_{i}\prod_{a=1}^{3}(y_{i}-x_{a})^{\kappa_{i,a}^{(q)}}\,\prod_{i<j}(y_{i}-y_{j})^{2p_{j}^{(q)}},
\end{equation}
where
\begin{equation*}
   \kappa_{j,a}^{(q)}=
       \begin{cases}
               p_{j}^{(q)}+k_{a}-N\quad\text{if}\quad p_{j}^{(q)}>N-k_{a},\\
               0\quad\text{if}\quad p_{j}^{(q)}\leq N-k_{a}.
        \end{cases}
\end{equation*}
Now let us define according to \eqref{chi-x-xa-relat}
\begin{equation}\label{nu-q-definition}
    \nu_{j}^{(q)}=2(j-1)+\frac{1}{p_{j}^{(q)}}\left(2\sum_{i>j}p_{i}^{(q)}+\sum_{a=1}^{3}\kappa_{j,a}^{(q)}\right),
\end{equation}
which is the minimal possible degree of polynomial $\chi_{q}(y_{1},y_{2},\dots)$ in variable $y_{j}$ divided by $p_{j}^{(q)}$ (we note that $\nu_{j}^{(q)}$ is not necessary an integer number).
It is clear that any polynomial $P\in\tilde{U}_{q}^{(k_{1},k_{2},k_{3})}$ is at least of degree
\begin{equation}
   \nu^{(q)}=\lceil\max\nu_{j}^{(q)}\rceil
\end{equation}
in each variable $t_{1},\dots,t_{n}$ where $\lceil x\rceil$ means ceiling function i.e. $\lceil x\rceil=\min\{m\in\mathbb{Z}|m\geq x\}$. 

Now we turn to the special case of polynomials defined by the set of properties \eqref{prop}. In this case integers $k_{1}$, $k_{2}$ and $k_{3}$ are restricted by the condition \eqref{k-def-1} and $n=mN-k$. It follows from \eqref{prop-2} the polynomial $\mathbf{P}_{m}^{(N|\,k_{1},k_{2},k_{3})}(x_{1},x_{2},x_{3}|t_{1},\dots,t_{n})$ is of degree $2(m-1)$ in each variable $t_{j}$. We will show below that $2(m-1)$ is the minimal possible degree for the polynomial from $W^{(k_{1},k_{2},k_{3})}$ with the condition that integers $k$, $k_{1}$, $k_{2}$ and $k_{3}$ are related by \eqref{k-def-1} and $n=mN-k$ and if such a polynomial exists it should be unique.
Let us compute $\nu_{j}^{(1)}$ defined by \eqref{nu-q-definition}. By definition the diagram $\vec{Y}_{1}$ preceding in lexicographic ordering the master diagram $\vec{Y}_{0}$ drawn in \eqref{Y0}  is
\begin{equation}\label{Y1}
    \begin{picture}(180,100)(0,70)
    \Thicklines
    \unitlength 1.5pt 
    \put(20,20){\line(0,1){30}}
    \put(20,60){\line(0,1){15}}
    \put(20,85){\line(0,1){15}}
    \put(30,20){\line(0,1){30}}
    \put(30,60){\line(0,1){15}}
    \put(30,85){\line(0,1){15}}
    \put(40,85){\line(0,1){15}}
    \put(50,85){\line(0,1){15}}
    \put(60,85){\line(0,1){15}}
    \put(70,85){\line(0,1){15}}
    \put(100,85){\line(0,1){15}}
    \put(110,85){\line(0,1){15}}
    \put(120,85){\line(0,1){15}}
    
     \put(20,20){\line(1,0){10}}
     \put(20,30){\line(1,0){10}}
     \put(20,40){\line(1,0){10}}
     \put(21,55){\circle*{1}}
     \put(25,55){\circle*{1}}
     \put(29,55){\circle*{1}}
     \put(41,55){\circle*{1}}
     \put(45,55){\circle*{1}}
     \put(49,55){\circle*{1}}
     \put(61,55){\circle*{1}}
     \put(65,55){\circle*{1}}
     \put(69,55){\circle*{1}}
     \put(101,55){\circle*{1}}
     \put(105,55){\circle*{1}}
     \put(109,55){\circle*{1}}
     \put(121,55){\circle*{1}}
     \put(125,55){\circle*{1}}
     \put(129,55){\circle*{1}}
     \put(20,70){\line(1,0){10}}
     \put(20,90){\line(1,0){10}}
     \put(30,20){\line(1,0){50}}
     \put(30,30){\line(1,0){50}}
     \put(90,20){\line(1,0){40}}
     \put(90,30){\line(1,0){40}}
     
     \put(30,40){\line(1,0){50}}
     \put(30,70){\line(1,0){50}}
     \put(20,90){\line(1,0){60}}
     \put(20,100){\line(1,0){60}}
     \put(90,40){\line(1,0){40}}
     \put(90,70){\line(1,0){40}}
     \put(90,90){\line(1,0){30}}
     \put(90,100){\line(1,0){30}}
     
     \put(40,20){\line(0,1){30}}
     \put(50,20){\line(0,1){30}}
     \put(60,20){\line(0,1){30}}
     \put(70,20){\line(0,1){30}}
     
     \put(100,20){\line(0,1){30}}
     \put(110,20){\line(0,1){30}}
     \put(120,20){\line(0,1){30}}
     
     \put(40,60){\line(0,1){15}}
     \put(50,60){\line(0,1){15}}
     \put(60,60){\line(0,1){15}}
     \put(70,60){\line(0,1){15}}
     
     \put(100,60){\line(0,1){15}}
     \put(110,60){\line(0,1){15}}
     \put(120,60){\line(0,1){15}}
     \put(130,20){\line(0,1){30}}
     \put(130,60){\line(0,1){10}}
     
     \put(82,25){\circle*{1}}
     \put(85,25){\circle*{1}}
     \put(88,25){\circle*{1}}
     \put(82,35){\circle*{1}}
     \put(85,35){\circle*{1}}
     \put(88,35){\circle*{1}}
     \put(82,45){\circle*{1}}
     \put(85,45){\circle*{1}}
     \put(88,45){\circle*{1}}
     \put(82,75){\circle*{1}}
     \put(85,75){\circle*{1}}
     \put(88,75){\circle*{1}}
     \put(82,85){\circle*{1}}
     \put(85,85){\circle*{1}}
     \put(88,85){\circle*{1}}
     \put(82,95){\circle*{1}}
     \put(85,95){\circle*{1}}
     \put(88,95){\circle*{1}}
     
     \put(102,80){\circle*{1}}
     \put(105,80){\circle*{1}}
     \put(108,80){\circle*{1}}
     \put(112,80){\circle*{1}}
     \put(115,80){\circle*{1}}
     \put(118,80){\circle*{1}}
     
     \put(62,80){\circle*{1}}
     \put(65,80){\circle*{1}}
     \put(68,80){\circle*{1}}
     
     \put(42,80){\circle*{1}}
     \put(45,80){\circle*{1}}
     \put(48,80){\circle*{1}}
     
     \put(22,80){\circle*{1}}
     \put(25,80){\circle*{1}}
     \put(28,80){\circle*{1}}
    \put(72,6){\mbox{$m$}}
    \put(90,8){\vector(1,0){40}}
    \put(62,8){\vector(-1,0){40}}

    \put(-3,56){\mbox{$N$}}
    \put(0,71){\vector(0,1){27}}
    \put(0,47){\vector(0,-1){27}}
    
    \put(133,40){\mbox{$(N-k)$}}
    \put(145,55){\vector(0,1){13}}
    \put(145,33){\vector(0,-1){13}}
  \end{picture}
  \vspace*{2.5cm}
\end{equation}
Using definition \eqref{nu-q-definition} we find
\begin{equation}
   \nu_{j}^{(1)}=
                 \begin{cases}
                    2(m-1)+\frac{\sum k_{a}-2k}{N},\quad\text{for}\quad j=1,\dots,m-1\\
                    2(m-1)+\frac{\sum\kappa_{m,a}^{(1)}}{N-k},\quad\text{for}\quad j=m. 
                 \end{cases}
\end{equation}
Now, using the condition
\begin{equation*}
    \sum k_{a}=2k,\qquad 0\leq k_{a}\leq k\leq N,
\end{equation*}
we find that $\kappa_{m,a}=0$ and hence
\begin{equation}
    \nu^{(1)}=\nu^{(1)}_{1}=\dots=\nu^{(1)}_{m}=2(m-1).
\end{equation}
Moreover, one can easily see that\footnote{It is clear that for any diagram $\vec{Y}_{q}<\vec{Y}_{1}$ with the total number of columns greater than $m$ the degree $\nu^{(q)}>2(m-1)$ (it follows immediately from the definition \eqref{nu-q-definition}). So, we must consider only diagrams with $m$ columns. Let us take one of them then evidently $p_{m-1}^{(q)}=N-l$ and $p_{m}^{(q)}=N-k+l+\epsilon$ with some positive integer $l$ and non-negative integer $\epsilon$. From definition \eqref{nu-q-definition} we find that
\begin{equation*}
   \nu_{m-1}^{(q)}=2(m-1)+\frac{1}{(N-l)}\left(4l-2k+2\epsilon+\sum_{a=1}^{3}\kappa_{m-1,a}^{(q)}\right).
\end{equation*}
Using that $\sum_{a=1}^{3}\kappa_{m-1,a}^{(q)}\geq\sum_{a=1}^{3}(k_{a}-l)=2k-3l$ we find that the expression in brackets is greater than $l+2\epsilon$ and hence $\nu_{m-1}^{(q)}>2(m-1)$.} 
\begin{equation}\label{nu-equality}
   \nu^{(q)}>\nu^{(1)}\quad\text{for}\quad q>1.
\end{equation}
So, if there exists a polynomial $P\in W^{(k_{1},k_{2},k_{3})}$ with $\sum k_{a}=2k$ and $0\leq k_{a}\leq k\leq N$ of degree $2(m-1)$ in each variable 
$t_{1},\dots,t_{n}$ it should be unique, because in this case $\dim(\tilde{U}_{1}^{(k_{1},k_{2},k_{3})})\leq1$, while as follows from \eqref{nu-equality} $\dim(\tilde{U}_{q}^{(k_{1},k_{2},k_{3})})=0$ for $q>1$ and hence
\begin{equation}
  \dim (W^{(k_{1},k_{2},k_{3})})=\sum_{q}\dim(\tilde{U}_{q}^{(k_{1},k_{2},k_{3})})\leq1.
\end{equation}
An explicit form of this polynomial was given in section \ref{integrals}. Of course, the total normalization remains unfixed under above consideration. We chose it regarding our needs as in   \eqref{Polynomial-definition}.
\section{Proof of proposition \ref{Selberg-prop-1}.}\label{Selberg-prop-proof}
In this appendix we give a proof of proposition \ref{Selberg-prop-1}. For simplicity, we consider the case $k_{1}=k_{2}=k_{3}=0$. We define the polynomial $P(t_{1},\dots,t_{n})$ with $n=mN$  which is proportional up to a factor to  the polynomial $\mathbf{P}_{m}^{(N|0,0,0)}(t_{1},\dots,t_{n})$. It can be represented as  a symmetrization of the ``monomial'' $p(t_{1},\dots,t_{n})$
\begin{equation}
    p(t_{1},\dots,t_{n})=\prod_{q=1}^{N}\prod_{i<j\in s_{q}}(t_{i}-t_{j})^{2},
\end{equation}
where $(s_{1},\dots,s_{N})$ are the groups of $m$ different variables $t_{j}$. For non-negative integer values of the parameter $g$ we can define the polynomial
\begin{equation}
   D(t_{1},\dots,t_{n})=\prod_{i<j}|t_{i}-t_{j}|^{2g}P(t_{1},\dots,t_{n}),
\end{equation}
which can be expanded into the sum
\begin{equation}\label{polynom-expansion}
   D(t_{1},\dots,t_{n})=\sum_{\nu_{j}}c_{\nu_{1},\dots,\nu_{n}}t_{1}^{\nu_{1}}\dots t_{n}^{\nu_{n}}.
\end{equation}
The proposition \ref{Selberg-prop-1} in this case reads as
\begin{proposition}
For $\nu_{1}\leq\nu_{2}\leq\dots\leq\nu_{n}$ the non-zero values of   $c_{\nu_{1},\dots,\nu_{n}}$ in \eqref{polynom-expansion} occur when the integers $\nu_{pN+j}$ with $p=0,\dots,m-1$ and $j=1,\dots,N$ satisfy the inequalities 
\begin{equation}\label{polynom-ineq}
      p+(pN+j-1)g\leq\nu_{pN+j}\leq(m+p-1)+((m+p)N+j-2)g.
\end{equation}
\end{proposition}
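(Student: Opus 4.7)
My plan is to adapt Selberg's rescaling method to the product $D(t_{1},\dots,t_{n}) = \prod_{i<j}(t_{i}-t_{j})^{2g}\,P(t_{1},\dots,t_{n})$ with $n=mN$. I will first establish the lower bound $\nu_{pN+j} \geq p + (pN+j-1)g$; the matching upper bound then follows from an inversion duality.

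The central device is the partial rescaling $t_{i} \mapsto \epsilon y_{i}$ for $i=1,\dots,\ell$, where $\ell = pN+j$ with $0 \leq p \leq m-1$ and $1 \leq j \leq N$. Direct expansion shows that $\prod(t_{i}-t_{j})^{2g}$ contributes leading order $\epsilon^{2g\binom{\ell}{2}}$. For $P$, each summand of the symmetrisation corresponds to distributing the $\ell$ rescaled variables among the $N$ groups of the basic monomial; if $r_{q}$ of them end up in group $s_{q}$, that summand contributes leading order $\epsilon^{2\sum_{q}\binom{r_{q}}{2}}$. Minimising $\sum_{q}\binom{r_{q}}{2}$ subject to $\sum r_{q}=\ell$ and $0\leq r_{q}\leq m$ is an elementary integer allocation whose minimum is attained at the balanced assignment with $j$ groups of size $p+1$ and $N-j$ groups of size $p$, giving $j\binom{p+1}{2}+(N-j)\binom{p}{2}$. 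Each optimal leading contribution is a manifestly non-zero product of squared Vandermondes in the $y_{i}$'s and squared monomials in the remaining $t_{j}$'s, so contributions from distinct optimal assignments cannot cancel in the symmetrisation. Consequently
\[
\alpha(\ell) \;=\; g\ell(\ell-1) + p(Np+2j-N)
\]
is the true leading $\epsilon$-exponent of $D$. Since under the substitution $t_{i}=\epsilon y_{i}$ ($i\leq\ell$) distinct monomials $t^{\nu}$ of $D$ map to distinct monomials in $(y_{i},t_{j})$ multiplied by $\epsilon^{\nu_{1}+\cdots+\nu_{\ell}}$, no cancellation is possible and every non-zero monomial satisfies $\nu_{1}+\cdots+\nu_{\ell} \geq \alpha(\ell)$; in particular $\sum_{i=1}^{\ell}\nu_{i} \geq \alpha(\ell)$ for every sorted tuple $\nu_{1}\leq\cdots\leq\nu_{n}$.

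Suppose now for contradiction that some sorted tuple has $\nu_{\ell} \leq L-1$ with $L = p + (\ell-1)g$. Then $\nu_{1}\leq\cdots\leq\nu_{\ell}\leq L-1$, so $\sum_{i=1}^{\ell}\nu_{i}\leq \ell(L-1)$, forcing $\alpha(\ell) \leq \ell(L-1)$. A direct computation with $\ell = pN+j$ gives
\[
\alpha(\ell) - \ell(L-1) = p(Np+2j-N-\ell) + \ell = p(j-N) + \ell = j(p+1),
\]
which is strictly positive for $j\geq 1$. The contradiction establishes $\nu_{\ell} \geq L$.

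The upper bound follows by the involution $t_{i}\mapsto t_{i}^{-1}$. Property \eqref{prop-2} with $k_{1}=k_{2}=k_{3}=0$ gives $P(t_{1}^{-1},\dots,t_{n}^{-1}) = \prod_{q}t_{q}^{-2(m-1)}\,P(t_{1},\dots,t_{n})$, while $\prod_{i<j}(t_{i}^{-1}-t_{j}^{-1})^{2g} = \prod_{q}t_{q}^{-2g(n-1)}\prod_{i<j}(t_{i}-t_{j})^{2g}$ (the sign is absent because the power is even). Combining yields $D(t) = \prod_{q}t_{q}^{C}\,D(t^{-1})$ with $C = 2g(n-1)+2(m-1)$, so the multiset of exponent tuples of $D$ is invariant under $\nu_{i}\mapsto C - \nu_{i}$. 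This involution maps the $\ell$-th sorted entry to position $n+1-\ell$, converting upper bounds on $\nu_{\ell}$ into the already-proven lower bounds on $\nu_{n+1-\ell}$; writing $n+1-\ell = (m-p-1)N+j'$ and substituting yields exactly the claimed $\nu_{\ell} \leq (m+p-1) + ((m+p)N+j-2)g$. The main technical subtlety is ensuring that the leading $\epsilon$-coefficient of the symmetrised $P$ does not vanish; this holds because distinct balanced groupings produce structurally independent products of squared differences in the rescaled variables, and the remainder of the proof reduces to the elementary identity $\alpha(\ell) - \ell(L-1) = j(p+1)$.
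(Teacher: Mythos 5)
Your proof is correct and follows essentially the same route as the paper's: both arguments bound the minimal collective degree of $D$ in the $\ell=pN+j$ smallest-exponent variables by a balanced-allocation count over the groups of the basic monomial, convert this to a pointwise bound on $\nu_{pN+j}$ via the ordering and integrality of $g$, and obtain the upper bound from the inversion $t_i\mapsto t_i^{-1}$. The only (cosmetic) difference is that you rescale the bottom $\ell$ variables to zero, whereas the paper rescales the top $n-\ell$ variables to infinity and then uses the homogeneity of $D$; the resulting quantity $\alpha(\ell)=g\ell(\ell-1)+p(Np+2j-N)$ is identical.
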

\begin{proof}
Clearly, the polynomial \eqref{polynom-expansion} is a homogeneous symmetric polynomial of degree $$(m-1)n+n(n-1)g,$$ i.e. $\sum_{j=1}^{n}\nu_{j}=(m-1)n+n(n-1)g$. This summation formula together with the assumed ordering $\nu_{1}\leq\nu_{2}\leq\dots\leq\nu_{n}$ implies that 
\begin{equation}
 \nu_{n}\geq m-1+(n-1)g.
\end{equation}
To estimate $\nu_{n-k}$ for  we use the fact that
\begin{equation}\label{ineq-stage-1}
   \sum_{j=1}^{n-k}\nu_{j}\geq(m-1)n+n(n-1)g-\nu_{k}^{\text{max}},
\end{equation}
where $\nu_{k}^{\text{max}}$ is the maximal collective degree of the polynomial $D(t_{1},\dots,t_{n})$ in variables $t_{n-k+1},\dots,t_{n}$. In order to find $\nu_{k}^{\text{max}}$ we take
\begin{equation*}
   t_{n-k+1}=\tau\xi_{n-k+1},\quad t_{n-k+2}=\tau\xi_{n-k+2},\quad\dots\quad t_{n}=\tau\xi_{n},
\end{equation*}
and find the leading asymptotic of $D(t_{1},\dots,t_{n-k},\tau\xi_{n-k+1},\dots,\tau\xi_{n})$ in the limit $\tau\rightarrow\infty$.  First of all, we note that
\begin{equation*}
     \nu_{k}^{\text{max}}=\tilde{\nu}_{k}^{\text{max}}+\hat{\nu}_{k}^{\text{max}},
\end{equation*}
where the degree $\tilde{\nu}_{k}^{\text{max}}$ comes for the asymptotic of the polynomial $\prod_{i<j}|t_{i}-t_{j}|^{2g}$ while $\hat{\nu}_{k}^{\text{max}}$ comes from the polynomial $P(t_{1},\dots,t_{n})$. The first degree is a simple one
\begin{equation}\label{Deg-1}
    \tilde{\nu}_{k}^{\text{max}}=k(2n-k-1)g.
\end{equation}
In order to find $\hat{\nu}_{k}^{\text{max}}$ we note that the asymptotic of the polynomial  $P(t_{1},\dots,t_{n-k},\tau\xi_{n-k+1},\dots,\tau\xi_{n})$ in the limit $\tau\rightarrow\infty$ depends essentially on an integer part of the fraction $k/N$. It is convenient to represent
\begin{equation}\label{k-p-j}
    k=(m-p)N-j,\qquad p=0,\dots,m-1,\quad j=1,\dots,N,
\end{equation}
i.e. $n-k=pN+j$ (we remind that $n=mN$). It is clear that the leading asymptotic of the polynomial  $P(t_{1},\dots,t_{n-k},\tau\xi_{n-k+1},\dots,\tau\xi_{n})$ comes from the basic ``monomials'' such that $j$ groups contain $(m-p-1)$ variables $\tau\xi_{i}$ and $(N-j)$ groups contain $(m-p)$ variables. Using simple algebra one can find
\begin{equation}\label{Deg-2}
   \hat{\nu}_{k}^{\text{max}}=(m-p)(m+p-1)N-2pj.
\end{equation}
Using \eqref{ineq-stage-1}, \eqref{Deg-1}, \eqref{k-p-j} and \eqref{Deg-2} one can find
\begin{equation*}
   \sum_{i=1}^{pN+j}\nu_{i}\geq p(pN+j)+(pN+j)(pN+j-1)g-p(N-j),
\end{equation*}
and hence from the assumed ordering of the integers $\nu_{1}\dots\nu_{pN+j}$ we get
\begin{equation*}
       \nu_{pN+j}\geq p+(pN+j-1)g-\frac{p(N-j)}{pN+j}.
\end{equation*}
We note that the number $p(N-j)/(pN+j)$ is always smaller than $1$ and hence  it can be dropped in the inequality above.  So, we get a final formula
\begin{equation*}
    \nu_{pN+j}\geq p+(pN+j-1)g,
\end{equation*}
which is the left inequality in  \eqref{polynom-ineq}. Now, using the fact that
\begin{equation}\label{D-D}
     D(t_{1},\dots,t_{n})=\prod_{j=1}^{n}t_{j}^{2(m-1)+2(n-1)g}D(1/t_{1},\dots,1/t_{n}),
\end{equation}
we can expand the r.h.s. of \eqref{D-D} as in \eqref{polynom-expansion} with $\nu_{1}'\leq\nu_{2}'\leq\dots\leq\nu_{n}'$. We note that \eqref{D-D} implies that $\nu_{i}'=2(m-1)+2(n-1)g-\nu_{n+1-i}$ and hence the inequality $\nu_{pN+j}'\geq p+(pN+j-1)g$ implies that 
\begin{equation*}
  \nu_{pN+j}\leq(m+p-1)+((m+p)N+j-2)g,
\end{equation*}
which is the right inequality in \eqref{polynom-ineq}.
\end{proof}
In the case of general parameters $k_{j}$ and $k$ the proof of proposition \ref{Selberg-prop-1} can be performed in a same way. We hope that the explicit form of the inequalities \eqref{ineq} will help interested reader to repeat above arguments in this general case too.
\section{Sigma model representation and minisuperspace approximation for three exponential model}\label{MSP-approximation}
In this appendix we consider minisuperspace approximation for CFT (\ref{a}) and its sigma model representation. For this purpose we consider this theory in the dual representation with screening fields $J_{i}$ defined by \eqref{J} and \eqref{j3}. At first we consider this model with only one screening field $\ J_{1}\overline{J_{1}},$ which corresponds to the
parafermionic Liouville CFT. It follows from the explicit form of the current $J_{1}$ that the action
\begin{equation}
A_{1}=\int d^{2}x\left(  \frac{1}{16\pi}\sum_{i}^{3}(\partial_{\mu}\varphi
_{i})^{2}-N\mu J_{1}\overline{J_{1}}+\frac{R^{\left(  2\right)  }%
}{8\pi}\Phi\right)  \label{a1}%
\end{equation}
where our dilaton field is $\Phi=\frac{i}{2\alpha}\varphi_{3}+\frac{1}%
{2\alpha_{1}}\varphi_{1}+\frac{1}{2\alpha_{2}}\varphi_{2}$ can be written in
terms of sigma model with complex metric. In minisuperspace approximation when
$N$ and parameters $\alpha,\alpha_{i}$ are large we can use the relation:
$\alpha^{2}=\alpha_{1}^{2}+\alpha_{2}^{2}.$ To simplify the notations we
denote $\varphi_{1}=x_{1},\varphi_{2}=x_{2},\varphi_{3}=x_{0}.$ Then our
metric can be written in the form.%
\begin{align}
ds^{2} &  =(1+m\alpha^{2}e^{\frac{x_{2}}{2\alpha_{2}}})dx_{0}^{2}%
+(1-m\alpha_{1}^{2}e^{\frac{x_{2}}{2\alpha_{2}}})dx_{1}^{2}+(1-m\alpha_{2}%
^{2}e^{\frac{x_{2}}{2\alpha_{2}}})dx_{2}^{2}-\nonumber\\
&  -2i\alpha\alpha_{1}me^{\frac{x_{2}}{2\alpha_{2}}}dx_{0}dx_{1}%
+2i\alpha\alpha_{2}me^{\frac{x_{2}}{2\alpha_{2}}}dx_{0}dx_{2}+2\alpha
_{1}\alpha_{2}me^{\frac{x_{2}}{2\alpha_{2}}}dx_{1}dx_{2}\label{m}%
\end{align}
where parameter $m=4\pi\mu$. Our metric $g_{ik}$ is complex in
the \textquotedblleft Euclidean\textquotedblright\ space, however in
\textquotedblleft Minkowski\textquotedblright\ space: $x_{0}\rightarrow
ix_{0}$ this metric is real regular metric with signature $(-1,1,1).$ This
metric possesses the properties $g=\det[g_{ik}]=1,g^{ik}(m)=g_{ik}(-m).$ The
metric $g_{ik}$ is not conformally flat. It has non-zero Ricci tensor. The
scalar curvature for this metric $R=-\frac{1}{4}me^{\frac{x_{2}}{2\alpha_{2}}%
}.$ It is small for $x_{2}\rightarrow-\infty$ and becomes large for
$x_{2}\rightarrow\infty,$ where we expect strong interaction. To compensate
the Ricci tensor we should take into account dilaton field $\Phi=\frac
{i}{2\alpha}x_{0}+\frac{1}{2\alpha_{1}}x_{1}+\frac{1}{2\alpha_{2}}x_{2}.$ The
action for renormalization group flow is
\[
S=\int d^{3}x\sqrt{g}e^{-\Phi}(R+|\nabla\Phi|^{2})
\]
and the condition of conformal invariance has a form
\begin{equation}
R_{ik}+\nabla_{i}\nabla_{k}\Phi=0\label{rg}%
\end{equation}
It can be checked that this equation is satisfied with our linear dilaton
field $\Phi.$ The equation that describes zero modes dynamics in
minisuperspace approximation has a form
\begin{equation}
-\frac{e^{\Phi}}{\sqrt{g}}\partial_{i}e^{-\Phi}\sqrt{g}g^{ik}\partial
_{k}\Theta=\Delta\Theta\label{req}%
\end{equation}
We see that our metric depends only on the variable $x_{2},$ it follows from
the fact that field $J_{1}\overline{J_{1}}$ interacts essentially only with
field $\varphi_{2}.$ It means that function $\Theta$ depends non-trivially
only on variable $x_{2}.$ It is natural take this function in the form
\[
\Theta=\exp\{\Phi/2+iqx_{0}+ip_{1}x_{1}\}\Psi(x_{2}).
\]
Then eq (\ref{req}) can be reduced to the hypergeometric form. The unique
regular at $x_{2}\rightarrow\infty$ properly normalized solution to this
equation with eigenvalue
\begin{equation}
\Delta=-\frac{1}{16\alpha^{2}}+q^{2}+\frac{1}{16\alpha_{1}^{2}}+p_{1}%
^{2}+\frac{1}{16\alpha_{2}^{2}}+p_{2}^{2}\label{eig}%
\end{equation}
has a form
\begin{align}
\Psi & =\frac{\Gamma(\frac{1}{2}-2i\alpha_{2}p_{2}+2\alpha q-2i\alpha_{1}%
p_{1})\Gamma(\frac{1}{2}-2i\alpha_{2}p_{2}-2\alpha q+2i\alpha_{1}p_{1}%
)}{\Gamma(-4i\alpha_{2}p_{2})}(m\alpha_{2}^{2})^{-2i\alpha_{2}p_{2}+\frac
{1}{2}}\nonumber\\
& \times\exp\left\{  -\frac{x_{2}}{4\alpha_{2}}+x_{2}(\frac{i\alpha_{1}p_{1}%
}{\alpha_{2}}-\frac{\alpha q}{\alpha_{2}})\right\}  F\left(  A,B,1,-\frac
{1}{m\alpha_{2}^{2}}e^{-\frac{x_{2}}{2\alpha_{2}}}\right)  \label{sol}%
\end{align}
where $A=\frac{1}{2}+2i\alpha_{2}p_{2}+2\alpha q-2i\alpha_{1}p_{1},B=\frac
{1}{2}-2i\alpha_{2}p_{2}+2\alpha q-2i\alpha_{1}p_{1}$ and numerical factor in
eq (\ref{sol}) is chosen in such way that asymptotic of function $\Psi(x_{2})$
in the free region $x_{2}\rightarrow\infty$ has a canonical form
\begin{equation}
\Psi(x_{2})=e^{ip_{2}x_{2}}+S_{2}(p_{1},p_{2},q)e^{-ip_{2}x_{2}}\label{cf}%
\end{equation}
It follows from explicit form of our solution that
\begin{equation}
S_{2}(p_{1},p_{2},q)=(m\alpha_{2}^{2})^{-4i\alpha_{2}p_{2}}\frac
{\Gamma(4i\alpha_{2}p_{2})\gamma(\frac{1}{2}-2i\alpha_{2}p_{2}+2\alpha
q-2i\alpha_{1}p_{1})}{\Gamma(-4i\alpha_{2}p_{2})\gamma(\frac{1}{2}%
+2i\alpha_{2}p_{2}+2\alpha q-2i\alpha_{1}p_{1})}\label{s}%
\end{equation}
If we take into account that $m=4\pi\mu$ and that as follows from
eq (\ref{ba})
\begin{equation}
\left(\frac{\pi M}{4\alpha_{2}^{2}}\right)=
\left(  \pi\mu\gamma\left(  \frac{1}{4\alpha_{2}^{2}}\right)  \right)
^{2\alpha_{2}^{2}}\approx(4\pi\mu\alpha_{2}^{2})^{2\alpha_{2}^{2}}\label{mmu}%
\end{equation}
we see from eqs (\ref{RRR},\ref{RL}) that in the minisuperspace limit where
$\alpha_{i}>>1$ and 
$$2\alpha_{j}a_{j}=\frac{1}{2}+i2\alpha_{j}p_{j},\quad
2\alpha a_{3}=\frac{1}{2}+2\alpha q$$ 
are finite $S_{2}(p_{1},p_{2},q)$ coincides with reflection amplitude $R_{2}(\vec{a}).$

In the same way if we substitute in the action (\ref{a1}) the term $\ \mu
J_{1}\overline{J_{1}}$ by the term $\widetilde{\mu}J_{2}\overline{J_{2}}$ and
in the eq (\ref{mmu}) $\mu\rightarrow\widetilde{\mu}$ and $\alpha
_{2}\rightarrow\alpha_{1}$ we can calculate coefficient $S_{1}(p_{1},p_{2},q)$
for outcoming wave $e^{-ip_{1}x_{1}}$ and  check that it coincides with
minisuperspace limit of reflection amplitude $R_{1}(\vec{a}).$

To derive the minisuperspace limit of the reflection amplitude $R_{1,2}(\vec{a})$
we should take in the action both terms $\ \mu J_{1}\overline{J_{1}%
}+\widetilde{\mu}J_{2}\overline{J_{2}}.$ However, in this way we have some
problem with the correct definition of the action. The reason is that the operator
product expansion of currents $J_{1}$ and $J_{2}$ is rather singular
\begin{equation}
J_{1}(z)J_{2}(z^{\prime})=(z-z^{\prime})^{-2}\exp\left(  \frac{\varphi_{2}%
(z)}{2\alpha_{2}}+\frac{\varphi_{1}(z)}{2\alpha_{1}}\right)  +... \label{sing}%
\end{equation}
and we should renormalize our action by adding counterterms in such a way that
the symmetry of our action becomes not broken. For this we note that the primary field
with dimension $\frac{N+4}{N+2}$
\begin{equation}
G(z)=(i\alpha\partial_{z}\varphi_{3}+\alpha_{1}\partial_{z}\varphi_{1}%
+\alpha_{2}\partial_{z}\varphi_{2})\exp\left(  -\frac{i\varphi_{3}(z)}%
{2\alpha}\right)  \label{G}%
\end{equation}
commutes with all screening charges $\mathbf{S}_{++}^{+},\mathbf{S}_{+-}^{-}$
and $\mathbf{S}_{-+}^{-}$ of three field theory (\ref{a}) and plays the role
of the generator of the symmetry of this theory. It commutes also with
screening operators $Q_{i}={\displaystyle\oint}J_{i}(z)dz$. It is easy to check that field $G(z)$ commutes also with
screening operators $Q_{i}^{\prime}={\displaystyle\oint}I_{i}(z)dz$ generated by the currents
\begin{equation}
I_{1}(z)=\frac{1}{\sqrt{4N}}(i\alpha\partial_{z}\varphi_{3}+\alpha_{1}%
\partial_{z}\varphi_{1}+\alpha_{2}\partial_{z}\varphi_{2})\exp\left(
\frac{\varphi_{2}(z)}{2\alpha_{2}}\right)  \label{I1}%
\end{equation}
\begin{equation}
I_{2}(z)=\frac{1}{\sqrt{4N}}(i\alpha\partial_{z}\varphi_{3}+\alpha_{1}%
\partial_{z}\varphi_{1}+\alpha_{2}\partial_{z}\varphi_{2})\exp\left(
\frac{\varphi_{1}(z)}{2\alpha_{1}}\right)  \label{I2}%
\end{equation}
It is not surprising because currents $I_{i}$ differ from currents $J_{i}$ by
total derivatives which do not contribute to the screening operators. The main
advantage of currents $I_{i}$ is that their operator product expansion is not
singular. That permits us to write well defined action in the form
\begin{equation}
A_{12}=\int d^{2}x\left(  \frac{1}{16\pi}\sum_{i}^{3}(\partial_{\mu}%
\varphi_{i})^{2}-N\mu I_{1}\overline{I}_{1}-N\widetilde{\mu}I_{2}\overline
{I}_{2}+\frac{R^{\left(  2\right)  }}{8\pi}\Phi\right)  \label{A12}%
\end{equation}
with the same parameters $\mu,\widetilde{\mu}$ and dilaton field $\Phi.$ The
metric for sigma model corresponding to this action will have a form
\begin{align}
ds^{2}  &  =(1+\alpha^{2}U)dx_{0}^{2}+(1-\alpha_{1}^{2}U)dx_{1}^{2}%
+(1-\alpha_{1}^{2}U)dx_{2}^{2}-\nonumber\\
&  -2i\alpha\alpha_{1}Udx_{0}dx_{1}-2i\alpha\alpha_{2}Udx_{0}dx_{2}%
-2\alpha_{1}\alpha_{2}Udx_{1}dx_{2} \label{s12}%
\end{align}
where 
$ U(x)=me^{\frac{x_{2}}{2\alpha_{2}}}+\widetilde{m}e^{\frac{x_{1}%
}{2\alpha_{1}}}$
with $m=4\pi\mu,\widetilde{m}=4\pi\widetilde{\mu}$.

It can be checked that this metric satisfies the conformal invariance condition
\begin{equation}
R_{ik}+\nabla_{i}\nabla_{k}\Phi=0. \label{rr}%
\end{equation}
The curvature of this metric is 
\begin{equation}
 R=-\frac{1}{4}(me^{\frac{x_{2}}{2\alpha_{2}}%
}+\widetilde{m}e^{\frac{x_{1}}{2\alpha_{1}}})=-\frac{1}{4}U(x).
\end{equation}
It means that interaction is strong for large positive $x_{1,2}$.
It is convenient to modify equation (\ref{req}) for the wave function
describing zero modes dynamics and to write it in the form\footnote{The
application of this equation to the analysis  of the non-trivial Ricci flows
can be found in \cite{Fateev:1992tk,Fateev:1996ea,Perelman:fk,Fateev:2003sq}.}
\begin{equation}
\left(  -\frac{1}{\sqrt{g}}\partial_{i}\sqrt{g}g^{ik}\partial_{k}+\frac{R}%
{4}\right)  \Xi=(p_{1}^{2}+p_{2}^{2}+q^{2})\Xi.\label{Re}%
\end{equation}
The solution of this equation differs from the solution to eq (\ref{req}) by
the trivial factor $\exp[\Phi/2].$ The solution of our \ equation can be
represented in the form
\[
\Xi(x_{0},x_{1},x_{2})=e^{iqx_{0}}\Psi(x_{1},x_{2})
\]
where function $\Psi(x_{1},x_{2})$ satisfies to the following equation in the
partial derivatives ($\partial_{1}=\partial_{x_{1}};\partial_{1}%
=\partial_{x_{2}})$
\begin{align}
&  \left(  -\partial_{1}^{2}-\partial_{2}^{2}-U(x)\left(  (\alpha_{1}\partial
_{1}+\alpha_{2}\partial_{2})^{2}+\frac{\alpha_{1}}{2}(1-4\alpha q)\partial
_{1}+\frac{\alpha_{2}}{2}(1-4\alpha q)\partial_{2}\right)  \right)  \Psi
(x_{1},x_{2})\nonumber\\
&  =\left(\left(  \frac{\alpha q}{2}(4\alpha q-1)+\frac{1}{16}\right)U(x)+(p_{1}^{2}+p_{2}^{2})\right)  \Psi(x_{1},x_{2}).\label{pd}%
\end{align}
To solve our equation we substitute our function as
\begin{equation}
\Psi(x_{1},x_{2})=\frac{1}{(2\pi)^{2}}\int dq_{1}dq_{2}\widehat{\Psi}%
(q_{1},q_{2})\exp\left(  -iq_{1}x_{1}-iq_{2}x_{2}\right)  . \label{qq}%
\end{equation}
Then after simple transformation eq (\ref{pd}) can be written in the form
\begin{align}
&  \left(  2i\alpha_{1}q_{1}+2i\alpha_{2}q_{2}+2\alpha q+\frac{1}{2}\right)
^{2}\left(  \widetilde{m}\widehat{\Psi}(q_{1}-\frac{i}{2\alpha_{1}}%
,q_{2})+m\widehat{\Psi}(q_{1},q_{2}-\frac{i}{2\alpha_{2}})\right) \nonumber\\
&  =4\left(  (q_{1}^{2}-p_{1}^{2})+(q_{2}^{2}-p_{2}^{2})\right)  \widehat
{\Psi}(q_{1},q_{2}). \label{www}%
\end{align}
The unique solution to this equation corresponding to the regular function
$\Psi(x_{1},x_{2})$ at $x_{1,2}\rightarrow\infty$ is
\begin{align*}
\widehat{\Psi}(q_{1},q_{2})  &  =D(p_{1},p_{2},q)\frac{\Gamma(2i\alpha
_{1}(q_{1}-p_{1}))\Gamma(2i\alpha_{1}(q_{1}+p_{1}))}{\gamma(2i\alpha_{1}%
q_{1}+2i\alpha_{2}q_{2}+2\alpha q)}\times\\
&  \Gamma(2i\alpha_{2}(q_{2}-p_{2}))\Gamma(2i\alpha_{2}(q_{2}+p_{2}%
))(\widetilde{m}\alpha_{1}^{2})^{-i2\alpha_{1}q_{1}}(m\alpha_{2}%
^{2})^{-i2\alpha_{2}q_{2}}%
\end{align*}
where the constant
\[
D=(4\alpha_{1}\alpha_{2})(\widetilde{m}\alpha_{1}^{2})^{-i2\alpha_{1}p_{1}%
}(m\alpha_{2}^{2})^{-i2\alpha_{2}p_{2}}\frac{\gamma(-2i\alpha_{1}%
p_{1}-2i\alpha_{2}p_{2}+2\alpha q)}{\Gamma(-4i\alpha_{1}p_{1})\Gamma
(-4i\alpha_{2}p_{2})}%
\]
The contours of integration in the eq(\ref{qq}) go along the real axis
avoiding the poles at $q_{i}=\pm p_{i}$ counterclockwise. At $x_{1,2}%
\rightarrow-\infty$ (free region) we can close the contours in the upper half
plane. The main term of the asymptotic of the wave function will be defined
by the poles at $q_{i}=\pm p_{i}$ and will have a form
\begin{equation}
\Psi(x_{1},x_{2})=e^{ip_{1}x_{1}+ip_{2}x_{2}}+S_{1}e^{-ip_{1}x_{1}+ip_{2}%
x_{2}}+S_{2}e^{ip_{1}x_{1}-ip_{2}x_{2}}+S_{1,2}e^{-ip_{1}x_{1}-ip_{2}x_{2}%
}\label{S12}%
\end{equation}
where coefficients $S_{2}(p_{1},p_{2},q)$ and $S_{1}(p_{1},p_{2},q)$ were
calculated above and coefficient $S_{1,2}$ is
\begin{align*}
S_{1,2}(p_{1},p_{2},q)  & =(\widetilde{m}\alpha_{1}^{2})^{-i4\alpha_{1}p_{1}%
}(m\alpha_{2}^{2})^{-i4\alpha_{2}p_{2}}\frac{\Gamma(4i\alpha_{1}p_{1}%
)\Gamma(4i\alpha_{2}p_{2})}{\Gamma(-4i\alpha_{1}p_{1})\Gamma(-4i\alpha
_{2}p_{2})}\times\\
& \times\frac{\gamma(-2i\alpha_{1}p_{1}-2i\alpha_{2}p_{2}+2\alpha q)}%
{\gamma(+2i\alpha_{1}p_{1}+2i\alpha_{2}p_{2}+2\alpha q)}.
\end{align*}
If we take into account that $\widetilde{m}=4\pi\widetilde{\mu}$ and that
$\left(  \frac{\pi M}{4\alpha_{2}^{2}}\right)  =\left(  \pi\gamma
\widetilde{\mu}\left(  \frac{1}{4\alpha_{1}^{2}}\right)  \right)
^{2\alpha_{1}^{2}}\simeq\left(  \pi\widetilde{\mu}4\alpha_{1}^{2}\right)
^{2\alpha_{1}^{2}}$ we derive that $S_{1,2}$ coincides with minisuperspace
limit of reflection amplitude $R_{1,2}$ defined by eq (\ref{Rr}).
\bibliographystyle{MyStyle} 
\bibliography{MyBib}
\end{document}